\newtheorem{theorem}{Theorem}
\newtheorem{proposition}{Proposition}[section]
\newtheorem{corollary}[proposition]{Corollary}
\newtheorem{definition}[proposition]{Definition}
\newtheorem{assumption}[proposition]{Assumption}
\newtheorem{remark}[proposition]{Remark}
\newcommand\Cb{\mathbb{C}} 
\newcommand\Pb{\mathbb{P}} 
\DeclareMathOperator{\diag}{diag}
\newcommand{\im}{\text{\upshape Im} \,}
 \newcommand\ben{\begin{equation*}}
 \newcommand\ebn{\end{equation*}}
 \newcommand\beq{\begin{equation}}
 \newcommand\eeq{\end{equation}}
 \newcommand\lb{\left(}
  \newcommand\rb{\right)} 
\numberwithin{equation}{section}
\begin{document}
\title[Confluent conformal blocks]{Confluent conformal blocks of the second kind}

\author{Jonatan Lenells and Julien Roussillon}
\address{Department of Mathematics, KTH Royal Institute of Technology, \\ 100 44 Stockholm, Sweden.}
\email{julienro@kth.se}
\email{jlenells@kth.se}

\begin{abstract} 
%We propose a definition of confluent conformal blocks of the second kind. 
We construct confluent conformal blocks of the second kind of the Virasoro algebra. 
We also construct the Stokes transformations which map such blocks in one Stokes sector to another. 
In the BPZ limit, we verify explicitly that the constructed blocks and the associated Stokes transformations reduce to solutions of the confluent BPZ equation and its Stokes matrices, respectively.
Both the confluent conformal blocks and the Stokes transformations are constructed by taking suitable confluent limits of the crossing transformations of the four-point Virasoro conformal blocks.\end{abstract}

\maketitle

\noindent
{\small{\sc Keywords}: Conformal blocks, confluent hypergeometric equation, Stokes phenomenon, BPZ equation}.

\tableofcontents

\section{Introduction}
Virasoro conformal blocks \cite{BPZ} are holomorphic building blocks of correlations functions in two-dimensional conformal field theories. The AGT correspondence \cite{AGT} triggered the study of  so-called irregular conformal blocks \cite{G09,GT,BMT}. Such blocks arise when singularities of the Virasoro conformal blocks merge, thus giving rise to irregular singularities. 
The present work studies one of the simplest classes of irregular conformal blocks, namely, the blocks that arise when two regular singularities of the four-point Virasoro conformal block merge into an irregular singularity of rank one. 

In the so-called BPZ limit \cite{BPZ}, the four-point Virasoro conformal blocks degenerate to solutions of a hypergeometric equation with three regular singular points in the complex plane. Since the equation is second order, the solution space is two-dimensional. For each regular singularity, it is possible to choose a basis for the solution space which diagonalizes the corresponding monodromy matrix. More precisely, assuming without loss of generality that the three regular singularities are located at $0$, $1$, and $\infty$, there are vectors $\boldsymbol{F}^{p}(z)= (F_+^{p}(z), F_-^{p}(z))$, $p = 0, 1, \infty$, such that $\{F_+^p, F_-^p\}$ forms a basis of solutions for each $p$ and 
%\JR{Is the argument $1+z$ for the basis at 1 correct? I would have expected $1-e^{2i\pi}z$, or $1-\frac{1}{e^{2i\pi} z}$.}
%\JL{I would say that using $1+z$ and letting $z$ perform a small counterclockwise circle around the origin is the easiest way to write it. But $1-z$ gives the same result: if $z$ performs a small counterclockwise circle around the origin, both $1+z$ and $1-z$ perform counterclockwise circles around $1$, so we can pick either one. If you want to match it with the later notation, we can change $1+z$ to $1-z$ but I think it's correct as is.}
\begin{align}\label{monodromyintro}
\boldsymbol{F}^{0}(e^{2i \pi} z) = M_0 \boldsymbol{F}^{0}(z), \quad
\boldsymbol{F}^{1}(1+e^{2i \pi} z) = M_1 \boldsymbol{F}^{1}(1+ z), \quad
\boldsymbol{F}^{\infty}(e^{-2i \pi} z^{-1}) = M_\infty \boldsymbol{F}^{\infty}(z^{-1}),
\end{align}
where the three monodromy matrices $M_p$, $p = 0, 1, \infty$, are diagonal. The solutions $F_\pm^{p}(z)$ can be expressed in terms of hypergeometric functions. The bases $\boldsymbol{F}^{p}$, $p = 0, 1, \infty$, are known as the $s$-channel, $t$-channel, and $u$-channel degenerate conformal blocks, respectively, see e.g. \cite{Ribault}. In a similar way, the general (i.e. nondegenerate) $s$-channel, $t$-channel, and $u$-channel conformal blocks form infinite-dimensional bases for the space of four-point conformal blocks. 
The purpose of this paper is to describe what happens to these bases as the regular singular point at $1$ tends to infinity and merges with the regular singular point at $\infty$ to form an irregular singularity. 
In this limit, the hypergeometric equation degenerates into the confluent hypergeometric equation; we therefore call it the {\it confluent} limit and the resulting conformal blocks {\it confluent conformal blocks}.

In order to describe our results, it is convenient to first consider the hypergeometric case.  
The solutions $F_\pm^{p}(z)$, $p = 0, 1, \infty$, of the hypergeometric BPZ equation are most easily constructed by means of the Frobenius method. To apply this method in the case of $p = 0$ for example, one substitutes the ansatz $z^\alpha(1 + a_1z + a_2z^2 + \cdots)$ into the equation and equates coefficients of powers of $z$. It follows that there are two possible values $\alpha_\pm$ of $\alpha$, and that the associated power series coefficients $a_j^\pm$ can be determined recursively ($\alpha_\pm$ are the two solutions of the indicial equation and our assumptions on the parameters will be such that $\alpha_+ \neq \alpha_-$). This yields two solutions $F_\pm^{0}(z) = z^{\alpha_\pm}(1 + \sum_{j=1}^\infty a_j^\pm z^j)$, which when combined into the vector $\boldsymbol{F}^{0} = (F_+^{0}(z), F_-^{0}(z))$ satisfy the desired relation in (\ref{monodromyintro}) with the diagonal monodromy matrix $M_0  = \text{diag}(e^{2i\pi \alpha_+}, e^{2i\pi \alpha_-})$. We emphasize that this construction relies on the fact that the two power series $\sum_{j=1}^\infty a_j^\pm z^j$ converge in a neighborhood of $z = 0$. 

In the confluent limit, the hypergeometric equation degenerates into the confluent hypergeometric equation, which has a regular singular point at $0$ and an irregular singular point at $\infty$. 
Since $0$ is a regular singular point, a basis of solutions $\boldsymbol{B}(t) = (B_+(t), B_-(t))$ which diagonalizes the monodromy matrix at $0$ can be constructed with the help of the Frobenius method, as in the nonconfluent case. The solutions $B_\pm(t)$ can be expressed in terms of the confluent hypergeometric function of the first kind (also known as Kummer's function) and we therefore call them {\it degenerate confluent conformal blocks of the first kind}.

Since the singular point at $\infty$ is irregular, it is not possible to construct solutions near $\infty$ in the same way. In fact, if one substitutes the ansatz $t^\alpha e^{\beta t} (1 + d_1t^{-1} + d_2t^{-2} + \cdots)$ into the confluent equation and equates coefficients of powers of $t$, one finds that there are two possible choices $(\alpha_+, \beta_+)$ and $(\alpha_-, \beta_-)$ of the parameters $(\alpha, \beta)$. Given one of these choices, the associated power series coefficients $d_j^\pm$ can be computed recursively. But in contrast to the regular case, the two power series $\sum_{j=1}^\infty d_j^\pm t^{-j}$, in general, do not converge anywhere in the complex $t$-plane. Thus it is not possible to construct solutions in this way. The best one can do is to find solutions whose asymptotic behavior as $t \to \infty$ is given by these power series, see e.g. \cite{Ramis}. These solutions can be expressed in terms of the confluent hypergeometric function of the second kind (also known as Tricomi's function) and we therefore call them {\it degenerate confluent conformal blocks of the second kind}. Since the asymptotic expansion of Tricomi's function is only valid in a certain sector $a < \arg t < b$ of the complex $t$-plane, an infinite sequence of bases of solutions $\boldsymbol{D}_n(t)$, $n \in \mathbb{Z}$, are needed to cover all values of $\arg t$. This is an example of the Stokes phenomenon. Thus each basis $\boldsymbol{D}_n(t)$ has the desired asymptotics only in a sector $\Omega_n$ of the complex $t$-plane:
$$\boldsymbol{D}_n(t) \sim \boldsymbol{D}_{\text{asymp}}(t), \qquad t\to \infty, ~ t \in \Omega_n,$$
where $\boldsymbol{D}_{\text{asymp}}(t)$ denotes the formal asymptotic series 
\begin{align}\label{Dasympintro}
\boldsymbol{D}_{\text{asymp}}(t) = \begin{pmatrix} t^{\alpha_+} e^{\beta_+ t} \sum_{j=1}^\infty d_j^+ t^{-j} \\ t^{\alpha_-} e^{\beta_- t} \sum_{j=1}^\infty d_j^- t^{-j} \end{pmatrix}.
\end{align}
With our conventions, the Stokes sectors $\Omega_n$ are given by (see Figure \ref{fig:figstokes})
\begin{align}\label{stokessector}
\Omega_n=\left \{-\frac{3\pi}{2}+\pi(n-1) < \arg t<\frac{\pi}{2}+\pi(n-1)\right \}, \qquad n \in \mathbb{Z}.
\end{align}
\begin{figure}[h!]
\subcaptionbox*{$\; \Omega_1$}{\begin{tikzpicture}
\fill[fill=gray!90] (0,0) circle (1cm);
\draw[line width=0.7mm, white] (0,0)--(0,1);
\filldraw (0,0) circle (1pt);
\end{tikzpicture}}\hspace{3cm}
\subcaptionbox*{$\; \Omega_2$}{\begin{tikzpicture}
\fill[fill=gray!90] (0,0) circle (1cm);
\draw[line width=0.7mm, white] (0,0)--(0,-1);
%\draw[black] (0,0)--(0,1);
\filldraw (0,0) circle (1pt);
\end{tikzpicture}}\hspace{3cm}
\subcaptionbox*{$\; \Omega_3$}{\begin{tikzpicture}
\fill[fill=gray!90] (0,0) circle (1cm);
\draw[line width=0.7mm, white] (0,0)--(0,1);
\filldraw (0,0) circle (1pt);
\end{tikzpicture}}
\caption{The Stokes sectors $\Omega_n$, $n=1,2,3$, in the complex $t$-plane.}
%\caption{This figure represents the complex plane of $t$ at the vicinity of $t=\infty$. Stokes sectors are maximal regions that contains either of the red or blue line. They are defined as $\Omega_n=\left \{-\frac{3\pi}{2}+\pi(n-1) < \arg t<\frac{\pi}{2}+\pi(n-1)\right \}, \quad n=1,2,\dots$ There is an overlap between two consecutive Stokes sectors, and the confluent conformal blocks of the second kind in the corresponding sectors are related by the Stokes transformations.}
\label{fig:figstokes}
\end{figure}
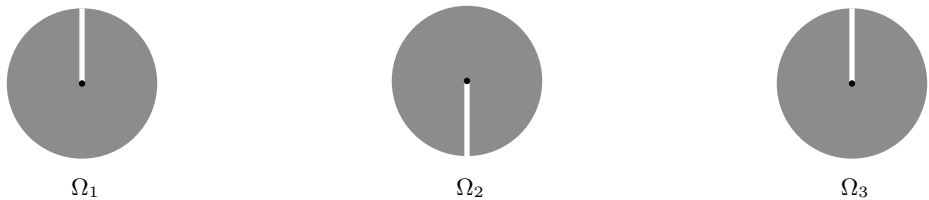

Of particular interest are the connection matrices $C_n$ and the Stokes matrices $S_n$ which, by definition,  are the unique $2\times 2$ matrices such that
\begin{align}\label{DnSnCnBPZ}
\boldsymbol{D}_n(t)=C_n \boldsymbol{B}(t), \quad
\boldsymbol{D}_{n+1}(t)=S_n \boldsymbol{D}_n(t), \qquad t \in \Omega_n, ~ n \in \mathbb{Z}.
\end{align}
These matrices can be obtained by taking appropriate confluent limits of the connection matrices which relate the three bases $\boldsymbol{F}^{p}(z)$ of the hypergeometric BPZ equation.

The goal of this paper is to show that the above description of the hypergeometric BPZ limit can be generalized to the setting of nondegenerate conformal blocks. 
%In analogy with the hypergeometric situation, we will refer to the irregular conformal blocks in the case of one regular singular point and one irregular singular point of rank one at infinity as {\it confluent conformal blocks}. 
Whereas the space of solutions of the confluent hypergeometric equation is two-dimensional, the space of confluent conformal blocks is infinite-dimensional. 
The infinite-dimensional analogs of the bases $\boldsymbol{B}(t)$ and $\boldsymbol{D}_n(t)$, $n \in \mathbb{Z}$, will be denoted by $\mathcal{B}(t) \equiv \mathcal B\lb \theta_*;\sigma;\substack{\theta_t \\ \theta_0};t\rb$ and $\mathcal{D}_{n}(t) \equiv \mathcal{D}_{n}\lb\substack{\theta_t\\ \theta_*};\nu;\theta_0;t\rb$, respectively. Here $\theta_0, \theta_t, \theta_*$ are parameters characterizing the conformal dimensions of the fields entering the correlation function, while $\sigma$ and $\nu$ are continuous indices labeling the infinite set of basis elements. 
Elements of the basis $\mathcal{B}(t)$ will be called {\it confluent conformal blocks of the first kind}. Up to prefactors of the form $t^\alpha e^{\beta t}$, these blocks can be represented as power series in $t$ which are conjectured to converge in the whole complex plane \cite{LNR,GT,Nagoya}.
Elements of the bases $\mathcal{D}_{n}(t)$ will be called {\it confluent conformal blocks of the second kind}. These blocks are characterized by the fact that they admit a particular asymptotic expansion $\mathcal D_{\text{asymp}} \lb\substack{\theta_t\\ \theta_*};\nu;\theta_0;t\rb \equiv t^{\alpha} e^{\beta t} \sum_{j=1}^\infty d_j t^{-j}$ in the Stokes sectors $\Omega_n$ as $t$ approaches the irregular singularity. The power series part of this expansion is believed to diverge everywhere in the complex $t$-plane and no closed formula is known for its coefficients. The formalism of irregular vertex operators developed in \cite{Nagoya} provides a recursive method to compute the coefficients of the series. A different but equivalent approach which relies on the computation of a term-by-term limit of the $u$-channel conformal blocks series expansion was proposed in \cite{GT,LNR}.

In this paper, we will take a different approach to the construction of the confluent conformal blocks of the second kind $\mathcal{D}_n(t)$. 
As mentioned above, the infinite-dimensional analogs of the bases $\boldsymbol{F}^{p}(z)$, $p = 0, 1, \infty$, are the $s$-channel, $t$-channel, and $u$-channel conformal blocks, which we denote by
$$\mathcal{F}^0(z) \equiv \mathcal F\lb\substack{\theta_{1}\;\quad \theta_{t}\\ \sigma_s \\ \theta_{\infty}\quad \theta_0};z\rb, \quad
\mathcal{F}^1(z) \equiv \mathcal F\lb\substack{\theta_{\infty}\;\quad 
          \theta_{t}\\ \sigma_t \\ \theta_0 \quad \theta_1};1- \frac{1}{z}\rb, \quad \text{and} \quad
\mathcal{F}^\infty(z) \equiv \mathcal F\lb\substack{\theta_{1}\;\quad \theta_{t}\\ \sigma_u \\ \theta_0 \quad \theta_\infty};\frac1z\rb,
$$
respectively. These blocks are related by crossing transformations which can be viewed as infinite-dimensional analogs of the connection matrices for the hypergeometric BPZ equation. By taking appropriate confluent limits of these transformations, we will show that the relations in (\ref{DnSnCnBPZ}) admit the following generalizations in the context of nondegenerate conformal blocks:
\begin{align}\label{DnSnCn}
\mathcal{D}_n(t)= \mathcal{C}_n \mathcal{B}(t), \quad
\mathcal{D}_{n+1}(t)=  \mathcal{S}_n \mathcal{D}_n(t), \qquad t \in \Omega_n, ~ n \in \mathbb{Z},
\end{align}
where $\mathcal{S}_n$ and $\mathcal{C}_n$ are integral operators. The first of these relations, established in Theorem \ref{mainth1}, provides a construction of the confluent conformal blocks of the second kind. Abusing notation and denoting the kernel by the same symbol as the operator, this relation can be written in more detail as
\begin{align*}
{\mathcal D}_{n}\lb\substack{\theta_t\\ \theta_*};\nu_n ;\theta_0;t\rb= \displaystyle \int_0^{+\infty} d\sigma_s~\mathcal{C}_{n}\left[\substack{\theta_t\vspace{0.08cm} \\ \theta_{*}\;\;\;\theta_0};\substack{\nu_n \vspace{0.15cm} \\  \sigma_s}\right]  \mathcal B\lb \theta_*;\sigma_s;\substack{\theta_t \\ \theta_0};t\rb, \qquad t \in \Omega_{n}, ~ n \in \mathbb{Z},
\end{align*}
where the integral kernel $\mathcal{C}_{n}$ will be computed explicitly, see equation (\ref{gnm}). The second relation in (\ref{DnSnCn}), established in Theorem \ref{mainth2},  provides an infinite-dimensional generalization of the famous Stokes phenomenon. It describes how the confluent conformal blocks of the second kind in different Stokes sectors are related to each other.
In more detail, this relation can be written as 
\begin{align*}
{\mathcal D}_{n+1}\lb\substack{\theta_t\\ \theta_*};\nu_{n+1};\theta_0;t\rb=\displaystyle \int_{-\infty}^{+\infty} d\nu_n ~ \mathcal{S}_{n}\left[\substack{\theta_t\vspace{0.08cm} \\ \theta_{*}\;\;\;\theta_0};\substack{\nu_{n+1}\vspace{0.15cm} \\  \nu_n}\right] {\mathcal D}_{n}\lb\substack{\theta_t\\ \theta_*};\nu_n;\theta_0;t\rb, \qquad t \in \Omega_n \cap \Omega_{n+1},  ~ n \in \mathbb{Z},
\end{align*}
where the kernels $\mathcal{S}_{n}$ will be computed explicitly by taking suitable confluent limits of the Virasoro fusion kernel, see equation (\ref{stokessn}). In analogy with the finite-dimensional case, we refer to these transformations as \textit{Stokes transformations} and to the corresponding kernels $\mathcal{S}_{n}$ as \textit{Stokes kernels}.

We will verify explicitly that the constructions of the connection and the Stokes transformations are consistent with the BPZ limit in the sense that (\ref{DnSnCn}) reduces to (\ref{DnSnCnBPZ}) in this limit and that the following diagram commutes:
\begin{center}
\begin{tikzcd}[row sep=0.5cm, column sep = 2.5cm]
\begin{tabular}{c} 4-point Virasoro \\ conformal blocks \end{tabular} \ar[r,"\text{confluent limit}"] \ar[d,"\text{BPZ limit}"] & \begin{tabular}{c} confluent \\ conformal blocks \end{tabular} \ar[d,"\text{BPZ limit}"] \\ \begin{tabular}{c} hypergeometric functions \end{tabular} \ar[r,"\text{confluent limit}"] & \begin{tabular}{c} confluent \\ hypergeometric functions \end{tabular}
\end{tikzcd}
\end{center}
In symbols, this diagram takes the following form:
\begin{center}
\begin{tikzcd}[row sep=0.5cm, column sep = 2.5cm]
\mathcal{F}^0(z), \mathcal{F}^1(z), \mathcal{F}^\infty(z) \ar[r,"\text{confluent limit}"] \ar[d,"\text{BPZ limit}"] & \text{$\mathcal{B}(t)$, $\mathcal{D}_n(t)$} \ar[d,"\text{BPZ limit}"] \\ 
\boldsymbol{F}^0(z), \boldsymbol{F}^1(z), \boldsymbol{F}^\infty(z) \ar[r,"\text{confluent limit}"] & \text{$\boldsymbol{B}(t)$, $\boldsymbol{D}_n(t)$} 
\end{tikzcd}
\end{center}

\subsection{Organization of the paper}
The hypergeometric BPZ equation and its confluent limit are studied in Section \ref{section2}. Section \ref{section3} reviews some properties of four-point Virasoro conformal blocks. In Section \ref{section4}, we recall previous results on confluent conformal blocks of the first kind. The statements of our main results are gathered in Section \ref{section5}, with proofs postponed to Section \ref{section6}. The BPZ limit is studied in Section \ref{section7} and conclusions are drawn in Section \ref{section8}. Finally, the appendix collects some results on two special functions which play a prominent role in the paper.

\section{Confluence of the hypergeometric BPZ equation}\label{section2}
We consider the hypergeometric BPZ equation and its confluent limit. The parameter $b$ will be used to parametrize the central charge $c$ of the conformal field theory according to
\begin{align}\label{cQdef}
c=1+6 Q^2, \quad \text{where} \quad Q=b+b^{-1}.
\end{align}
In addition to $b$, the BPZ equation also depends on four conformal dimensions $\Delta(\theta_0), \Delta(\theta_1), \Delta(\theta_\infty)$, and $\Delta(\theta_{\text{degen}})$, which we parametrize with the help of three parameters $\theta_0$, $\theta_1$, $\theta_\infty$ according to 
$$\Delta(x)=\frac{Q^2}{4}+x^2, \qquad \theta_{\text{degen}} = \frac{iQ}{2}+\frac{ib}{2}.$$
To begin with, we allow $b, \theta_0, \theta_1, \theta_\infty$ to be any nonzero complex numbers such that $\theta_1 \neq \theta_\infty$; later, in Section \ref{section2.3}, we will assume that $b > 0$ for simplicity.

\subsection{Hypergeometric BPZ equation}

The BPZ equation is given by
\begin{equation}\label{bpz}
\left(-\frac{1}{b^2}\partial^2_z+\frac{2z-1}{z(z-1)}\partial_z-\frac{\Delta(\theta_0)}{z^2}-\frac{\Delta(\theta_1)}{(z-1)^2}+\frac{\Delta(\theta_{\text{degen}})+\Delta(\theta_0)+\Delta(\theta_1)-\Delta(\theta_\infty)}{z(z-1)}\right)F(z)=0.
\end{equation}
It has three regular singular points at $z=0,1,\infty$. Defining the functions $F_\pm^{p}$, $p = 0, 1, \infty$, by  
\begin{align*}
F^{0}_\pm\lb z\rb= &\; z^{\frac{bQ}{2}\mp ib \theta_0}(1-z)^{\frac{bQ}{2}+ib\theta_1} {}_2F_1\left[\begin{matrix}\frac{1}{2}+ib(\mp \theta_0+\theta_1+\theta_{\infty}),\frac{1}{2}+ib(\mp \theta_0+\theta_1-\theta_{\infty})  \\ 1\mp 2ib \theta_0
\end{matrix};z\right],
	\\
F^{1}_\pm\lb z\rb=&\; \lb 1-\frac1z \rb^{\frac{bQ}{2}\pm i b \theta_1}z^{\frac12+b^2-ib\theta_\infty}{}_2F_1\left[\begin{matrix}\frac{1}{2}-ib(\theta_0\mp \theta_1-\theta_{\infty}),\frac{1}{2}+ib(\theta_0\pm \theta_1+\theta_{\infty})  \\ 1\pm 2ib \theta_1
\end{matrix};1-\frac{1}{z}\right],
	\\
F^{\infty}_\pm\lb z\rb=&\; \left(\frac{1}{z} \right)^{-\frac12-b^2\mp i b \theta_\infty}\left(1-\frac{1}{z}\right)^{\frac{bQ}{2}+ib\theta_1} {}_2F_1\left[\begin{matrix}\frac{1}{2}+ib(\theta_0+\theta_1\mp \theta_{\infty}),\frac{1}{2}- ib(\theta_0 - \theta_1\pm \theta_{\infty})  \\ 1\mp 2ib \theta_\infty
\end{matrix};\frac{1}{z}\right],
\end{align*}
straightforward computations show that 
\begin{align}\label{boldFpdef}
\boldsymbol{F}^{p}_\pm\lb z\rb=\begin{pmatrix} F^{p}_+\lb z\rb \\ F^{p}_-\lb z\rb\end{pmatrix}
\end{align}
is a basis of solutions of (\ref{bpz}) that diagonalizes the monodromy operator at $z=p$ for each $p = 0, 1, \infty$. In fact, it is easy to see that the monodromy relations (\ref{monodromyintro}) hold with the diagonal monodromy matrices
$$M_0 = \begin{pmatrix} e^{2\pi b\theta_0} & 0 \\ 0 & e^{-2\pi b\theta_0} \end{pmatrix}, \qquad M_1 = \begin{pmatrix} e^{-2\pi b\theta_1} & 0 \\ 0 & e^{2\pi b\theta_1} \end{pmatrix}, \qquad M_\infty = \begin{pmatrix} e^{2\pi b\theta_\infty} & 0 \\ 0 & e^{-2\pi b\theta_\infty} \end{pmatrix}.$$
The bases $\boldsymbol{F}^{p}\lb z\rb$ are related by the connection matrices $C_{pq}$ defined by
\begin{equation}\label{connections}
\boldsymbol{F}^{p}\lb z\rb= C_{pq} \boldsymbol{F}^{q}\lb z\rb, \qquad p,q \in \{0,1,\infty\}, ~ 0 < \arg{z}< \pi.
\end{equation}
A computation using the connection formulae for the Gauss hypergeometric function shows that
\begin{subequations}\label{connectiondef}
\begin{align}
& \label{connectioni0} \large C_{\infty0}=e^{\frac{i \pi b^2}{2}}\begin{pmatrix}\frac{e^{- \pi b(\theta_0+\theta_\infty)}\Gamma \left(2ib \theta_0 \right)\Gamma \left(1-2ib\theta_\infty \right)}{\Gamma \left(\frac{1}{2}+ib(\theta_0+\theta_1-\theta_\infty \right)\Gamma \left(\frac{1}{2}-ib(-\theta_0+\theta_1+\theta_\infty) \right)} & \frac{e^{\pi b(\theta_0-\theta_\infty)} \Gamma \left(-2ib \theta_0 \right)\Gamma \left(1-2ib \theta_\infty \right)}{\Gamma \left(\frac{1}{2}-ib(\theta_0-\theta_1+\theta_\infty \right)\Gamma \left(\frac{1}{2}-ib(\theta_0+\theta_1+\theta_\infty) \right)} \\ \frac{e^{\pi b(-\theta_0+\theta_\infty)}\Gamma \left(2ib \theta_0 \right)\Gamma \left(1+2ib \theta_\infty \right)}{\Gamma \left(\frac{1}{2}+ib(\theta_0-\theta_1+\theta_\infty \right)\Gamma \left(\frac{1}{2}+ib(\theta_0+\theta_1+\theta_\infty) \right)} & \frac{e^{\pi b(\theta_0+\theta_\infty)}\Gamma \left(-2ib\theta_0 \right)\Gamma \left(1+2ib \theta_\infty \right)}{\Gamma \left(\frac{1}{2}-ib(\theta_0+\theta_1-\theta_\infty \right)\Gamma \left(\frac{1}{2}+ib(-\theta_0+\theta_1+\theta_\infty) \right)} \end{pmatrix}, \\
& \large C_{10}=e^{\frac{i\pi b Q}2} \begin{pmatrix}  \frac{e^{-\pi b \theta_1}\Gamma\left(1+2ib \theta_1 \right)\Gamma\left(2 i b\theta_0 \right)}{\Gamma\left(\frac{1}{2}+ib(\theta_0+\theta_1-\theta_\infty) \right)\Gamma\left(\frac{1}{2}+ib(\theta_0+\theta_1+\theta_\infty) \right)} &  \frac{e^{-\pi b \theta_1}\Gamma\left(1+2ib \theta_1 \right)\Gamma\left(-2 ib\theta_0 \right)}{\Gamma\left(\frac{1}{2}-ib(\theta_0-\theta_1+\theta_\infty) \right)\Gamma\left(\frac{1}{2}+ib(-\theta_0+\theta_1+\theta_\infty) \right)} \\ \frac{e^{\pi b \theta_1}\Gamma\left(1-2ib\theta_1 \right)\Gamma\left(2ib \theta_0 \right)}{\Gamma\left(\frac{1}{2}+ib(\theta_0-\theta_1+\theta_\infty \right)\Gamma\left(\frac{1}{2}-ib(-\theta_0+\theta_1+\theta_\infty) \right)} & \frac{e^{\pi b \theta_1}\Gamma\left(1-2ib\theta_1 \right)\Gamma\left(-2ib \theta_0 \right)}{\Gamma\left(\frac{1}{2}-ib(\theta_0+\theta_1-\theta_\infty) \right)\Gamma\left(\frac{1}{2}-ib(\theta_1+\theta_0+\theta_\infty) \right)}  \end{pmatrix}, \label{connection10}  \\
& \large C_{\infty 1}=\begin{pmatrix}\frac{\Gamma \left(-2ib \theta_1 \right)\Gamma \left(1-2ib \theta_\infty \right)}{\Gamma \left(\frac{1}{2}-ib(-\theta_0+\theta_1+\theta_\infty \right)\Gamma \left(\frac{1}{2}-ib(\theta_0+\theta_1+\theta_\infty) \right)} & \frac{\Gamma \left(2 ib \theta_1 \right)\Gamma \left(1-2ib \theta_\infty \right)}{\Gamma \left(\frac{1}{2}+ib(\theta_0+\theta_1-\theta_\infty \right)\Gamma \left(\frac{1}{2}-ib(\theta_0-\theta_1+\theta_\infty) \right)} \\ \frac{\Gamma \left(-2ib\theta_1 \right)\Gamma \left(1+2ib \theta_\infty \right)}{\Gamma \left(\frac{1}{2}-ib(\theta_0+\theta_1-\theta_\infty \right)\Gamma \left(\frac{1}{2}+ib(\theta_0-\theta_1+\theta_\infty) \right)} & \frac{\Gamma \left(2ib \theta_1 \right)\Gamma \left(1+2ib \theta_\infty \right)}{\Gamma \left(\frac{1}{2}-ib(\theta_0-\theta_1-\theta_\infty \right)\Gamma \left(\frac{1}{2}+ib(\theta_0+\theta_1+\theta_\infty) \right)} \end{pmatrix}\label{connectioni1}.
\end{align} \end{subequations}

\subsection{Confluent BPZ equation}
Let us write 
\beq \label{confluence} \theta_1=\frac{\Lambda+\theta_*}{2}, \quad \theta_\infty=\frac{\Lambda-\theta_*}{2}, \quad z=\frac{t}{ib \Lambda}, \eeq
where $\Lambda$ and $\theta_*$ are new parameters and $t$ is a new complex variable. We are interested in the confluent limit $\Lambda \to \infty$ in which the two nonzero regular singular points of the BPZ equation (\ref{bpz}) merge at infinity. In this limit, equation (\ref{bpz}) reduces to the confluent BPZ equation given by
\begin{equation}\label{confluentbpz}
\left(\partial_t^2-\frac{b^2}{t} \partial_t-\frac{1}{4}+\frac{b^2 \Delta(\theta_0)}{t^2}-\frac{ib\theta_*}{t} \right)G(t)=0. 
\end{equation}
This equation has a regular singular point at $t=0$ and an irregular singular point of rank one at $t=\infty$.

\subsubsection{Degenerate confluent conformal blocks of the first kind}
A basis of solutions that diagonalizes the monodromy operator at $t=0$ is given by
\begin{equation}\label{whittaker1}
\boldsymbol{B}(t)=t^{\frac{b^2}{2}}\begin{pmatrix}M_{-ib\theta_*,-ib\theta_0}(t) \\ M_{-ib\theta_*,ib\theta_0}(t) \end{pmatrix},
\end{equation}
where $M_{k,\mu}(t)$ is the Whittaker function of the first kind; it is defined in terms of the confluent hypergeometric function of the first kind ${}_1F_1$ by
\beq \label{whitakkerM} M_{k,\mu}(t)=e^{-\frac{t}{2}}t^{\frac{1}{2}+\mu} {}_1F_1\left( \begin{matrix} \frac{1}{2}-k +\mu \\ 1+2\mu\end{matrix};t\right).
\eeq 

\subsubsection{Degenerate confluent conformal blocks of the second kind}
The procedure described in the introduction (see (\ref{Dasympintro})) leads to the following formal asymptotic series which forms a formal basis of solutions of (\ref{confluentbpz}) diagonalizing the monodromy matrix at infinity:
\begin{equation}\label{asymptotic}
\boldsymbol{D}_{\text{asymp}}(t) = t^{\frac{b^2}{2}}\begin{pmatrix}e^{-\frac{t}{2}}t^{-ib\theta_*} ~ _2F_0\left(\frac{1}{2}-ib(\theta_0-\theta_*), \frac{1}{2}+ib(\theta_0+\theta_*);-\frac{1}{t}\right) \\
e^{\frac{t}{2}}t^{ib\theta_*} ~ _2F_0\left(\frac{1}{2}-ib(\theta_0+\theta_*),\frac{1}{2}+ib(\theta_0-\theta_*); \frac1t\right)
 \end{pmatrix}.
\end{equation}
Here $_2F_0$ denotes the formal power series 
$${}_2F_0(\alpha, \beta; z) = \sum_{n=0}^\infty (\alpha)_n(\beta)_n \frac{z^n}{n!},$$
where the Pochhammer symbol $(q)_n$ is defined by $(q)_n = q(q+1) \cdots (q+n-1)$. 

Let $\Omega_n$ be the Stokes sectors defined in (\ref{stokessector}). The basis of solutions $\boldsymbol{D}_n(t)$ of \eqref{confluentbpz} which asymptotes to $\boldsymbol{D}_{\text{asymp}}(t)$ as $t \to \infty$ in the Stokes sector $\Omega_n$ can be expressed in terms of the Whittaker function of the second kind $W_{k,\mu}(t)$ defined by
\beq \label{whittakerw}
W_{k,\mu}(t)=e^{-\frac{t}{2}}t^{\mu+\frac{1}{2}} U\left(\mu-k+\frac{1}{2},1+2\mu,t \right),
\eeq
where $U(a,b,z)$ is the confluent hypergeometric function of the second kind:\
\beq
U(a,b,z)=\frac{\Gamma(1-b)}{\Gamma(a+1-b)}{}_1F_1\left( \begin{matrix} a \\ b \end{matrix} ; z\right)+\frac{\Gamma(b-1)}{\Gamma(a)} z^{1-b}{}_1F_1\left( \begin{matrix} a+1-b \\ 2-b\end{matrix};z\right)
\eeq
More precisely, using the asymptotic expansion
\begin{equation}
W_{k,\mu}(t) \sim e^{-\frac{t}{2}} t^k {}_2F_0\left(\frac{1}{2}+\mu-k, \frac{1}{2}-\mu -k,-\frac1t \right), \qquad | \arg t|<\frac{3\pi}{2},
\end{equation}
the bases $\boldsymbol{D}_n(t)$ are found to be
\begin{equation}\label{solutionsinf}
\boldsymbol{D}_{n}(t)=t^{\frac{b^2}{2}}\begin{pmatrix}e^{2\pi b \theta_* \lfloor \frac{n-1}2 \rfloor} W_{-ib\theta_*,-ib\theta_0}\lb e^{-2i \pi \lfloor \frac{n-1}2 \rfloor}t\rb \\ e^{-2\pi b \theta_*(\lfloor \frac{n}2 \rfloor-\frac{1}{2})}  W_{ib\theta_*,-ib\theta_0}\lb e^{-2i \pi(\lfloor \frac{n}2 \rfloor-\frac{1}{2})}t\rb\end{pmatrix}, \qquad t \in \Omega_{n}, ~ n \in \mathbb{Z},
\end{equation}
where $\lfloor x \rfloor$ denotes the greatest integer less than or equal to $x$.
It follows from \eqref{solutionsinf} that the bases $\boldsymbol{D}_{n}(t)$ satisfy the periodicity relation
\beq \label{cyclic}
\boldsymbol{D}_{n+2}\left(te^{2i \pi}\right)=e^{i \pi b^2} e^{2\pi b \theta_* \sigma_3}\boldsymbol{D}_n(t), \qquad n \in \mathbb{Z},
\eeq
where $\sigma_3 = \diag(1, -1)$ denotes the third Pauli matrix. 

\subsubsection{Stokes matrices}
The bases $\boldsymbol{D}_{n}(t)$ are connected by the Stokes matrices $S_n$ which are defined by
\beq \label{stokesrelation}
\boldsymbol{D}_{n+1}(t)=S_n \boldsymbol{D}_n(t), \qquad n \in \mathbb{Z}.
\eeq
Because of the periodicity relation \eqref{cyclic}, there are only two independent Stokes matrices $S_1$ and $S_2$. In fact, starting from the relation $\boldsymbol{D}_{n+3}(te^{2i \pi})=S_{n+2} \boldsymbol{D}_{n+2}(te^{2i \pi})$ and using \eqref{cyclic}, we obtain
\beq \label{otherstokes}
S_{n+2}=e^{2\pi b \theta_* \sigma_3} S_n e^{-2\pi b \theta_* \sigma_3}.
\eeq
Using the following analytic continuation formula for the Whittaker function of the second kind:
\beq \begin{split}
(-1)^m W_{k,\mu}(z e^{2i \pi m})=&-\frac{e^{2i \pi k}\sin(2\pi m \mu) + \sin(\mu \pi(2m-2))}{\operatorname{sin}{2\pi \mu}}W_{k,\mu}(z) \\
&-\frac{2i \pi\sin(2\pi m \mu) e^{i k \pi}}{\sin(2\pi \mu)\Gamma\left(\frac{1}{2}+\mu-k \right)\Gamma \left(\frac{1}{2}-\mu-k \right)}W_{-k,\mu}(ze^{i \pi}),
\end{split} \eeq
we infer that the Stokes matrices $S_1$ and $S_2$ are given by
\begin{equation}\label{stokes}
S_1= \begin{pmatrix}1 & 0 \\ -\frac{2i \pi}{\Gamma(\frac{1}{2}-ib(\theta_0+\theta_*))\Gamma(\frac{1}{2}+ib(\theta_0-\theta_*))} & 1 \end{pmatrix}, \qquad S_2=\begin{pmatrix} 1 &-\frac{2i \pi e^{2\pi b \theta_*}}{\Gamma(\frac{1}{2}-ib(\theta_0-\theta_*)) \Gamma(\frac{1}{2}+ib(\theta_0+\theta_*))}  \\ 0 & 1 \end{pmatrix}.\end{equation}
%
%
%We find the following Stokes matrices:
%\begin{equation}
%\quad S_{2j-1}=\begin{pmatrix} 1 & 0 \\ -2i \pi \frac{e^{-4i \pi k(j-1)}}{\Gamma\left(\frac{1}{2}+\mu+k \right) \Gamma \left(\frac{1}{2}-\mu+k \right)} & 1\end{pmatrix}, \quad S_{2j}=\displaystyle \begin{pmatrix} 1 & -2i \pi \frac{e^{4i \pi k(j-\frac{1}{2})}}{\Gamma\left(\frac{1}{2}+\mu-k \right) \Gamma \left(\frac{1}{2}-\mu-k \right)} \\0  & 1\end{pmatrix}.
%\end{equation}

\subsubsection{Connection matrices}
The bases $\boldsymbol{B}(t)$ and $\boldsymbol{D}_n(t)$ are related by the connection matrices $C_n$ which are defined by
\begin{equation}\label{conni}
\boldsymbol{D}_n(t)=C_n \boldsymbol{B}(t), \qquad t \in \Omega_n, ~ n=1,2, \dots.
\end{equation}
Here and in what follows we have restricted ourselves to positive values of $n \in \mathbb{Z}$ for simplicity.

\begin{proposition}
The connection matrices $C_n$ are given by
\begin{equation}\label{knm}
C_n = \begin{pmatrix}
K_{++}^{(n)}(\theta_*,\theta_0) & K_{++}^{(n)}(\theta_*,-\theta_0) \\ K_{-+}^{(n)}(\theta_*,\theta_0) & K_{-+}^{(n)}(\theta_*,-\theta_0)
\end{pmatrix}, \qquad n =1,2,\dots,
\end{equation}
where
\begin{align}\nonumber
K_{++}^{(n)}(\theta_*,\theta_0)=&-e^{2i\pi \lb \lfloor \frac{n}2 \rfloor -\frac12 \rb \lb -\frac12+ib(\theta_0-\theta_*) \rb}\operatorname{csc}{\lb \pi b^2\rb}  
	\\  \nonumber
& \times \lb e^{(-1)^{n}ibQ \pi} \operatorname{cosh}{\lb\pi b(\theta_0-\theta_*)\rb}+ \operatorname{cosh}{\lb\pi b(\theta_0-\theta_*-ib) \rb}\rb \frac{\Gamma(2ib\theta_0)}{\Gamma\lb \frac12+ib(\theta_0+\theta_*) \rb}, 
	\\ \label{kelement} 
K_{-+}^{(n)}(\theta_*,\theta_0)=&\; e^{-2i\pi \lb \lfloor \frac{n}2 \rfloor-\frac12 \rb \lb \frac12-ib(\theta_0+\theta_*) \rb}\frac{\Gamma(2ib\theta_0)}{\Gamma\lb \frac12+ib(\theta_0-\theta_*) \rb}.
\end{align}
\end{proposition}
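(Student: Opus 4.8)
\emph{Plan of proof.} The plan is to expand each entry of $\boldsymbol{D}_n(t)$ in (\ref{solutionsinf}) in the basis $\boldsymbol{B}(t)$ of (\ref{whittaker1}), using the classical connection formula between the two Whittaker functions together with the monodromy of the Whittaker function of the first kind at the origin. Recall first that substituting the expansion of $U(a,b,z)$ in terms of ${}_1F_1$ recalled just above into (\ref{whittakerw}) gives the identity
\begin{equation*}
W_{k,\mu}(z)=\frac{\Gamma(-2\mu)}{\Gamma\lb\tfrac12-k-\mu\rb}M_{k,\mu}(z)+\frac{\Gamma(2\mu)}{\Gamma\lb\tfrac12-k+\mu\rb}M_{k,-\mu}(z),
\end{equation*}
which is valid on the universal cover of $\Cb^*$, all branching occurring through the factors $z^{\mu+\frac12}$ and $z^{-2\mu}$ while the ${}_1F_1$'s are entire. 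From (\ref{whitakkerM}) one has $M_{k,\mu}(z)=z^{\mu+\frac12}\times(\text{entire function of }z)$, and Kummer's transformation ${}_1F_1(a;b;-z)=e^{-z}{}_1F_1(b-a;b;z)$ yields $M_{k,\mu}(e^{i\pi}z)=e^{i\pi(\mu+\frac12)}M_{-k,\mu}(z)$; hence for every $m\in\Zb$,
\begin{equation*}
M_{k,\mu}\lb e^{-2i\pi m}z\rb=e^{-2i\pi m(\mu+\frac12)}M_{k,\mu}(z),\qquad
M_{k,\mu}\lb e^{-2i\pi(m-\frac12)}z\rb=e^{-2i\pi(m-\frac12)(\mu+\frac12)}M_{-k,\mu}(z).
\end{equation*}
Since $\boldsymbol B(t)=t^{\frac{b^2}{2}}\,(\,M_{-ib\theta_*,-ib\theta_0}(t),\ M_{-ib\theta_*,ib\theta_0}(t)\,)^{\mathrm T}$, any relation expressing an entry of $\boldsymbol D_n(t)$ as a linear combination of $M_{-ib\theta_*,\pm ib\theta_0}(t)$ reads off a row of $C_n$ (the overall $t^{b^2/2}$ cancels).

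First I would treat the top component of $\boldsymbol D_n(t)$. Writing $p_n:=\lfloor\tfrac{n-1}2\rfloor$, I apply the connection formula at the point $e^{-2i\pi p_n}t$ with $(k,\mu)=(-ib\theta_*,-ib\theta_0)$ and then the first rotation formula above; after a short rearrangement of the exponential prefactors (using $e^{2\pi b\theta_* p_n}=e^{-2i\pi p_n\,ib\theta_*}$) this gives the top row of $C_n$ in the ``natural'' form
\begin{equation*}
(C_n)_{11}=e^{-2i\pi p_n\lb\frac12-ib(\theta_0-\theta_*)\rb}\,\frac{\Gamma(2ib\theta_0)}{\Gamma\lb\frac12+ib(\theta_0+\theta_*)\rb},\qquad (C_n)_{12}=(C_n)_{11}\big|_{\theta_0\to-\theta_0}.
\end{equation*}
For the bottom component the Whittaker argument is rotated by a \emph{half}-integer multiple of $2\pi$; here I first use the second rotation formula above to peel off the factor $e^{i\pi}$ — this is precisely what turns $W_{ib\theta_*,\cdot}$ into an expression in $M_{-ib\theta_*,\cdot}$, so that the first Whittaker index matches $\boldsymbol B(t)$ — and then rotate by the integer part. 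Setting $q_n:=\lfloor\tfrac n2\rfloor$, the same type of rearrangement yields
\begin{equation*}
(C_n)_{21}=e^{-2i\pi\lb q_n-\frac12\rb\lb\frac12-ib(\theta_0+\theta_*)\rb}\,\frac{\Gamma(2ib\theta_0)}{\Gamma\lb\frac12+ib(\theta_0-\theta_*)\rb},\qquad (C_n)_{22}=(C_n)_{21}\big|_{\theta_0\to-\theta_0},
\end{equation*}
which are exactly $K_{-+}^{(n)}(\theta_*,\theta_0)$ and $K_{-+}^{(n)}(\theta_*,-\theta_0)$; the bottom row of (\ref{knm}) is thus established directly.

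It then remains to recast the top row in the stated form. The only discrepancy is that the bottom row carries the floor $\lfloor\tfrac n2\rfloor$ whereas the top row naturally carries $\lfloor\tfrac{n-1}2\rfloor$. Using $\lfloor\tfrac{n-1}2\rfloor=\lfloor\tfrac n2\rfloor-\tfrac{1+(-1)^n}2$, so that $p_n-q_n+\tfrac12=-\tfrac{(-1)^n}2$, one needs only the elementary identity
\begin{equation*}
e^{(-1)^n ibQ\pi}\cosh\lb\pi b\vartheta\rb+\cosh\lb\pi b\vartheta-i\pi b^2\rb=-\,e^{\frac{i\pi(-1)^n}{2}}\,e^{(-1)^n\pi b\vartheta}\,\sin\lb\pi b^2\rb,
\end{equation*}
valid for any $\vartheta$, which follows at once from $bQ=b^2+1$, hence $e^{\pm ibQ\pi}=-e^{\pm i\pi b^2}$, together with $\cosh(a-i\pi b^2)=\cos(\pi b^2)\cosh a-i\sin(\pi b^2)\sinh a$ (it suffices to check the cases $n$ even and $n$ odd separately). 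Applying this with $\vartheta=\theta_0-\theta_*$ turns $(C_n)_{11}$ into $K_{++}^{(n)}(\theta_*,\theta_0)$, and with $\vartheta=\theta_0+\theta_*$ — equivalently, making the substitution $\theta_0\to-\theta_0$ — turns $(C_n)_{12}$ into $K_{++}^{(n)}(\theta_*,-\theta_0)$. Throughout, $n\ge1$ guarantees $p_n,q_n\ge0$, so no rotation in the ``wrong'' direction is ever needed.

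\emph{Expected main obstacle.} There is no genuinely deep step here; the only real care is needed in (i) justifying that the $W$–$M$ connection formula may legitimately be applied at the rotated arguments appearing in (\ref{solutionsinf}) — i.e. that it is an identity on the universal cover, not merely on a slit plane — and (ii) the bookkeeping that brings the naturally-occurring exponential prefactors and floor functions into the exact shape of $K_{\pm+}^{(n)}$, in particular the slightly hidden hyperbolic identity displayed above. A possible alternative, which avoids rotated arguments altogether, is to first establish $C_1$ (the case $p_1=q_1=0$ is immediate from the plain connection formula) and then propagate it with the Stokes matrices $S_n$ of (\ref{stokes})–(\ref{otherstokes}) via $C_{n+1}=S_nC_n$, which follows from (\ref{conni}), (\ref{stokesrelation}) and the fact that $\boldsymbol B(t)$ is a basis of solutions; this reduces the statement to the case $n=1$ plus an induction, at the cost of re-deriving the same prefactor identities.
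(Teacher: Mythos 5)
Your proposal is correct, and it takes a genuinely different route from the paper. The paper's proof verifies $C_1$ directly (equation (\ref{C1expression})), obtains $C_2=S_1C_1$ from the already-computed Stokes matrices, and then handles general $n$ by induction using the periodicity relation (\ref{periodcn0}), checking that the closed-form expressions $K^{(n)}_{\pm+}$ satisfy the resulting two-step recursion. You instead read off every $C_n$ in one stroke by expanding the rotated Whittaker functions of (\ref{solutionsinf}) in the basis (\ref{whittaker1}): the $W$--$M$ connection formula (which, given how the paper defines $U$ and hence $W_{k,\mu}$, is essentially the definition and is automatically valid on the universal cover), the integer monodromy of $M_{k,\mu}$, and the half-turn identity $M_{k,\mu}(e^{i\pi}z)=e^{i\pi(\mu+\frac12)}M_{-k,\mu}(z)$ that flips the first index so as to match $\boldsymbol B(t)$. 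I checked the bookkeeping: the bottom row comes out verbatim as $K^{(n)}_{-+}$, the top row appears with $\lfloor\frac{n-1}2\rfloor$ in place of $\lfloor\frac n2\rfloor-\frac12$, and your hyperbolic identity (using $e^{\pm ibQ\pi}=-e^{\pm i\pi b^2}$ and $p_n-q_n+\frac12=-\frac{(-1)^n}2$) correctly converts it to the stated form of $K^{(n)}_{++}$; for $n=1$ it reproduces (\ref{C1expression}). What each approach buys: yours is self-contained and explains \emph{why} the floor functions and the $\csc(\pi b^2)\lb e^{(-1)^nibQ\pi}\cosh(\cdot)+\cosh(\cdot)\rb$ combination appear, at the price of carrying rotated arguments; the paper's reuses the Stokes matrices and the periodicity structure already in place, so the verification for general $n$ reduces to an algebraic check, but it presents $K^{(n)}_{++}$ as a given ansatz rather than deriving it. Your closing remark correctly identifies the paper's strategy as the alternative.
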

\begin{proof}
It is easy to verify that the first connection matrix is given by
\begin{align}\label{C1expression}
C_1=\left(
\begin{array}{cc}
 \frac{\Gamma (2 i b \theta_0)}{\Gamma \left(\frac{1}{2}+ib (\theta_0+\theta_*)\right)} & \frac{\Gamma (-2 i b \theta_0)}{\Gamma \left(\frac{1}{2}-i b (\theta_0-\theta_*)\right)} \\
 \frac{i e^{\pi  b (\theta_0+\theta_*)} \Gamma (2 i b \theta_0)}{\Gamma \left(\frac{1}{2}+ib(\theta_0-\theta_*)\right)} & \frac{i e^{\pi  b (\theta_*-\theta_0)} \Gamma (-2 i b \theta_0)}{\Gamma \left(\frac{1}{2}-i b (\theta_0+\theta_*)\right)} \\
\end{array}
\right).
\end{align}
Since
\begin{equation}\label{cj0}
C_n=\left( \displaystyle \prod_{k=1}^{n-1} S_{n-k} \right) C_1
 = S_{n-1}S_{n-2} \cdots S_1C_1, \qquad n=2,3,\dots,
\end{equation}
it follows from (\ref{stokes}) and (\ref{C1expression}) that \eqref{knm} holds for $n=1$ and $n=2$. Moreover, by \eqref{conni}, the connection matrices $C_n$ satisfy the periodicity relation
\beq \label{periodcn0}
C_{n+2}=e^{i\pi b^2} e^{2\pi b \theta_* \sigma_3} C_n e^{-i \pi b Q}e^{-2\pi b \theta_0 \sigma_3}.
\eeq
Proceeding by induction, it is therefore sufficient to prove that
\beq \begin{split}
& C_{2j+1} = e^{i\pi j b^2} e^{2\pi b j \theta_* \sigma_3} C_1 e^{-i \pi j b Q}e^{-2\pi b j \theta_0 \sigma_3}, \qquad j=1,2,\dots, \\
& C_{2j+2} = e^{i\pi j b^2} e^{2\pi b j \theta_* \sigma_3} C_2 e^{-i \pi j b Q}e^{-2\pi b j \theta_0 \sigma_3}, \qquad j=1,2,\dots,
\end{split} \eeq
and direct computations show that the matrices defined in \eqref{knm} obey these relations.
\end{proof}

\subsection{Confluence of the solutions}\label{section2.3}
In this subsection, we explain how to obtain the solutions of the confluent BPZ equation \eqref{confluentbpz} from the solutions of the hypergeometric BPZ equation \eqref{bpz} by taking an appropriate confluent limit. Let us first set $\theta_1 = \frac{\Lambda+\theta_*}{2}$ and $\theta_\infty=\frac{\Lambda-\theta_*}2$. We introduce renormalized versions $\boldsymbol{\tilde{F}}^{p}(z,\Lambda)$ of the bases $\boldsymbol{F}^{p}(z)$ defined in (\ref{boldFpdef}) as follows:
\begin{equation} \label{renorm} \left\{ \begin{split}
& \boldsymbol{\tilde{F}}^{0}\left(z,\Lambda\right)=N_0(\Lambda) \boldsymbol{F}^{0}\lb z\rb, \\
& \boldsymbol{\tilde{F}}^{1}\left(z,\Lambda\right)=N_\infty(\Lambda) \boldsymbol{F}^{1}\lb z\rb, \\
& \boldsymbol{\tilde{F}}^{\infty}\left(z,\Lambda \right)=N_\infty(\Lambda) \boldsymbol{F}^{\infty}\lb z\rb,
\end{split}\right.\quad \left\{\begin{split}
&N_0(\Lambda) =\begin{pmatrix} \left(i b\Lambda \right)^{\frac{bQ}{2}-ib\theta_0} & 0 \\ 0 & \left(i b\Lambda \right)^{\frac{bQ}{2}+ib\theta_0}  \end{pmatrix}, \\
& N_\infty(\Lambda)=\begin{pmatrix}Q_+(\Lambda) & 0 \\ 0 &Q_-(\Lambda) \end{pmatrix},\end{split}\right. \end{equation}
where 
\beq Q_\pm(\Lambda) = e^{-\frac{i \pi b Q}{2}}e^{\pm \frac{\pi b}{2}(\theta_*+\Lambda)}(ib\Lambda)^{\frac{b}{2}(b\mp2i \theta_*)}.\eeq
After performing the change of variables $z=\frac{t}{ib\Lambda}$, equation \eqref{bpz} has three regular singular points at $t= 0$, $t = ib\Lambda$, and $t = \infty$. The next proposition shows that the solution basis $\boldsymbol{B}(t)$ of the confluent BPZ equation defined in \eqref{whittaker1} is the confluent limit of the renormalized basis of solutions $\boldsymbol{\tilde{F}}^{0}(\frac{t}{ib\Lambda}, \Lambda)$ of the hypergeometric BPZ equation. 

\begin{proposition}
We have
\beq \label{limit0}
\lim_{|\Lambda| \to \infty} \boldsymbol{\tilde{F}}^{0}\left(\frac{t}{ib\Lambda},\Lambda\right) = \boldsymbol{B}(t).
\eeq
\end{proposition}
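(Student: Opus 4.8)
The plan is to insert the confluence substitution directly into the closed-form expressions for the components $F^0_\pm$ of $\boldsymbol{F}^0$ and to pass to the limit factor by factor. With $\theta_1=\frac{\Lambda+\theta_*}{2}$ and $\theta_\infty=\frac{\Lambda-\theta_*}{2}$ one has $\theta_1+\theta_\infty=\Lambda$ and $\theta_1-\theta_\infty=\theta_*$, so the two numerator parameters of the Gauss function in $F^0_\pm$ become $\frac12+ib(\mp\theta_0+\Lambda)$ and $\frac12+ib(\mp\theta_0+\theta_*)$, while the denominator parameter $1\mp 2ib\theta_0$ is unchanged. Setting $z=\frac{t}{ib\Lambda}$ and multiplying by $N_0(\Lambda)$, the diagonal entry $(ib\Lambda)^{\frac{bQ}{2}\mp ib\theta_0}$ cancels the factor $(ib\Lambda)^{-(\frac{bQ}{2}\mp ib\theta_0)}$ produced by $z^{\frac{bQ}{2}\mp ib\theta_0}$ (legitimate for a consistent choice of branches, since $z\to 0$ as $|\Lambda|\to\infty$), so that the components of $\boldsymbol{\tilde{F}}^{0}(\tfrac{t}{ib\Lambda},\Lambda)$ are
\[
\tilde{F}^{0}_\pm\!\left(\tfrac{t}{ib\Lambda},\Lambda\right)=t^{\frac{bQ}{2}\mp ib\theta_0}\left(1-\tfrac{t}{ib\Lambda}\right)^{\frac{bQ}{2}+\frac{ib}{2}(\Lambda+\theta_*)}\,{}_2F_1\!\left[\begin{matrix}\frac12+ib(\mp\theta_0+\Lambda),\ \frac12+ib(\mp\theta_0+\theta_*)\\ 1\mp 2ib\theta_0\end{matrix};\tfrac{t}{ib\Lambda}\right].
\]
Writing $\left(1-\tfrac{t}{ib\Lambda}\right)^{\frac{bQ}{2}+\frac{ib}{2}(\Lambda+\theta_*)}=\left(1-\tfrac{t}{ib\Lambda}\right)^{\frac{ib\Lambda}{2}}\left(1-\tfrac{t}{ib\Lambda}\right)^{\frac{bQ}{2}+\frac{ib\theta_*}{2}}$ and using that $w\log(1-\zeta/w)\to-\zeta$ as $w\to\infty$ in any direction, the middle factor tends to $e^{-t/2}$ and the last to $1$, so this power of $1-z$ tends to $e^{-t/2}$ locally uniformly in $t$.

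The crux is the confluence of the Gauss function, namely that the ${}_2F_1$ in the display above tends to ${}_1F_1\!\left[\begin{matrix}\frac12+ib(\mp\theta_0+\theta_*)\\ 1\mp 2ib\theta_0\end{matrix};t\right]$. Heuristically this is the classical limit ${}_2F_1[a,\lambda;c;z/\lambda]\to{}_1F_1[a;c;z]$: term by term, the $n$-th coefficient of the left-hand series equals that of the right-hand series times $\prod_{k=0}^{n-1}\frac{\frac12\mp ib\theta_0+ib\Lambda+k}{ib\Lambda}$, which converges to $1$ for each fixed $n$. The main obstacle is that these products are not bounded uniformly in $n$ (for $n$ of order $|\Lambda|$ they are of size $n!$), so a direct dominated-convergence argument on the series does not apply. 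I would instead use the Euler integral representation ${}_2F_1[a,b;c;w]=\frac{\Gamma(c)}{\Gamma(b)\Gamma(c-b)}\int_0^1 s^{b-1}(1-s)^{c-b-1}(1-ws)^{-a}\,ds$, after swapping the two numerator parameters so that the $\Lambda$-dependent one plays the role of $a$ and the bounded one the role of $b$ (valid for $\theta_0,\theta_*$ in a suitable range, automatic when these are real, and extended by analytic continuation otherwise): on the integrand one has $\left(1-\tfrac{ts}{ib\Lambda}\right)^{-\frac12\pm ib\theta_0-ib\Lambda}\to e^{ts}$ uniformly for $s\in[0,1]$, and the integrand is bounded by a fixed $L^1([0,1])$ function for $|\Lambda|$ large, so dominated convergence yields $\frac{\Gamma(c)}{\Gamma(b)\Gamma(c-b)}\int_0^1 s^{b-1}(1-s)^{c-b-1}e^{ts}\,ds={}_1F_1[b;c;t]$. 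Alternatively one can simply cite the classical uniform confluence ${}_2F_1\to{}_1F_1$.

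Combining the three factors gives
\[
\lim_{|\Lambda|\to\infty}\tilde{F}^{0}_\pm\!\left(\tfrac{t}{ib\Lambda},\Lambda\right)=t^{\frac{bQ}{2}\mp ib\theta_0}\,e^{-\frac{t}{2}}\,{}_1F_1\!\left[\begin{matrix}\frac12+ib(\mp\theta_0+\theta_*)\\ 1\mp 2ib\theta_0\end{matrix};t\right],
\]
and it remains to recognize the right-hand side as the entries of $\boldsymbol{B}(t)$. Using $\frac{bQ}{2}=\frac{b^2}{2}+\frac12$ together with the definition \eqref{whitakkerM} of the Whittaker function $M_{k,\mu}$, one checks directly that $t^{\frac{bQ}{2}-ib\theta_0}e^{-t/2}\,{}_1F_1[\frac12+ib(\theta_*-\theta_0);1-2ib\theta_0;t]=t^{\frac{b^2}{2}}M_{-ib\theta_*,-ib\theta_0}(t)$ and $t^{\frac{bQ}{2}+ib\theta_0}e^{-t/2}\,{}_1F_1[\frac12+ib(\theta_0+\theta_*);1+2ib\theta_0;t]=t^{\frac{b^2}{2}}M_{-ib\theta_*,ib\theta_0}(t)$, which are exactly the two components of $\boldsymbol{B}(t)$ in \eqref{whittaker1}. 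This proves \eqref{limit0}.
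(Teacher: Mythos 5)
Your proposal is correct and follows essentially the same route as the paper's proof: substitute the confluence parametrization into the explicit hypergeometric form of $F^0_\pm$, cancel the powers of $ib\Lambda$ against $N_0(\Lambda)$, note that the power of $1-\tfrac{t}{ib\Lambda}$ tends to $e^{-t/2}$, and invoke the classical confluence $\lim_{\beta\to\infty}{}_2F_1[\alpha,\beta;\gamma;t/\beta]={}_1F_1[\alpha;\gamma;t]$. The only difference is that you supply a justification of that confluence limit via the Euler integral representation and dominated convergence (correctly flagging that termwise convergence alone is not uniform in $n$), whereas the paper simply cites the limit.
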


\begin{proof}
The renormalized function $\boldsymbol{\tilde{F}}^{0}\left(\frac{t}{ib\Lambda},\Lambda\right)$ explicitly reads
$$\boldsymbol{\tilde{F}}^{0}\left(\frac{t}{ib\Lambda},\Lambda\right)=\begin{pmatrix} \tilde{F}_+^{0}\left(\frac{t}{ib\Lambda},\Lambda\right) \\ \tilde{F}^{0}_-\left(\frac{t}{ib\Lambda},\Lambda\right)\end{pmatrix},$$
where
 $$\large \tilde{F}^{0}_\pm\left(\frac{t}{ib\Lambda},\Lambda \right)= \left(1-\frac{t}{ib \Lambda}  \right)^{\frac{bQ}{2}+\frac{ib}{2}(\theta_*+\Lambda)}t^{\frac{bQ}{2}\mp ib \theta_0}{}_2F_1\left[\begin{matrix}\frac{1}{2}+ib(\theta_*\mp \theta_0),\frac{1}{2}+ib(\Lambda \mp \theta_0)  \\ 1\mp 2ib \theta_0
\end{matrix};\frac{t}{ib\Lambda}\right].$$
Since
$$\lim_{ \beta \to \infty} {}_2F_1\lb \begin{matrix} \alpha, \beta \\ \gamma \end{matrix}; \frac{t}{\beta} \rb = {}_1F_1 \lb \begin{matrix}\alpha \\ \gamma \end{matrix};t \rb,$$
the limit \eqref{limit0} follows.
\end{proof}

We next explain how to recover the solution bases $\boldsymbol{D}_n$ of the confluent BPZ equation adapted to the irregular singular point at $\infty$. Actually, the $\boldsymbol{D}_n$ can be obtained as the confluent limit in two different ways: either starting from $\boldsymbol{\tilde{F}}^{\infty}$ or from $\boldsymbol{\tilde{F}}^{1}$. This is consistent with the fact that the two bases $\boldsymbol{\tilde{F}}^{\infty}$ and $\boldsymbol{\tilde{F}}^{1}$ diagonalize the monodromy matrices at the two singular points that merge at $\infty$ in the confluent limit. 

For the rest of this section we assume that $b,\Lambda>0$ for simplicity. Defining the renormalized connection matrices $\tilde{C}_{pq}$ by the relation $\large \boldsymbol{\tilde{F}}^{p} = \tilde{C}_{pq} \large \boldsymbol{\tilde{F}}^{q}$, we have
\beq \label{renormconnections} \tilde{C}_{1 0}(\Lambda) = N_\infty(\Lambda) C_{1 0} N_0^{-1}(\Lambda), \quad \tilde{C}_{\infty 1}(\Lambda)=N_\infty(\Lambda) C_{\infty 1}N_\infty^{-1}(\Lambda), \quad \tilde{C}_{\infty 0}(\Lambda)=N_\infty(\Lambda) C_{\infty 0} N_0^{-1}(\Lambda), \eeq
where $C_{\infty 0}, C_{1 0}$, and $C_{\infty 1}$ are the connection matrices defined in \eqref{connections}. The crucial point is that \eqref{connections} holds for $0 < \arg{z} < \pi$. To access all the Stokes sectors $\Omega_n$, it is therefore convenient to write  
\beq \label{zlimit}
z = \frac{t e^{-2i \pi(j-1)}}{ib\epsilon\Lambda},
\eeq
where $j\geq1$ is an integer and $\epsilon=\pm1$. It is straightforward to show that $0 < \arg{z} < \pi$ if and only if
\beq \label{argt}
\frac{\epsilon \pi}2 + 2\pi(j-1) < \arg{t} < \frac{\pi}2(2+\epsilon) +2\pi(j-1).
\eeq
Decomposing the Stokes sector $\Omega_n$ into the two halves $\Omega_n^-$ and $\Omega_n^+$ defined by
\beq \label{halfstokes}\begin{split}
& \Omega_n^- = \left\{ -\frac{3\pi}2+\pi(n-1) < \arg{t} < -\frac{\pi}2+\pi(n-1)\right\}, \\
& \Omega_n^+ = \left\{-\frac{\pi}2+\pi(n-1) < \arg{t} < \frac{\pi}2+\pi(n-1)\right\},
\end{split}\eeq
we infer from \eqref{argt} that $0 < \arg{z} < \pi$ if and only if
\beq \label{case1}
\begin{cases} t \in \Omega_{2j+1}^-=\Omega_{2j}^+, & \epsilon=+1, \\ t \in \Omega_{2j}^-=\Omega_{2j-1}^+, & \epsilon=-1.\end{cases} \eeq
%We deduce that the change of variables \eqref{zlimit} provides two possible ways of entering the odd and even Stokes sectors of the confluent BPZ equation. 
Our next proposition utilizes \eqref{zlimit} and (\ref{case1}) to construct the solution basis $\boldsymbol{D}_n(t)$ everywhere in $\Omega_n$ from the renormalized solution bases $\tilde{\boldsymbol{F}}^\infty$ and $\tilde{\boldsymbol{F}}^1$.

\begin{proposition}\label{prop2p4}
The following limits hold for any integer $j\geq1$:
\begin{subequations}\label{confluentlimits}\begin{align}
& \label{confluencea} \lb e^{i\pi b^2} e^{2\pi b \theta_* \sigma_3}\rb^{j-1} \lim\limits_{\substack{\Lambda \to + \infty}}\tilde{\boldsymbol{ F}}^{\infty}\lb \frac{t e^{-2i\pi(j-1)}}{ib\epsilon\Lambda},\epsilon \Lambda\rb = \begin{cases} 
\boldsymbol{D}_{2j+1}(t), & t \in \Omega_{2j+1}^-,~ \epsilon=+1, \\ 
\boldsymbol{D}_{2j}(t), & t \in \Omega_{2j}^-,~ \epsilon=-1,
\end{cases} \\
& \label{confluenceb} \lb e^{i\pi b^2} e^{2\pi b \theta_* \sigma_3}\rb^{j-1} \lim\limits_{\substack{\Lambda \to + \infty}}\tilde{\boldsymbol{F}}^{1}\lb \frac{t e^{-2i\pi(j-1)}}{ib\epsilon\Lambda},\epsilon \Lambda\rb 
= \begin{cases} \boldsymbol{D}_{2j}(t), & t \in \Omega_{2j}^+,~ \epsilon=+1, \\ 
\boldsymbol{D}_{2j-1}(t), & t\in \Omega_{2j-1}^+,~ \epsilon=-1, 
\end{cases}
\end{align}\end{subequations} 
\end{proposition}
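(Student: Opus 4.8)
The plan is to follow essentially the same strategy as in the proof of Proposition~\ref{limit0}, but now starting from the bases $\boldsymbol{F}^\infty$ and $\boldsymbol{F}^1$ adapted to the singular points that collide. First I would write down the renormalized solutions $\tilde{\boldsymbol{F}}^\infty$ and $\tilde{\boldsymbol{F}}^1$ explicitly after the substitutions \eqref{confluence} and \eqref{zlimit}, collecting the prefactors $Q_\pm(\Lambda)$ together with the powers of $z$ and $(1-z)$ that appear in $F^\infty_\pm$ and $F^1_\pm$. The key analytic input is the classical confluence limit for the Gauss hypergeometric function, namely that ${}_2F_1\!\left(\begin{smallmatrix}\alpha,\beta\\\gamma\end{smallmatrix};\tfrac{w}{\beta}\right)\to{}_1F_1\!\left(\begin{smallmatrix}\alpha\\\gamma\end{smallmatrix};w\right)$ as $\beta\to\infty$, together with the elementary limits $(1-\tfrac{t}{ib\Lambda})^{c\Lambda}\to e^{-\tfrac{ct}{ib}}$ and $(ib\Lambda)^{\text{(linear in }\Lambda)}$ absorbed into the renormalization. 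Combining these should turn the ${}_2F_1$ in $F^\infty_\pm$ (which has $1/z = ib\Lambda/t$ as its argument, with $\Lambda$ appearing linearly in one of the upper parameters) into a ${}_1F_1$, and the exponential prefactor $e^{\pm\pi b\Lambda/2}(1-1/z)^{\cdots}$ into $e^{\mp t/2}$, so that the limit reproduces the Whittaker function $W_{\mp ib\theta_*,-ib\theta_0}$ up to the explicit constants appearing in \eqref{solutionsinf}; similarly for $\tilde{\boldsymbol{F}}^1$.

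The second ingredient is purely combinatorial: one must check that the rotation $z = t e^{-2i\pi(j-1)}/(ib\epsilon\Lambda)$ together with the prefactor $(e^{i\pi b^2}e^{2\pi b\theta_*\sigma_3})^{j-1}$ reproduces exactly the branch-dependent constants $e^{2\pi b\theta_*\lfloor\frac{n-1}{2}\rfloor}$ and the rotated argument $e^{-2i\pi\lfloor\frac{n-1}{2}\rfloor}t$ that define $\boldsymbol{D}_n(t)$ in \eqref{solutionsinf}. Here I would use the periodicity relation \eqref{cyclic}: since $\boldsymbol{D}_{n+2}(te^{2i\pi}) = e^{i\pi b^2}e^{2\pi b\theta_*\sigma_3}\boldsymbol{D}_n(t)$, it suffices to establish the four base cases $j=1$ (i.e.\ $n=1,2,3$ according to the sign $\epsilon$ and which of $\tilde{\boldsymbol{F}}^\infty,\tilde{\boldsymbol{F}}^1$ is used) and then propagate by induction on $j$, exactly as in the proof of the formula \eqref{knm} for $C_n$. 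The identification \eqref{case1} of which half-sector $\Omega_n^\pm$ corresponds to $0<\arg z<\pi$ for given $(j,\epsilon)$ tells us precisely which $\boldsymbol{D}_n$ appears on the right-hand side, so the sector bookkeeping is already done.

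For the base case, I would verify directly that $\lim_{\Lambda\to+\infty}\tilde{\boldsymbol{F}}^\infty(te^{0}/(ib\Lambda),\Lambda) = \boldsymbol{D}_3(t)$ on $\Omega_3^-$ (taking $\epsilon=+1$, $j=1$) by matching the asymptotic series: both sides are solutions of \eqref{confluentbpz}, and $\boldsymbol{D}_3$ is characterized by its asymptotics $\boldsymbol{D}_{\text{asymp}}(t)$ on $\Omega_3$, so it is enough to show the limit of $\tilde{\boldsymbol{F}}^\infty$ has the same leading exponential/power behavior $t^{b^2/2}e^{\mp t/2}t^{\mp ib\theta_*}$, which follows from the hypergeometric confluence applied term-by-term to the $_2F_0$ expansion \eqref{asymptotic}. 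The analogous check for $\tilde{\boldsymbol{F}}^1$ giving $\boldsymbol{D}_2$ on $\Omega_2^+$ uses that $F^1_\pm$ carries the factor $(1-1/z)^{\frac{bQ}{2}\pm ib\theta_1}$ with $\theta_1 = \frac{\Lambda+\theta_*}{2}$, which produces the $e^{\mp t/2}$ and the branch factor $e^{2\pi b\theta_*\lfloor\cdots\rfloor}$ after the rotation.

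I expect the main obstacle to be the careful tracking of branches of the multivalued prefactors $z^a$, $(1-z)^a$, and $(1/z)^a$ under the rotation $z\mapsto ze^{-2i\pi(j-1)}$, and ensuring that the phases generated this way combine with the explicit factors in $N_\infty(\Lambda)$ and $Q_\pm(\Lambda)$ to give precisely the constants in \eqref{solutionsinf} — in particular getting the $\lfloor\frac{n}{2}\rfloor-\frac12$ versus $\lfloor\frac{n-1}{2}\rfloor$ exponents right, and matching the $e^{\pm\pi b(\theta_*+\Lambda)/2}$ growth in $Q_\pm$ against the decay/growth of $(1-\frac{t}{ib\Lambda})^{\frac{ib}{2}(\theta_*+\Lambda)}$ so that the $\Lambda$-dependence cancels and leaves the finite answer $e^{\mp t/2}$. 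Everything else is routine once the confluence limit of ${}_2F_1$ and the half-sector dictionary \eqref{case1} are in hand.
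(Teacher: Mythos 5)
There is a genuine gap at the heart of your argument. Your ``key analytic input'' is the termwise Kummer confluence $\lim_{\beta\to\infty}{}_2F_1\big(\begin{smallmatrix}\alpha,\beta\\ \gamma\end{smallmatrix};\tfrac{w}{\beta}\big)={}_1F_1\big(\begin{smallmatrix}\alpha\\ \gamma\end{smallmatrix};w\big)$, but this only applies when the argument of the ${}_2F_1$ tends to a fixed point inside the unit disk. That is the situation for $F^0_\pm$ (argument $z=\tfrac{t}{ib\Lambda}\to 0$), which is why the paper can prove \eqref{limit0} this way. For $F^\infty_\pm$ and $F^1_\pm$, however, the arguments are $\tfrac1z=\tfrac{ib\Lambda}{t}$ and $1-\tfrac1z$, both of which tend to \emph{infinity} as $\Lambda\to+\infty$; the hypergeometric series is then evaluated far outside its disk of convergence, and no termwise limit of the series is available. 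This is not a technicality: the target objects are Whittaker $W$-functions whose expansions at $\infty$ are divergent, and the confluence toward them is exactly where the Stokes phenomenon lives. Your fallback --- arguing that the limit is a solution of \eqref{confluentbpz} with the right leading behavior, hence equals $\boldsymbol{D}_n$ --- is also insufficient: matching only the leading exponential/power behavior on a half-sector $\Omega_n^\pm$ of opening $\pi$ does not pin down a solution at a rank-one irregular singularity (one may add a recessive solution without disturbing such asymptotics; uniqueness requires the full asymptotic series on a sector of opening greater than $\pi$), and even the leading behavior of the limit would require a uniform-in-$t$ control that a formal termwise computation does not provide.

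The paper's proof sidesteps all of this. It never takes a confluent limit of $F^\infty_\pm$ or $F^1_\pm$ directly: it writes $\tilde{\boldsymbol{F}}^{\infty}=\tilde{C}_{\infty 0}(\epsilon\Lambda)\,\tilde{\boldsymbol{F}}^{0}$ and $\tilde{\boldsymbol{F}}^{1}=\tilde{C}_{10}(\epsilon\Lambda)\,\tilde{\boldsymbol{F}}^{0}$ using \eqref{renormconnections}, invokes the already-established limit $\tilde{\boldsymbol{F}}^{0}\to\boldsymbol{B}(t)$, and reduces the whole problem to the $\Lambda\to+\infty$ asymptotics of the explicit Gamma-function entries of $\tilde{C}_{\infty 0}$ and $\tilde{C}_{10}$ via \eqref{asympgamma}; these limits are identified with $C_2,C_3$ (resp.\ $C_1,C_2$) and the conclusion follows from $\boldsymbol{D}_n=C_n\boldsymbol{B}$ in \eqref{conni} together with the periodicity \eqref{periodcn0}. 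Your second and third ingredients --- the sector bookkeeping via \eqref{case1} and the propagation in $j$ via the periodicity relations --- do match the paper and are fine; it is the analytic core, the limit of the bases adapted to the merging singularities, that your proposal does not actually establish. To repair it you would either have to adopt the paper's connection-matrix route, or supply a genuinely harder direct argument (e.g.\ a uniform steepest-descent analysis of an integral representation of ${}_2F_1$ at large argument and large parameters).
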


\begin{proof}
Let us first prove \eqref{confluencea}. We start from the connection formula
%\JR{From the renormalized bases $\tilde{\boldsymbol{F}^p(z,\Lambda}$, $p=0,1,\infty$, the confluence of the parameters explicitly reads
%\beq\label{conftheta1i}
%\theta_1 = \frac{\epsilon \Lambda + \theta_*}2, \quad \theta_\infty = \frac{\epsilon\Lambda - \theta_*}2, \quad  z = \frac{t e^{-2i \pi (j-1)}}{ib\epsilon \Lambda}.
%\eeq}
\begin{equation}\label{conninf0}
 \tilde{\boldsymbol{ F}}^{\infty}\lb \frac{t e^{-2i\pi(j-1)}}{ib\epsilon\Lambda},\epsilon \Lambda\rb  = \tilde{C}_{\infty 0}(\epsilon \Lambda)  \tilde{\boldsymbol{ F}}^{0}\lb \frac{t e^{-2i\pi(j-1)}}{ib\epsilon\Lambda}, \epsilon\Lambda\rb,
\end{equation}
where the renormalized connection matrix $\tilde{C}_{\infty 0}$ is given by \eqref{renormconnections}. The renormalized basis $\tilde{\boldsymbol{F}}^0$ satisfies
$$\tilde{\boldsymbol{ F}}^0\lb \frac{t e^{-2i\pi(j-1)}}{i\epsilon b\Lambda},\epsilon \Lambda\rb = \lb e^{-i \pi b Q}e^{-2\pi b \theta_0 \sigma_3} \rb^{j-1} \boldsymbol{\tilde{F}}^{0}\left(\frac{t}{ib \epsilon \Lambda},\epsilon\Lambda\right).$$
Thus the connection formula \eqref{conninf0} can be rewritten as
\begin{equation*} \begin{split}
\lb e^{i\pi b^2} e^{2\pi b \theta_* \sigma_3}\rb^{j-1} & \tilde{\boldsymbol{ F}}^{\infty}\lb \frac{t e^{-2i\pi(j-1)}}{ib\epsilon\Lambda},\epsilon \Lambda\rb  
	\\
& =  \lb e^{i\pi b^2} e^{2\pi b \theta_* \sigma_3}\rb^{j-1} \tilde{C}_{\infty 0}(\epsilon \Lambda) \lb e^{-i \pi b Q}e^{-2\pi b \theta_0 \sigma_3} \rb^{j-1} \boldsymbol{\tilde{F}}^{0}\left(\frac{t}{ib \epsilon \Lambda},\epsilon\Lambda\right).\end{split} \end{equation*}
Letting $\Lambda \to +\infty$ in this equation and using the limit \eqref{limit0} of $\boldsymbol{\tilde{F}}^{0}(\frac{t}{ib\epsilon \Lambda},\epsilon\Lambda)$, we find
\beq \label{firstlimit} \begin{split}
 \lb e^{i\pi b^2} e^{2\pi b \theta_* \sigma_3}\rb^{j-1} \lim\limits_{\substack{\Lambda \to +\infty}}& \tilde{\boldsymbol{ F}}^{\infty}\lb \frac{t e^{-2i\pi(j-1)}}{ib\epsilon\Lambda},\epsilon \Lambda\rb \\
= & \lb e^{i\pi b^2} e^{2\pi b \theta_* \sigma_3}\rb^{j-1} \lim\limits_{\Lambda \to +\infty} \lb \tilde{C}_{\infty 0}(\epsilon \Lambda)\rb \lb e^{-i \pi b Q}e^{-2\pi b \theta_0 \sigma_3} \rb^{j-1} \boldsymbol{B}(t).
\end{split} \eeq

It remains to compute the limit of $\tilde{C}_{\infty 0}(\epsilon \Lambda)$. This matrix can be explicitly written as
$$\tilde{C}_{\infty 0}(\epsilon \Lambda) = \begin{pmatrix}  L_{\infty 0}^{(++)}(\theta_*,\theta_0,\epsilon \Lambda) & L_{\infty 0}^{(++)}(\theta_*,-\theta_0,\epsilon \Lambda) \\ L_{\infty 0}^{(-+)}(\theta_*,\theta_0,\epsilon \Lambda) & L_{\infty 0}^{(-+)}(\theta_*,-\theta_0,\epsilon \Lambda) \end{pmatrix},$$
where
\begin{equation*} \begin{split}
& L_{\infty 0}^{(++)}(\theta_*,\theta_0,\epsilon \Lambda)= -i e^{\pi  b (\theta_*-\theta_0)} (i b \epsilon \Lambda )^{-\frac{1}{2}+i b (\theta_0-\theta_*)} \frac{\Gamma (2 i b \theta_0)}{\Gamma \left(\frac{1}{2}+ib (\theta_0+\theta_*)\right)} \frac{\Gamma (1+ib \theta_*-i b \epsilon \Lambda)}{\Gamma \left(\frac{1}{2}+ib \theta_0-i b \epsilon \Lambda\right)}, \\
& L_{\infty 0}^{(-+)}(\theta_*,\theta_0,\epsilon \Lambda)=-i e^{-\pi b(\theta_*+\theta_0)} (i b \epsilon \Lambda)^{-\frac{1}{2}+i b (\theta_0+\theta_*)} \frac{\Gamma (2 i b \theta_0)}{\Gamma \left(\frac12+ib (\theta_0-\theta_*)\right)} \frac{\Gamma (1-i b \theta_*+ib \epsilon  \Lambda)}{\Gamma \left(\frac12+ib \theta_0+ib \epsilon  \Lambda \right)}.\end{split} \end{equation*}
Using the asymptotic formula
\beq \label{asympgamma}
\Gamma(z+a) \sim \sqrt{2\pi}~z^{z+a-\frac12}~e^{-z}, \qquad z \to \infty, ~ z + a \in \mathbb{C} \setminus \mathbb{R}_{\leq 0}, ~  |a| < |z|, 
\eeq
we obtain
\begin{align}\label{Li0}
& \lim\limits_{\Lambda \to + \infty}L_{\infty 0}^{(++)}(\theta_*,\theta_0,\epsilon \Lambda)=-i e^{\pi  b (\theta_*-\theta_0)} (i b \epsilon \Lambda)^{-\frac{1}{2}+i b (\theta_0-\theta_*)}(-i b \epsilon \Lambda)^{\frac{1}{2}-i b (\theta_0-\theta_*)}\frac{\Gamma (2 i b \theta_0)}{\Gamma \left(\frac{1}{2}+ib(\theta_0+\theta_*)\right)}, \\
& \lim\limits_{\Lambda \to + \infty}L_{\infty 0}^{(-+)}(\theta_*,\theta_0,\epsilon \Lambda)=-ie^{-\pi b(\theta_*+\theta_0)} \frac{\Gamma(2ib\theta_0)}{\Gamma(\frac12+ib(\theta_0-\theta_*))}.
\end{align}
Since $(-ib \Lambda)^x = e^{-i \pi x} (ib \Lambda)^x$, we infer that
\beq 
(i b \epsilon \Lambda)^{-\frac{1}{2}+i b (\theta_0-\theta_*)}(-i b \epsilon \Lambda)^{\frac{1}{2}-i b (\theta_0-\theta_*)} = \begin{cases} -ie^{\pi b(\theta_*-\theta_0)}, & \epsilon=+1, \\ ie^{-\pi b(\theta_*-\theta_0)}, & \epsilon=-1. \end{cases}
\eeq
Gathering the previous computations, it is straightforward to obtain the two limits
\beq \label{c32}
\lim_{\Lambda \to + \infty} \tilde{C}_{\infty 0}(\epsilon \Lambda) = \begin{cases} 
C_{3}, & \epsilon=+1, \\ C_{2}, & \epsilon=-1, 
\end{cases}
\eeq
where the connection matrices $C_n$ of the confluent BPZ equation are defined in \eqref{knm}. Moreover, the periodicity relation \eqref{periodcn0} implies that
\begin{align*}
&\lb e^{i\pi b^2} e^{2\pi b \theta_* \sigma_3}\rb^{j-1} C_3 \lb e^{-i \pi b Q}e^{-2\pi b \theta_0 \sigma_3} \rb^{j-1} = C_{2j+1},
	\\
& \lb e^{i\pi b^2} e^{2\pi b \theta_* \sigma_3}\rb^{j-1} C_2 \lb e^{-i \pi b Q}e^{-2\pi b \theta_0 \sigma_3} \rb^{j-1} = C_{2j},
\end{align*}
for any integer $j \geq 1$. Hence we have shown that
\beq \label{limitci0}
\lb e^{i\pi b^2} e^{2\pi b \theta_* \sigma_3}\rb^{j-1} \lim\limits_{\Lambda \to + \infty} \lb \tilde{C}_{\infty 0}(\epsilon \Lambda)\rb \lb e^{-i \pi b Q}e^{-2\pi b \theta_0 \sigma_3} \rb^{j-1} = \begin{cases} C_{2j+1}, & \epsilon=+1, \\  C_{2j}, & \epsilon=-1.\end{cases} 
\eeq
Substituting \eqref{limitci0} into \eqref{firstlimit}, we obtain
\beq
\lb e^{i\pi b^2} e^{2\pi b \theta_* \sigma_3}\rb^{j-1} \lim\limits_{\substack{\Lambda \to +\infty}}\tilde{\boldsymbol{ F}}^{\infty}\lb \frac{t e^{-2i\pi(j-1)}}{ib\epsilon\Lambda},\epsilon \Lambda\rb =\begin{cases}  C_{2j+1} \boldsymbol{B}(t), & \epsilon=+1,~t \in \Omega_{2j+1}^-, \\ C_{2j} \boldsymbol{B}(t), & \epsilon=-1,~t \in \Omega_{2j}^-. \end{cases}
\eeq
Since $\boldsymbol{D}_n(t)=C_n \boldsymbol{B}(t)$ for $t\in \Omega_n$ by \eqref{conni}, the confluent limit \eqref{confluencea} follows.

The proof of the second limit \eqref{confluenceb} involves a similar computation. Indeed, we have
\begin{equation} \label{ccc} \begin{split}
\lb e^{i\pi b^2} e^{2\pi b \theta_* \sigma_3}\rb^{j-1} & \tilde{\boldsymbol{F}}^{1}\lb \frac{t e^{-2i\pi(j-1)}}{ib\epsilon\Lambda},\epsilon \Lambda \rb = \\
& =  \lb e^{i\pi b^2} e^{2\pi b \theta_* \sigma_3}\rb^{j-1} \tilde{C}_{1 0}(\epsilon \Lambda) \lb e^{-i \pi b Q}e^{-2\pi b \theta_0 \sigma_3} \rb^{j-1} \boldsymbol{\tilde{F}}^{0}\left(\frac{t}{ib\epsilon\Lambda},\epsilon\Lambda\right). \end{split} \end{equation}
The renormalized connection matrix takes the form
$$\tilde{C}_{1 0}(\epsilon \Lambda) = \begin{pmatrix}  L_{1 0}^{(++)}(\theta_*,\theta_0,\epsilon \Lambda) & L_{1 0}^{(++)}(\theta_*,-\theta_0,\epsilon \Lambda) \\ L_{10}^{(-+)}(\theta_*,\theta_0,\epsilon \Lambda) & L_{1 0}^{(-+)}(\theta_*,-\theta_0,\epsilon \Lambda) \end{pmatrix},$$
where
$$L_{1 0}^{(++)}(\theta_*,\theta_0,\epsilon \Lambda)=(i b \Lambda  \epsilon )^{-\frac{1}{2}+i b (\theta_0-\theta_*)}\frac{\Gamma (2 i b \theta_0)}{\Gamma \left(\frac12+ib(\theta_0+\theta_*)\right)}\frac{\Gamma (1+ib \theta_*+ib \epsilon  \Lambda)}{\Gamma \left(\frac{1}{2}+ib \theta_0+ib \epsilon  \Lambda\right)},$$
$$L_{1 0}^{(-+)}(\theta_*,\theta_0,\epsilon \Lambda)=(i b \epsilon \Lambda)^{-\frac{1}{2}+i b (\theta_0+\theta_*)} \frac{\Gamma(2 i b \theta_0)}{\Gamma \left(\frac{1}{2}+ib(\theta_0-\theta_*)\right)} \frac{\Gamma (1-i b \theta_*-i b \epsilon  \Lambda)}{\Gamma \left(\frac{1}{2}+ib \theta_0-i b \epsilon  \Lambda\right)}.$$
Using the asymptotics \eqref{asympgamma}, a direct computation shows that 
$$\lim_{\Lambda \to +\infty} \tilde{C}_{1 0}(\epsilon \Lambda) = \begin{cases} C_2, & \epsilon=+1, \\  C_1, & \epsilon=-1. \end{cases}$$
Finally, observing that
\begin{align*}
&\lb e^{i\pi b^2} e^{2\pi b \theta_* \sigma_3}\rb^{j-1} C_{2} \lb e^{-i \pi b Q}e^{-2\pi b \theta_0 \sigma_3} \rb^{j-1} = C_{2j}, \\
& \lb e^{i\pi b^2} e^{2\pi b \theta_* \sigma_3}\rb^{j-1} C_{1} \lb e^{-i \pi b Q}e^{-2\pi b \theta_0 \sigma_3} \rb^{j-1} = C_{2j-1},
\end{align*}
for $j=1,2,\dots$, equation (\ref{confluenceb}) follows. 
\end{proof}

Proposition \ref{prop2p4} can be used to determine the Stokes matrices $S_n$ of the confluent BPZ equation defined by \eqref{stokesrelation}. In fact, consider the connection formula between the renormalized solution bases $\tilde{\boldsymbol{F}}^\infty$ and $\tilde{\boldsymbol{F}}^1$:
 
\begin{equation}\label{tildeFinftytildeF1}
\boldsymbol{\tilde{F}}^{\infty}\left(\frac{t e^{-2i\pi(j-1)}}{ib\epsilon\Lambda},\epsilon \Lambda \right) = \tilde{C}_{\infty 1}(\epsilon \Lambda) \boldsymbol{\tilde{F}}^{1}\left(\frac{t e^{-2i\pi(j-1)}}{ib\epsilon\Lambda},\epsilon \Lambda \right), \qquad \epsilon=\pm1.
\end{equation}
Introducing the prefactors appearing in \eqref{confluencea} and \eqref{confluenceb}, this relation can be rewritten as
\begin{equation} \label{connectioninf1}
\begin{split}
 \lb e^{i\pi b^2} e^{2\pi b \theta_* \sigma_3}\rb^{j-1} \boldsymbol{\tilde{F}}^{\infty}\left(\frac{t e^{-2i\pi(j-1)}}{ib\epsilon\Lambda},\epsilon \Lambda \right) = & \lb e^{i\pi b^2} e^{2\pi b \theta_* \sigma_3}\rb^{j-1} \tilde{C}_{\infty 1}(\epsilon \Lambda) \lb e^{i\pi b^2} e^{2\pi b \theta_* \sigma_3}\rb^{-j+1}
	 \\
& \times \lb e^{i\pi b^2} e^{2\pi b \theta_* \sigma_3}\rb^{j-1}\boldsymbol{\tilde{F}}^{1}\left(\frac{t e^{-2i\pi(j-1)}}{ib\epsilon\Lambda},\epsilon \Lambda \right),
\end{split}
\end{equation}
where
\begin{align*}
\tilde{C}_{\infty 1}(\epsilon \Lambda) = \left(
\begin{array}{cc}
 \frac{\Gamma (1+ib (\theta_*-\epsilon\Lambda )) \Gamma (-i b (\theta_*+\epsilon\Lambda ))}{\Gamma \left(\frac12+ib (\theta_0-\epsilon\Lambda )\right) \Gamma \left(\frac{1}{2}-i b (\theta_0+\epsilon\Lambda )\right)} & \frac{e^{\pi  b (\theta_*+\epsilon\Lambda )} (i b \epsilon\Lambda )^{-2 i b \theta_*} \Gamma (1+ib(\theta_*-\epsilon\Lambda )) \Gamma (i b (\theta_*+\epsilon\Lambda ))}{\Gamma \left(\frac{1}{2}-i b (\theta_0-\theta_*)\right) \Gamma \left(\frac12+ib (\theta_0+\theta_*)\right)} \\
 \frac{e^{-\pi b (\theta_*+\epsilon\Lambda )} (i b \epsilon\Lambda )^{2 i b \theta_*} \Gamma (-i b (\theta_*+\epsilon\Lambda )) \Gamma (1+ib (\epsilon\Lambda -\theta_*))}{\Gamma \left(\frac12+ib (\theta_0-\theta_*)\right) \Gamma \left(\frac{1}{2}-i b (\theta_0+\theta_*)\right)} & \frac{\Gamma (i b (\theta_*+\epsilon\Lambda )) \Gamma (1+ib (\epsilon\Lambda -\theta_*))}{\Gamma \left(\frac{1}{2}-i b (\theta_0-\epsilon\Lambda )\right) \Gamma \left(\frac12+ib (\theta_0+\epsilon\Lambda )\right)} \\
\end{array}
\right).
\end{align*}

Proposition \ref{prop2p4} implies that the confluent limit of the relation \eqref{connectioninf1} must lead to the formulas $\boldsymbol{D}_{2j+1}(t)=S_{2j} \boldsymbol{D}_{2j}(t)$ and $\boldsymbol{D}_{2j}(t)=S_{2j-1} \boldsymbol{D}_{2j-1}(t)$ for $\epsilon=+1$ and $\epsilon=-1$, respectively. 
Let us verify this explicitly. 
As $\Lambda \to +\infty$, a direct computation utilizing \eqref{asympgamma} yields
\begin{align}\label{tildeCinfty1}
\tilde{C}_{\infty 1}(\epsilon \Lambda) \sim \left(
\begin{array}{cc}
 1 & \frac{-2i\pi \epsilon ~ e^{\pi  b \theta_* (\epsilon +1)} e^{\pi  b \Lambda  (\epsilon -1)}}{\Gamma \left(\frac{1}{2}-i b (\theta_0-\theta_*)\right) \Gamma \left(\frac12+ib(\theta_0+\theta_*)\right)} \\
 \frac{2 i \pi \epsilon ~ e^{-\pi  b \theta_* (\epsilon +1)} e^{-\pi  b \Lambda  (\epsilon +1)}}{\Gamma \left(\frac12+ib (\theta_0-\theta_*)\right) \Gamma \left(\frac{1}{2}-i b (\theta_0+\theta_*)\right)} & 1 \\
\end{array}
\right),\qquad \epsilon=\pm1.
\end{align}
In each of the two cases $\epsilon=\pm1$, one of the off-diagonal entries on the right-hand side of (\ref{tildeCinfty1}) has exponential decay, and it is straightforward to obtain
\beq
\lim_{\Lambda \to \infty} \tilde{C}_{\infty 1}(\epsilon \Lambda) = \begin{cases} S_2, & \epsilon=+1, \\ S_1, & \epsilon=-1, \end{cases}
\eeq 
where $S_1$ and $S_2$ are given by \eqref{stokes}. Moreover, thanks to the periodicity relation \eqref{otherstokes}, 
\beq \begin{split}
& \lb e^{i\pi b^2} e^{2\pi b \theta_* \sigma_3}\rb^{j-1}S_2 \lb e^{i\pi b^2} e^{2\pi b \theta_* \sigma_3}\rb^{-j+1} = S_{2j}, \\
&  \lb e^{i\pi b^2} e^{2\pi b \theta_* \sigma_3}\rb^{j-1} S_1 \lb e^{i\pi b^2} e^{2\pi b \theta_* \sigma_3}\rb^{-j+1} = S_{2j-1},
\end{split}
\eeq
for any integer $j\geq1$. Recalling Propositions \ref{prop2p4} and noting that
\beq
\Omega_{2j+1}^- = \Omega_{2j}^+ = \Omega_{2j+1} \cap \Omega_{2j}, \qquad \Omega_{2j}^- = \Omega_{2j-1}^+ = \Omega_{2j} \cap \Omega_{2j-1},
\eeq 
it follows that the two relations 
\begin{align*}
& \boldsymbol{D}_{2j+1}(t)=S_{2j} \boldsymbol{D}_{2j}(t), \qquad t \in \Omega_{2j+1} \cap \Omega_{2j},
	\\
& \boldsymbol{D}_{2j}(t)=S_{2j-1} \boldsymbol{D}_{2j-1}(t), \qquad t \in \Omega_{2j} \cap \Omega_{2j-1},
\end{align*} 
are recovered by taking the limit $\Lambda \to +\infty$ of \eqref{tildeFinftytildeF1} for $\epsilon=+1$ and $\epsilon = -1$, respectively.

\section{Four-point Virasoro conformal blocks} \label{section3}
The remainder of this article will be devoted to generalizing the picture developed in Section \ref{section2} to the case of generic four-point Virasoro conformal blocks. We start by recalling their main properties.

The (regular) $4$-point conformal block is often represented in the literature by a trivalent graph encoding the expectation value of a composition of two primary vertex operators as follows:
 \ben
 \begin{tikzpicture}[baseline,yshift=-0.3cm,scale=0.8]
 \draw [thick] (-1,0) -- (2,0);
 \draw [thick] (0,0) -- (0,1);
 \draw [thick] (1,0) -- (1,1); 
 \draw (-0.7,0) node[above] {\scriptsize $\theta_\infty$};
 \draw (0.5,0) node[above] {\scriptsize $\sigma_s$};
 \draw (1.7,0) node[above] {\scriptsize $\theta_0$};
 \draw (0,1) node[left] {\scriptsize $\theta_1$};
 \draw (1,1) node[right] {\scriptsize $\theta_t$};
 \end{tikzpicture}
  \lb z\rb \equiv \mathcal F\lb\substack{\theta_{1}\;\quad 
  \theta_{t}\\ \sigma_s \\ \theta_{\infty}\quad \theta_0};z\rb: =\left\langle \Delta(\theta_\infty)\right|V_{\Delta(\theta_\infty),\Delta(\sigma_s)}^{\Delta(\theta_1)}\lb 1\rb V_{\Delta(\sigma_s),\Delta(\theta_0)}^{\Delta(\theta_t)}\lb z \rb\left|\Delta(\theta_0)\right\rangle.
 \ebn 
 It depends on the Virasoro central charge $c$, five conformal dimensions $\Delta\lb x\rb=\frac{c-1}{24}+x^2$ attached to the edges labeling highest weight modules, and the anharmonic ratio $t$ of four points on $\Cb\Pb^1$. The vertices of the graph represent chiral vertex operators \cite{T04}.
The series representation for conformal blocks was made explicit by the discovery of the AGT relation \cite{AGT} between two-dimensional conformal field theories and four-dimensional supersymmetric gauge theories. Denoting by $\mathbb Y$ the set of Young diagrams, the 4-point conformal block is expressed as \cite{AFLT,Nekrasov}
 \beq\label{CBseries}
 \mathcal F\lb\substack{\theta_{1}\;\quad 
      \theta_{t}\\ \sigma_s \\ \theta_{\infty}\quad \theta_0};z\rb=
      z^{\Delta(\sigma_s)-\Delta(\theta_0)-\Delta(\theta_t)}\lb 1-z\rb^{2\lb\theta_t-\frac{iQ}{2}\rb
      \lb\theta_1-\frac{iQ}{2}\rb}\sum_{\lambda,\mu\in\mathbb Y} \mathcal F_{\lambda,\mu}\lb\substack{\theta_{1}\;\quad 
            \theta_{t}\\ \sigma_s \\ \theta_{\infty}\quad \theta_0}\rb z^{|\lambda|+|\mu|},
 \eeq
 where $|\lambda|$ denotes the number of boxes in the diagram $\lambda\in\mathbb Y$.
 In order to write the coefficients $\mathcal F_{\lambda,\mu}$ explicitly, let $a_{\lambda}\lb \square\rb$ and $l_{\lambda}\lb \square\rb$ denote the arm-length and leg-length of the box $\square$ in $\lambda$. 
%Note that the box in question does not necessarily belong to $\lambda$, therefore $a_{\lambda}\lb \square\rb$ and/or $l_{\lambda}\lb \square\rb$ may be negative. 
Moreover, for  $\theta\in\mathbb C$ and $\lambda,\mu\in\mathbb Y$, introduce the Nekrasov functions $ Z_{\lambda,\mu}\lb \theta\rb$ by
 \begin{align}\nonumber
 Z_{\lambda,\mu}\lb \theta\rb = &\; \prod_{\square\in\lambda}
 \Bigl(i b^{-1}\lb a_{\lambda}\lb\square\rb+{\tfrac12}\rb-ib \lb l_{\mu}\lb \square\rb +\tfrac12\rb+\theta\Bigr)
 	\\ \label{nekrofun}
&\times \prod_{\square\in\mu} \Bigl( i b^{-1}\lb a_{\mu}\lb\square\rb+{\tfrac12}\rb-ib \lb l_{\lambda}\lb \square\rb +\tfrac12\rb-\theta\Bigr).
\end{align}
The expansion coefficients $\mathcal F_{\lambda,\mu}$ can then be expressed as  
$$\mathcal F_{\lambda,\mu}\lb\substack{\theta_{1}\;\quad 
             \theta_{t}\\ \sigma_s \\ \theta_{\infty}\quad \theta_0}\rb=\frac{\prod_{\epsilon=\pm}
 Z_{\emptyset,\lambda}\lb\epsilon\theta_0-\theta_t-\sigma_s\rb
 Z_{\emptyset,\mu}\lb\epsilon\theta_0-\theta_t+\sigma_s\rb         Z_{\lambda,\emptyset}\lb\epsilon\theta_{\infty}+\theta_1
 +\sigma_s\rb  Z_{\mu,\emptyset}\lb\epsilon\theta_{\infty}+\theta_1-\sigma_s\rb
             }{Z_{\lambda,\lambda}\lb -\tfrac{iQ}{2}\rb  
             Z_{\mu,\mu}\lb -\tfrac{iQ}{2}\rb Z_{\lambda,\mu}\lb -\tfrac{iQ}{2}+2\sigma_s\rb Z_{\mu,\lambda}\lb -\tfrac{iQ}{2}-2\sigma_s\rb}.
$$
The series in \eqref{CBseries} is believed to be convergent inside the unit disk $|z|<1$. Another hypothesis is that the only singularities of the conformal blocks as a function of $z$ are branch points at $0,1,\infty$ \cite{Zam87,HJP}. Under this assumption, conformal blocks are naturally defined for $z \in \mathbb{C} \setminus((-\infty,0 ] \cup [1,\infty))$. Moreover, the conformal blocks are believed to be analytic in the external dimensions $\theta_p$, $p=0,t,1,\infty$, and meromorphic in the internal momentum $\sigma$, with the only possible poles located at $\pm \sigma_s^{(m,n)}=-\frac{i}{2}(m b+n b^{-1})$ for $m,n \in \mathbb{Z}_{>0}$, where $b$ appears in the parametrisation (\ref{cQdef}) of the Liouville central charge.
The analytic continuation of the conformal blocks around the singular point $z=0$ can be found from the expansion \eqref{CBseries}:
\beq \label{continuationcb}
\mathcal F\lb\substack{\theta_{1}\;\quad 
      \theta_{t}\\ \sigma_s \\ \theta_{\infty}\quad \theta_0};ze^{\pm 2i\pi}\rb=e^{\pm 2i\pi(\Delta(\sigma_s)-\Delta(\theta_0)-\Delta(\theta_t))}\mathcal F\lb\substack{\theta_{1}\;\quad 
      \theta_{t}\\ \sigma_s \\ \theta_{\infty}\quad \theta_0};z\rb.
\eeq

\subsection{Crossing transformations}
The linear span of conformal blocks forms an infinite-dimensional representation of $\Gamma(\Sigma_{0,4})=\text{PSL}_2(\mathbb{Z})$, the mapping class group of the four-puncture Riemann sphere. It is generated by the braiding $B$ and fusion $F$ moves, such that $F^2=(BF)^3=1$. The three ways of splitting four points on $\mathbb{C}\mathbb{P}^1$ into two pairs define the $s$-, $t$-, and $u$-channel bases for the space of conformal blocks. The cross-ratio argument of conformal blocks in these channels are chosen from $\{z,\frac{z}{z-1}\}$,$\{1-z,\frac{z-1}{z}\}$, and $\{\frac{1}{z},\frac{1}{1-z}\}$, respectively. %This is analogous to choosing pairs of fundamental solutions of the hypergeometric Gauss equation in the vicinity of its singular points. 
The braiding move B acts on the $s$-channel conformal blocks as follows:
$$
 \mathcal F\lb\substack{\theta_{1}\;\quad 
      \theta_{t}\\ \sigma_s \\ \theta_{\infty}\quad \theta_0};z\rb=e^{\pm i\pi \lb \Delta(\sigma_s)-\Delta(\theta_0)-\Delta(\theta_t)\rb} \lb 1-z\rb^{-2\Delta(\theta_t)} \mathcal F\lb\substack{\theta_{\infty}\quad 
                \theta_{t}\\ \sigma_s \\ \theta_{1}\;\quad \theta_0};\frac{z}{z-1}\rb,\quad \im z \gtrless 0.
$$
%$$\mathcal{F}\left(\begin{matrix}\alpha_3,\alpha_2\\ \alpha_4,\alpha_1
%\end{matrix};\alpha_s,t\right)=e^{\pm i \pi(\Delta(\alpha_s)-\Delta(\alpha_1)-\Delta(\alpha_2))}(1-t)^{-2\Delta(\alpha_2)}\mathcal{F}\left(\begin{matrix}\alpha_4,\alpha_2\\ \alpha_3,\alpha_1
%\end{matrix};\alpha_s,\frac{t}{t-1}\right), \qquad \mathcal{I}t \lessgtr 0.$$
On the other hand, the fusion move $F$ is represented by the integral transform
 \beq \label{s-t}
 \qquad\mathcal F\lb\substack{\theta_{1}\;\quad 
     \theta_{t}\\ \sigma_s \\ \theta_{\infty}\quad \theta_0};z\rb=\int_{\mathbb R_+}d\sigma_t ~ 
  F\left[\substack{\theta_1\;\;\;\theta_t\vspace{0.1cm}\\ \theta_{\infty}\;\;\theta_0};\substack{\sigma_s\vspace{0.15cm} \\  \sigma_t}\right]    
     \mathcal F\lb\substack{\theta_{0}\;\quad 
          \theta_{t}\\ \sigma_t \\ \theta_{\infty}\quad \theta_1};1-z\rb.
 \eeq
The kernel of this transformation, called the \textit{Virasoro fusion kernel}, has been related to Racah-Wigner coefficients for a continuous series of representations of $\mathcal{U}_q(sl_2)$, and can be expressed as \cite{pt1,pt2}
\begin{equation}\label{fusion01}
\begin{split}
F\left[\substack{\theta_1\;\;\;\theta_t\vspace{0.1cm}\\ \theta_{\infty}\;\;\theta_0};\substack{\sigma_s \vspace{0.15cm} \\  \sigma_t}\right] = &  \prod_{\epsilon,\epsilon'=\pm1} \frac{g_b \lb \epsilon \theta_1+\theta_{t}+\epsilon' \sigma_t\rb g_b \lb \epsilon \theta_0-\theta_\infty+\epsilon' \sigma_t \rb}{g_b \lb \epsilon \theta_0 + \theta_t + \epsilon' \sigma_s \rb g_b \lb \epsilon \theta_1-\theta_\infty+\epsilon' \sigma_s \rb} \prod_{\epsilon=\pm1} \frac{g_b \lb \frac{iQ}2+2\epsilon \sigma_s\rb}{g_b \lb -\frac{iQ}2+2\epsilon \sigma_t \rb}
	\\
& \times \int_{\mathsf{F}} dx~\prod_{\epsilon=\pm1} \frac{s_b \lb x+ \epsilon \theta_1 \rb s_b \lb x+\epsilon\theta_0+\theta_\infty+\theta_t \rb}{s_b \lb x+\frac{iQ}{2}+\theta_\infty+\epsilon \sigma_s \rb s_b \lb x+\frac{i Q}{2}+\theta_t+\epsilon \sigma_t \rb},
\end{split}
\end{equation}
where the special functions $g_b(x)$ and $s_b(x)$ are defined in the appendix. For $0<b\leq1$, the integrand in (\ref{fusion01}) has eight vertical semi-infinite sequences of poles, four of them increasing and the other four decreasing; the contour of integration $\mathsf{F}$ runs from $-\infty$ to $+\infty$, separating the upper and lower sequences of poles. When the conformal dimensions are real and positive (which is Liouville's spectrum), the contour of integration lies in the strip $\text{Im}~x\in ]-\tfrac{iQ}2,0[$. More generally, the fusion kernel \eqref{fusion01} can be extended to a meromorphic function of all of its parameters provided that $c \in \mathbb{C} \setminus \mathbb{R}_{\leq1}$, which corresponds to $b\notin i \mathbb{R}$. 

Finally, further crossing transformations can be obtained by composing braiding and fusion moves. Let us fix $0<\arg{z}<\pi$. The first two crossing transformations of interest are
\begin{subequations}\label{modulartransfos}
\begin{align}\label{modulartransfosa}
&  \mathcal F\lb\substack{\theta_{1}\;\quad 
      \theta_{t}\\ \sigma_u \\ \theta_0 \quad \theta_\infty};\frac1z\rb=\int_0^{+\infty} d\sigma_s ~ e^{i\pi \lb \Delta(\theta_0)+\Delta(\theta_\infty) -\Delta(\sigma_s)-\Delta(\sigma_u)\rb} ~ F\left[\substack{\theta_0\;\;\;\theta_{t} \vspace{0.1cm} \\ \theta_1\;\;\theta_\infty}
;\substack{\sigma_u \vspace{0.15cm} \\ \sigma_s} \right]  z^{2 \Delta(\theta_t)} \mathcal F\lb\substack{\theta_{1}\;\quad 
\theta_{t}\\ \sigma_s \\ \theta_{\infty}\quad \theta_0};z\rb,	
	\\ \label{modulartransfosb}
& \mathcal F\lb\substack{\theta_\infty\;\quad 
     \theta_{t}\\ \sigma_t \\ \theta_0\quad \theta_1}; 1-\frac1z \rb=\int_0^{+\infty}d\sigma_s ~e^{i\pi\lb \Delta(\sigma_t)-\Delta(\theta_1)-\Delta(\theta_t) \rb} ~
~ F\left[\substack{\theta_0\;\;\;\theta_t\vspace{0.1cm}\\ \theta_\infty\;\;\theta_1};\substack{\sigma_t\vspace{0.15cm} \\  \sigma_s}\right]  ~ z^{2 \Delta(\theta_t)} \ \mathcal F\lb\substack{\theta_1\;\quad 
          \theta_t\\ \sigma_s \\ \theta_{\infty}\quad \theta_0};z\rb. 
\end{align}
\end{subequations}
Such transformations can be seen as infinite-dimensional analogs of the connection formulas for the BPZ equation given by the connection matrices \eqref{connectioni0} and \eqref{connection10}, respectively. In Section \ref{section2}, a suitable confluent limit of these matrices allowed us to recover the solutions of the confluent BPZ equation normalized at $t=\infty$ in any Stokes sector. We will adopt a similar approach to construct the confluent conformal blocks of the second kind in any Stokes sector. The last crossing transformation that we will use is
\beq \label{modulartransfoi1} \mathcal F\lb\substack{\theta_{1}\;\quad 
      \theta_{t}\\ \sigma_u \\ \theta_0 \quad \theta_\infty};\frac1z\rb= \int_0^{+\infty} d\sigma_t ~ F\left[\substack{\theta_1\;\;\;\theta_t\vspace{0.1cm}\\ \theta_0\;\;\theta_\infty};\substack{\sigma_u\vspace{0.15cm} \\  \sigma_t}\right] \mathcal F\lb\substack{\theta_\infty\;\quad 
     \theta_{t}\\ \sigma_t \\ \theta_0\quad \theta_1}; 1-\frac1z\rb. \eeq
This is the analog of the connection formula for the BPZ equation given by the connection matrix \eqref{connectioni1}. In Section \ref{section2}, we recovered the Stokes matrices of the confluent BPZ equation by taking appropriate confluent limits of this connection matrix. We will use a similar approach to find the Stokes transformations acting on the confluent conformal blocks of the second kind.
%We recall some properties of the confluent conformal blocks of the first kind that were established in \cite{LNR}. 

\section{Confluent conformal blocks of the first kind} \label{section4}
The confluent conformal blocks of the first kind emerge as limits of the $s$-channel Virasoro conformal blocks. This confluent limit is analogous to the confluence of the hypergeometric function ${}_2F_1$ to ${}_1F_1$. Utilizing the change of variables (\ref{confluence}), the confluent conformal blocks of the first kind $\mathcal{B}(t)$ can be expressed as \cite{LNR}
\begin{equation}\label{courtedistance}
\mathcal{B}(t) \equiv \mathcal B\lb \theta_*;\sigma_s;\substack{\theta_t \\ \theta_0};t\rb= \lim_{\Lambda \to \infty}\mathcal{N}_0(\Lambda) ~  \mathcal F\lb\substack{\frac{\Lambda+\theta_*}{2}\quad 
              \theta_{t}\\ \quad\sigma_s \\ \frac{\Lambda-\theta_*}{2}\quad \theta_0};\frac{t}{ib\Lambda}\rb,
\end{equation}
where
\beq \label{qs}
\mathcal{N}_0(\Lambda) = \lb ib\Lambda \rb^{-\Delta(\theta_0)-\Delta(\theta_t)+\Delta(\sigma_s)}.
 \eeq
 Equation (\ref{courtedistance}) is an infinite-dimensional generalization of equation (\ref{limit0}).
The power series expansion of $ \mathcal B(t)$ can be found by substituting the series representation (\ref{CBseries}) for $\mathcal{F}$ into (\ref{courtedistance}) and computing the limit termwise. The result reads \cite{LNR}
 \beq\label{confCB1series}
 \begin{split}
 \mathcal B\lb \theta_*;\sigma_s;\substack{\theta_t \\ \theta_0};t\rb= t^{\Delta(\sigma_s)-\Delta(\theta_0)-\Delta(\theta_t)}e^{-\frac{t}{ib}\lb\theta_t-\frac{iQ}{2}\rb}\sum_{\lambda,\mu\in\mathbb Y}
 \mathcal B_{\lambda,\mu}\lb \theta_*;\sigma_s;\substack{\theta_t \\ \theta_0}\rb
\lb\frac{t}{ib}\rb^{|\lambda|+|\mu|},
\end{split}
 \eeq
where the coefficients $\mathcal B_{\lambda,\mu}$ are given by
 \ben
 \mathcal B_{\lambda,\mu}\lb \theta_*;\sigma_s;\substack{\theta_t \\ \theta_0}\rb=\frac{
  Z_{\lambda,\emptyset}\lb\theta_*+\sigma_s\rb  Z_{\mu,\emptyset}\lb\theta_*-\sigma_s\rb
 \prod_{\epsilon=\pm}
  Z_{\emptyset,\lambda}\lb\epsilon\theta_0-\theta_t-\sigma_s\rb
  Z_{\emptyset,\mu}\lb\epsilon\theta_0-\theta_t+\sigma_s\rb        
              }{Z_{\lambda,\lambda}\lb -\tfrac{iQ}{2}\rb  
              Z_{\mu,\mu}\lb -\tfrac{iQ}{2}\rb Z_{\lambda,\mu}\lb -\tfrac{iQ}{2}+2\sigma_s\rb Z_{\mu,\lambda}\lb -\tfrac{iQ}{2}-2\sigma_s\rb},
  \ebn
  with $Z_{\lambda,\mu}\lb \theta\rb$ defined by (\ref{nekrofun}). It follows from the series representation \eqref{confCB1series} that $\mathcal{B}$ diagonalizes the monodromy operator at $t = 0$:
\beq \label{continuationcb1}
 \mathcal B\lb \theta_*;\sigma_s;\substack{\theta_t \\ \theta_0};te^{-2ik\pi}\rb = e^{-2i\pi k \lb \Delta(\sigma_s)-\Delta(\theta_0)-\Delta(\theta_t) \rb}\mathcal B\lb \theta_*;\sigma_s;\substack{\theta_t \\ \theta_0};t\rb, \qquad k \in \mathbb{Z}.
\eeq

\section{Main results}\label{section5}
Before stating our two main results, Theorem \ref{mainth1} and Theorem \ref{mainth2}, we need to make some assumptions and define the confluent conformal blocks of the second kind. 

\subsection{Assumptions}
In the remainder of this article, we make the following two assumptions. 

\begin{assumption}[Analyticity of $\mathcal{B}(t)$] \label{assumption1}
We assume that the sum over $\lambda, \mu \in \mathbb{Y}$ in \eqref{confCB1series} converges and defines an entire function of $t$, which is furthermore analytic in the parameters $\theta_0, \theta_t,\theta_*$, and meromorphic in $\sigma_s$ except for possible poles located at $\sigma = \pm \sigma_s^{(m,n)}$ for $m,n \in \mathbb{Z}_{>0}$, where $\sigma_s^{(m,n)} = - \frac{i}{2}(m b+n b^{-1})$. 
\end{assumption}
 
\begin{assumption}[Restrictions on the parameters] \label{paramassume}
We assume that
\beq
0 < b< 1, \quad \theta_0 \geq 0, \quad\theta_* \in \mathbb{R}.
\eeq
\end{assumption}

Assumption \ref{assumption1} is believed to be true \cite{LNR}.
%Evidence in support of Assumption \ref{assumption1} was put forward in \cite{LNR}.
Assumption \ref{paramassume} is made primarily for simplicity; we expect all our results to admit analytic continuations to more general values of the parameters, such as $b \in \mathbb{C} \setminus i \mathbb{R}$ and $(\theta_0,\theta_*) \in \mathbb{C}^2$.

\subsection{Confluent conformal blocks of the second kind}
Recall that Proposition \ref{prop2p4} provides a construction of the solution bases $\boldsymbol{D}_n(t)$ of the confluent BPZ equation in any Stokes sector $t \in \Omega_n$ from suitable confluent limits of the renormalized solutions bases $\tilde{\boldsymbol{F}}^\infty(t)$ and $\tilde{\boldsymbol{F}}^1(t)$ of the hypergeometric BPZ equation. We will define the confluent conformal blocks of the second kind by generalizing Proposition \ref{prop2p4} to the nondegenerate case. First, we define the renormalized four-point conformal blocks $\mathcal{\tilde{F}}^0$, $\mathcal{\tilde{F}}^1$, and $\mathcal{\tilde{F}}^\infty$ by 
\beq \label{renormu} \left\{\begin{split}
& \mathcal{\tilde{F}}^0\lb z,\Lambda,\sigma_s\rb = \mathcal{N}_0(\Lambda) ~ \mathcal F\lb\substack{\frac{\Lambda+\theta_*}{2}\;\quad 
      \theta_{t}\\ \quad \sigma_s \\ \frac{\Lambda-\theta_*}{2}\quad \theta_0};z\rb,
          \\
& \mathcal{\tilde{F}}^1\lb z,\Lambda,\nu\rb = z^{-2\Delta(\theta_t)} \mathcal{N}_\infty\lb \Lambda,\nu \rb \mathcal{F}\lb\substack{\frac{\Lambda-\theta_*}{2}\;\;\qquad 
               \theta_{t}\;\;\\  \frac{\Lambda}{2}-\nu \\ 
               \;\;\;\theta_0\qquad \frac{\Lambda+\theta_*}{2}};1-\frac{1}{z} \rb,
	\\
&\mathcal{\tilde{F}}^\infty \lb z,\Lambda,\nu\rb = z^{-2\Delta(\theta_t)} \mathcal{N}_\infty\lb\Lambda,\nu\rb ~ \mathcal{F}\lb\substack{\frac{\Lambda+\theta_*}{2}\;\;\qquad 
               \theta_{t}\;\;\\  \frac{\Lambda}{2}+\nu \\ 
               \;\;\;\theta_0\qquad \frac{\Lambda-\theta_*}{2}};\frac{1}{z} \rb, 
\end{split}\right. 
\eeq
where the normalization factor $\mathcal{N}_0$ is given by \eqref{qs} and the normalization factor $\mathcal{N}_\infty$ is defined by
\beq \label{qu}
\mathcal{N}_\infty(\Lambda,\nu)=e^{i\pi \lb \Delta(\theta_t)+\frac{\theta_*^2}{4}-\nu^2+\Lambda\lb \frac{\theta_*}{2}+\nu\rb \rb} (ib\Lambda)^{\frac{\theta_*^2}{2}-2\nu^2}.
\eeq
%In what follows, we require that $0<\arg{z}<\pi$. 
We can now define the confluent conformal blocks of the second kind.
\begin{definition}[Confluent conformal blocks of the second kind] The confluent conformal blocks of the second kind $\mathcal D_{n}\lb\substack{\theta_t\\ \theta_*};\nu;\theta_0;t\rb$ are defined by the following confluent limits for any integer $j\geq1$:
\begin{subequations}\label{calDndef}
\begin{align}
& \label{confluentlimitu} e^{2i\pi(j-1)\lb \frac{\theta_*^2}{2}-2\nu^2\rb} \lim\limits_{\substack{\Lambda \to +\infty}} \mathcal{\tilde{F}}^\infty \lb \frac{te^{-2i\pi (j-1)}}{ib\epsilon \Lambda}, \epsilon \Lambda,\nu\rb = \begin{cases} \mathcal D_{2j+1}\lb\substack{\theta_t\\ \theta_*};\nu;\theta_0;t\rb, & t \in \Omega_{2j+1}^-,~ \epsilon=+1,\\ \mathcal D_{2j}\lb\substack{\theta_t\\ \theta_*};\nu;\theta_0;t\rb, & t \in \Omega_{2j}^-,~ \epsilon=-1,\end{cases} \\
&\label{confluentlimitt} e^{2i\pi(j-1)\lb \frac{\theta_*^2}{2}-2\nu^2\rb} \lim\limits_{\substack{\Lambda \to +\infty}} \mathcal{\tilde{F}}^1\lb \frac{te^{-2i\pi (j-1)}}{ib\epsilon\Lambda}, \epsilon \Lambda,\nu\rb = \begin{cases} \mathcal D_{2j}\lb\substack{\theta_t\\ \theta_*};\nu;\theta_0;t\rb, & t \in \Omega_{2j}^+,~\epsilon=+1,  \\ \mathcal D_{2j-1}\lb\substack{\theta_t\\ \theta_*};\nu;\theta_0;t\rb, & t \in \Omega_{2j-1}^+,~\epsilon=-1,\end{cases}
\end{align}
\end{subequations}
where the ``half'' Stokes sectors $\Omega_n^\pm$ are defined in \eqref{halfstokes} and the renormalized conformal blocks $\mathcal{\tilde{F}}^\infty$ and $\mathcal{\tilde{F}}^1$ are defined in \eqref{renormu}. \end{definition}

\subsection{First main result} 
Our first main result provides an explicit integral representation for the confluent conformal blocks of the second kind $\mathcal{D}_n$ in terms of $\mathcal{B}$.

Let $g_b(z)$ and $s_b(z)$ be the special functions defined in the appendix.
We define the kernel $\mathcal{C}_n$ for any integer $n \geq 1$ by
\begin{equation}\label{gnm}
\begin{split}
\mathcal{C}_n\left[\substack{\theta_t\vspace{0.08cm} \\ \theta_{*}\;\;\;\theta_0};\substack{\nu\vspace{0.15cm} \\  \sigma_s}\right]  = P^{(n)}\left[\substack{\theta_t\vspace{0.08cm} \\ \theta_{*}\;\;\;\theta_0};\substack{\nu\vspace{0.15cm} \\  \sigma_s}\right] \displaystyle \int_{\mathsf{C}} dx ~ I^{(n)}\left[\substack{x\;\;\;\theta_t\vspace{0.08cm} \\ \theta_{*}\;\;\;\theta_0};\substack{\nu \vspace{0.15cm} \\  \sigma_s}\right]\end{split}, 
\end{equation}
where the prefactor $P^{(n)}$ is defined by\footnote{Here and in what follows complex powers are defined on the universal cover of $\mathbb{C}\setminus \{0\}$, i.e., $$(e^{2i\pi(\lfloor \frac{n}2 \rfloor-\frac12)}b)^\alpha = b^\alpha e^{2i\pi \alpha (\lfloor \frac{n}2 \rfloor-\frac12)}.$$}
\beq \label{P} \begin{split}
P^{(n)}\left[\substack{\theta_t\vspace{0.08cm} \\ \theta_{*}\;\;\;\theta_0};\substack{\nu\vspace{0.15cm} \\  \sigma_s}\right] = & \left(e^{2i\pi(\lfloor \frac{n}2 \rfloor-\frac12)}b\right)^{\Delta(\theta_0)+\Delta(\theta_t)-\Delta(\sigma_s)+\frac{\theta_*^2}2-2\nu^2}  
	\\
& \times  \displaystyle \prod_{\epsilon=\pm}\frac{g_b\lb \epsilon \sigma_s-\theta_* \rb g_b \lb \epsilon \sigma_s-\theta_0-\theta_t\rb g_b \lb \epsilon \sigma_s+\theta_0-\theta_t \rb}{g_b \lb -\frac{iQ}{2}+2\epsilon \sigma_s \rb g_b\left(\nu-\frac{\theta_*}{2}+\epsilon \theta_0\right)g_b\left(-\theta_t+\epsilon(\nu+\frac{\theta_*}{2})\right)},
\end{split} \eeq
and the integrand $I^{(n)}$ is given by
\beq \label{I}
I^{(n)}\left[\substack{x\;\;\;\theta_t\vspace{0.08cm} \\ \theta_{*}\;\;\;\theta_0};\substack{\nu\vspace{0.15cm} \\  \sigma_s}\right] = e^{(-1)^{n+1}i\pi x\lb \frac{iQ}{2}+\frac{\theta_*}{2}+\theta_t+\nu \rb}\frac{s_b \lb x+\frac{\theta_*}{2}+\nu-\theta_t \rb}{s_b \lb x+ \frac{i Q}{2} \rb } \prod_{\epsilon=\pm} \frac{s_b \lb x+\epsilon \theta_0 +\nu-\frac{\theta_*}{2}\rb }{s_b \lb x+ \frac{iQ}{2}+\nu-\frac{\theta_*}{2}-\theta_t + \epsilon \sigma_s \rb}.
\eeq
The integration contour $\mathsf{C}$ in (\ref{gnm}) is defined as follows. In view of (\ref{polesb}), the numerator in (\ref{I}) has three decreasing semi-infinite sequences of poles, while the denominator has three increasing semi-infinite sequences of zeros. The contour $\mathsf{C}$ in (\ref{gnm}) is any curve from $-\infty$ and $+\infty$ which separates the increasing from the decreasing sequences. For example, if $(\theta_t,\sigma_s, \nu) \in \mathbb{R}^3$, Assumption \eqref{paramassume} implies that the decreasing sequences start at points on the horizontal line $\im x = -\tfrac{iQ}2$, whereas the increasing sequences start at points on the real axis $\im x = 0$. Thus, in this case the contour $\mathsf{C}$ can be any horizontal line in the strip $\im x \in ]-\frac{iQ}{2},0[$. 
More generally, the function \eqref{gnm} extends to a meromorphic function of $(\theta_0,\theta_t,\theta_*,\nu,\sigma_s)$ provided that $b\notin i \mathbb{R}$. 

The following theorem is our first main result. It describes how the confluent conformal blocks of the second kind ${\mathcal D}_n$ can be constructed from $\mathcal{B}$.  

\begin{theorem}[Construction of the confluent conformal blocks of the second kind]\label{mainth1} 
Let $\Omega_n$ denote the Stokes sectors defined in (\ref{stokessector}).
The confluent conformal blocks of the second kind ${\mathcal D}_{n}$ admit the following integral representations:
\begin{equation}\label{dk}\begin{split}
& \mathcal{D}_{n}\lb\substack{\theta_t\\ \theta_*};\nu;\theta_0;t\rb
= \int_0^{+\infty} d\sigma_s~\mathcal{C}_{n}\left[\substack{\theta_t\vspace{0.08cm} \\ \theta_{*}\;\;\;\theta_0};\substack{\nu\vspace{0.15cm} \\  \sigma_s}\right]  \mathcal B\lb \theta_*;\sigma_s;\substack{\theta_t \\ \theta_0};t\rb, \qquad t \in \Omega_{n}, ~ n = 1,2,\dots.
\end{split}\end{equation}
\end{theorem}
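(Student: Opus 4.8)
The plan is to generalize Proposition~\ref{prop2p4} to the nondegenerate setting, with the connection matrices $C_{\infty 0}$ and $C_{10}$ of the hypergeometric BPZ equation replaced by the crossing transformations \eqref{modulartransfosa} and \eqref{modulartransfosb}, and the solution basis $\boldsymbol{B}(t)$ replaced by $\mathcal{B}(t)$. Concretely, I would substitute the confluence parametrization $\theta_1 = \tfrac{\Lambda+\theta_*}2$, $\theta_\infty = \tfrac{\Lambda-\theta_*}2$, $\sigma_u = \tfrac{\Lambda}2 + \nu$ (resp.\ $\sigma_t = \tfrac\Lambda2 - \nu$) and $z = \tfrac{t e^{-2i\pi(j-1)}}{ib\epsilon\Lambda}$ into \eqref{modulartransfosa} (resp.\ \eqref{modulartransfosb}). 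After this substitution the powers $z^{\pm 2\Delta(\theta_t)}$ occurring in \eqref{renormu} and in the crossing transformation cancel, and the rotation $z\mapsto z e^{-2i\pi(j-1)}$ can be extracted from the $s$-channel block by the monodromy relation \eqref{continuationcb}. Multiplying by the prefactor $e^{2i\pi(j-1)(\frac{\theta_*^2}2 - 2\nu^2)}$ from \eqref{calDndef} then rewrites \eqref{modulartransfosa} in the form
\begin{align*}
e^{2i\pi(j-1)\lb \frac{\theta_*^2}{2}-2\nu^2\rb}\,\tilde{\mathcal{F}}^\infty\lb \tfrac{t e^{-2i\pi(j-1)}}{ib\epsilon\Lambda},\epsilon\Lambda,\nu\rb
= \int_0^{+\infty} d\sigma_s\; \mathcal{K}_{\Lambda,\epsilon,j}\left[\substack{\theta_t \\ \theta_* \;\; \theta_0};\substack{\nu \\ \sigma_s}\right] \tilde{\mathcal{F}}^0\lb \tfrac{t}{ib\epsilon\Lambda},\epsilon\Lambda,\sigma_s\rb,
\end{align*}
where $\mathcal{K}_{\Lambda,\epsilon,j}$ is the product of the (scalar) factors $\mathcal{N}_\infty(\epsilon\Lambda,\nu)\mathcal{N}_0(\epsilon\Lambda)^{-1}$, the explicit phases coming from the crossing transformation and from \eqref{continuationcb}, and the Virasoro fusion kernel \eqref{fusion01} evaluated at the confluent arguments; an analogous identity follows from \eqref{modulartransfosb}.

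Next I would let $\Lambda \to +\infty$. On the right-hand side $\tilde{\mathcal{F}}^0(\tfrac{t}{ib\epsilon\Lambda},\epsilon\Lambda,\sigma_s) \to \mathcal{B}(\theta_*;\sigma_s;\substack{\theta_t \\ \theta_0};t)$ by \eqref{courtedistance}, the nondegenerate analog of \eqref{limit0}; on the left-hand side the definition \eqref{calDndef} identifies the limit as $\mathcal{D}_n$, with $n = 2j+1$ for $\epsilon=+1$ and $n=2j$ for $\epsilon=-1$ in the $\tilde{\mathcal{F}}^\infty$ case (and $n = 2j$, $n = 2j-1$ in the $\tilde{\mathcal{F}}^1$ case). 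It then remains to prove that $\mathcal{K}_{\Lambda,\epsilon,j} \to \mathcal{C}_n$ and that the limit may be taken inside the $\sigma_s$-integral. Granting these, \eqref{dk} holds on whichever half-sectors $\Omega_n^\pm$ are reached by \eqref{calDndef}; since $\mathcal{B}(t)$, and hence the right-hand side of \eqref{dk}, is analytic in $t$ on the universal cover of $\mathbb{C}\setminus\{0\}$ by Assumption~\ref{assumption1}, while $\mathcal{D}_n$ is analytic on all of $\Omega_n$, the identity propagates to the full sector $\Omega_n$; this in particular takes care of $n=1$, where only $\Omega_1^+$ is directly produced. Letting $\epsilon = \pm 1$ and $j \geq 1$ range over all admissible values covers every Stokes sector $\Omega_n$ with $n \geq 1$.

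The hard part is the confluent limit of the fusion kernel. Writing \eqref{fusion01} at the confluent arguments, I would separate the $g_b$-factors, and the $s_b$-factors in the internal $x$-integrand, into those whose arguments remain bounded and those whose arguments grow linearly in $\Lambda$, and apply the large-argument asymptotics of $g_b$ and $s_b$ (the analog of the Stirling-type estimate \eqref{asympgamma} used in Proposition~\ref{prop2p4}). The bounded factors, after using the reflection and shift identities for $g_b$ and $s_b$, should reproduce precisely the bounded $g_b$-ratios of $P^{(n)}$ in \eqref{P} and the bounded $s_b$-ratios of $I^{(n)}$ in \eqref{I}. The growing factors contribute powers of $ib\epsilon\Lambda$ and exponentials in $\Lambda$, and the core computation is to check that these cancel exactly against $\mathcal{N}_\infty(\epsilon\Lambda,\nu)\mathcal{N}_0(\epsilon\Lambda)^{-1}$ and the explicit phases, leaving only the $n$-dependent power $(e^{2i\pi(\lfloor n/2\rfloor - 1/2)}b)^{\Delta(\theta_0)+\Delta(\theta_t)-\Delta(\sigma_s)+\theta_*^2/2-2\nu^2}$ in $P^{(n)}$ and the exponential $e^{(-1)^{n+1}i\pi x(\frac{iQ}2 + \frac{\theta_*}2 + \theta_t + \nu)}$ in $I^{(n)}$, the latter arising from the asymptotics of the $\Lambda$-dependent $s_b$-factors inside the $x$-integral. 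One must also track the contour: the pole configuration of \eqref{fusion01} depends on $\Lambda$, and as $\Lambda \to \infty$ the $\Lambda$-dependent poles escape to infinity, so $\mathsf{F}$ limits to the contour $\mathsf{C}$ of \eqref{gnm}. Finally, interchanging the limit with the $\sigma_s$-integration requires uniform-in-$\Lambda$ decay bounds on the integrand, which I expect to follow from the same asymptotics together with Assumption~\ref{paramassume}; making these estimates rigorous is likely the most delicate step of the argument.
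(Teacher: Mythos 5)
Your proposal follows essentially the same route as the paper's proof: substitute the confluence parametrization into the crossing transformations \eqref{modulartransfosa} and \eqref{modulartransfosb}, extract the rotation via \eqref{continuationcb}, identify the right-hand side limit through \eqref{courtedistance} and the left-hand side through \eqref{calDndef}, and evaluate the confluent limit of the fusion kernel by splitting the $g_b$- and $s_b$-factors into bounded and $\Lambda$-growing parts (after the same contour shift $x \to x+\nu-\tfrac{\theta_*}2$ and $\theta_t\to-\theta_t$ symmetry) and applying the asymptotics \eqref{asympgammab} and \eqref{asympsb}. The only differences are presentational: you flag the interchange of limit and $\sigma_s$-integration and the propagation from half-sectors to full sectors explicitly, points the paper treats formally, while the paper carries out the phase bookkeeping in full detail.
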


The proof of Theorem \ref{mainth1} consists of computing suitable confluent limits of the crossing transformations \eqref{modulartransfos} and is presented in Section \ref{proof1subsec}. The representation (\ref{dk}) is an infinite-dimensional analog of the connection formulas \eqref{conni} relating the bases of solutions $\boldsymbol{D}_n(t)$ and $\boldsymbol{B}(t)$ of the confluent BPZ equation.

\subsection{Second main result}
We define the Stokes kernel $\mathcal{S}_n$ for any integer $n \geq 1$ by
\beq \label{stokessn} \begin{split}
&\mathcal{S}_{n}\left[\substack{\theta_t\vspace{0.08cm} \\ \theta_{*}\;\;\;\theta_0};\substack{\nu_{n+1}\vspace{0.15cm} \\  \nu_n}\right]=\mathcal{P}_{n}\left[\substack{\theta_t\vspace{0.08cm} \\ \theta_{*}\;\;\;\theta_0};\substack{\nu_{n+1}\vspace{0.15cm} \\ \nu_n}\right] \int_{\mathsf{S}}dx~\mathcal{I}_{n}\left[\substack{x\;\;\;\; \theta_t\vspace{0.08cm} \\ \theta_{*}\;\;\;\theta_0};\substack{\nu_{n+1} \vspace{0.15cm} \\  \nu_n}\right],
\end{split}\eeq
where the prefactor $\mathcal{P}_{n}$ is defined by
\begin{align} \nonumber
\mathcal P_{n}\left[\substack{\theta_t\vspace{0.08cm} \\ \theta_{*}\;\;\;\theta_0};\substack{\nu_{n+1}\vspace{0.15cm} \\  \nu_n}\right]= &\; e^{i\pi (-1)^{n+1} \lb \nu_n \theta_0+\nu_{n+1} \theta_t+\frac{iQ}2(\theta_*+\nu_{n+1}-\nu_n)+\frac{\theta_*}2(\theta_0-\theta_t)\rb}\lb e^{i\pi(n-1)} b\rb^{2\nu_n^2-2\nu_{n+1}^2}  
	 \\\nonumber
& \times e^{i\pi \lb \frac{Q^2}4-\frac{iQ}2(\theta_0-\theta_t)-\frac{\theta_*}2(\nu_n-\nu_{n+1}-\frac{\theta_*}2)-\theta_0 \theta_t-\frac{\nu_{n+1}^2+\nu_n^2}2 \rb}  
	\\ \label{matp} 
&\times \prod_{\epsilon=\pm1} \frac{g_b \lb -\theta_0+\epsilon(\nu_n-\frac{\theta_*}{2}) \rb g_b \lb -\theta_t+\epsilon(\nu_n+\frac{\theta_*}{2}) \rb}{g_b \lb -\theta_0+\epsilon(\nu_{n+1}-\frac{\theta_*}{2}) \rb g_b \lb -\theta_t+\epsilon(\nu_{n+1}+\frac{\theta_*}{2})\rb},
\end{align}
the integrand $\mathcal{I}_{n}$ is given by
\beq \begin{split} \label{mati}
\mathcal I_{n}\left[\substack{x\;\;\;\; \theta_t\vspace{0.08cm} \\ \theta_{*}\;\;\;\theta_0};\substack{\nu_{n+1}\vspace{0.15cm} \\  \nu_n}\right]=&\; e^{i\pi x\lb (-1)^{n+1}(\nu_n-\nu_{n+1})-i Q\rb} 
	 \\ 
& \times \frac{s_b \lb x+(-1)^n\theta_* \rb s_b \lb x+\theta_0-\theta_t \rb}{s_b \lb x+ \frac{iQ}{2}+(-1)^{n+1}(\nu_n-\frac{\theta_*}{2})-\theta_t\rb s_b \lb x+\frac{iQ}{2}+\theta_0+(-1)^{n}(\frac{\theta_*}{2}+\nu_{n+1}) \rb},\end{split}
\eeq
and the integration contour $\mathsf{S}$ is such that it separates the two increasing sequences of poles of the integrand from the two decreasing ones. 
If $(\nu_n,\nu_{n+1},\theta_t) \in \mathbb{R}^3$, then Assumption \ref{paramassume} implies that the integrand $\mathcal{I}_{n}$ has two increasing sequences of poles starting from points on the line $\im x=0$, and two decreasing sequences of poles starting from points on the line $\im x = -\frac{iQ}2$. Therefore, in this case the contour of integration $\mathsf{S}$ can be chosen to be any curve in the strip $\im x \in ]-\frac{iQ}{2},0 [$ going from $-\infty$ to $+\infty$. More generally, the Stokes kernel $\mathcal{S}_n$ can be extended to a meromorphic function of $\theta_*,\theta_t,\theta_0,\nu_{n+1},\nu_n$ provided that $b \notin i\mathbb{R}$.

The next theorem, which is our second main result, describes how the confluent conformal blocks in different Stokes sectors are related.

\begin{theorem}[Stokes transformations]\label{mainth2}
For any integer $n \geq 1$, the confluent conformal blocks of the second kind in the two overlapping Stokes sectors $\Omega_{n}$ and $\Omega_{n+1}$ are related by
\beq \label{stokestransform} \begin{split}
{\mathcal D}_{n+1}\lb\substack{\theta_t\\ \theta_*};\nu_{n+1};\theta_0;t\rb=\displaystyle \int_{-\infty}^{+\infty} d\nu_n ~ \mathcal{S}_{n}\left[\substack{\theta_t\vspace{0.08cm} \\ \theta_{*}\;\;\;\theta_0};\substack{\nu_{n+1}\vspace{0.15cm} \\  \nu_n}\right] {\mathcal D}_{n}\lb\substack{\theta_t\\ \theta_*};\nu_n;\theta_0;t\rb, \qquad t \in \Omega_{n} \cap \Omega_{n+1}.
\end{split} \eeq
\end{theorem}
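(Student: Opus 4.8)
The plan is to derive \eqref{stokestransform} as a confluent limit of the crossing transformation \eqref{modulartransfoi1}, in exact parallel with the derivation of the Stokes matrices of the confluent BPZ equation carried out at the end of Section~\ref{section2} and with the proof of Theorem~\ref{mainth1}. Starting from \eqref{modulartransfoi1}, I would impose the confluence substitutions $\theta_1=\tfrac{\Lambda+\theta_*}{2}$, $\theta_\infty=\tfrac{\Lambda-\theta_*}{2}$ of \eqref{confluence}, fix the $s$-channel momentum $\sigma_u=\tfrac{\Lambda}{2}+\nu_{n+1}$ so that the block on the left is the one entering the definition of $\mathcal{\tilde{F}}^\infty$ in \eqref{renormu}, change variables $\sigma_t=\tfrac{\Lambda}{2}-\nu_n$ in the integral so that the block in the integrand becomes the one entering $\mathcal{\tilde{F}}^1$, and set $z=\tfrac{t\,e^{-2i\pi(j-1)}}{ib\epsilon\Lambda}$ as in \eqref{zlimit}. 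Multiplying both sides by $z^{-2\Delta(\theta_t)}\mathcal{N}_\infty(\epsilon\Lambda,\cdot)$ and by the $j$-dependent phases appearing in \eqref{calDndef} (the measure transforming as $\int_0^{+\infty}d\sigma_t=\int_{-\infty}^{\Lambda/2}d\nu_n$), the identity \eqref{modulartransfoi1} takes the form
\[
e^{2i\pi(j-1)(\frac{\theta_*^2}{2}-2\nu_{n+1}^2)}\,\mathcal{\tilde{F}}^\infty\!\left(\tfrac{t e^{-2i\pi(j-1)}}{ib\epsilon\Lambda},\epsilon\Lambda,\nu_{n+1}\right)=\int_{-\infty}^{\Lambda/2}\!d\nu_n\, K_\Lambda(\nu_{n+1},\nu_n)\, e^{2i\pi(j-1)(\frac{\theta_*^2}{2}-2\nu_n^2)}\,\mathcal{\tilde{F}}^1\!\left(\tfrac{t e^{-2i\pi(j-1)}}{ib\epsilon\Lambda},\epsilon\Lambda,\nu_n\right),
\]
where $K_\Lambda$ is the fusion kernel of \eqref{modulartransfoi1} under the above substitutions, multiplied by $\mathcal{N}_\infty(\epsilon\Lambda,\nu_{n+1})/\mathcal{N}_\infty(\epsilon\Lambda,\nu_n)$ and by the residual phase $e^{4i\pi(j-1)(\nu_n^2-\nu_{n+1}^2)}$. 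Letting $\Lambda\to+\infty$ and invoking \eqref{confluentlimitu} on the left and \eqref{confluentlimitt} on the right yields, for $\epsilon=+1$, the relation \eqref{stokestransform} with $n=2j$ and with $\mathcal{S}_n$ replaced by $\lim_{\Lambda\to+\infty}K_\Lambda$, and for $\epsilon=-1$, the same relation with $n=2j-1$. Since every integer $n\geq1$ equals $2j$ or $2j-1$ for some $j\geq1$, this reproduces all of \eqref{stokestransform}; it then remains only to identify $\lim_{\Lambda\to+\infty}K_\Lambda$ with the Stokes kernel $\mathcal{S}_n$ of \eqref{stokessn}.

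The heart of the proof is therefore the evaluation of $\lim_{\Lambda\to+\infty}K_\Lambda$. Substituting the explicit expression \eqref{fusion01} for the Virasoro fusion kernel, the $g_b$-prefactors acquire arguments that grow linearly in $\Lambda$; by the large-argument asymptotics of $g_b$ recorded in the appendix, each such factor becomes an exponential in $\Lambda$ times a power of $ib\Lambda$, and these divergent contributions must cancel against the powers and exponentials packaged into $\mathcal{N}_\infty$ in \eqref{qu}, leaving precisely the finite product of $g_b$'s in $\mathcal{P}_n$ of \eqref{matp} together with the branch factor $(e^{i\pi(n-1)}b)^{2\nu_n^2-2\nu_{n+1}^2}$ and the remaining exponentials there, the leftover $j$-dependent phase being absorbed in the process. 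In the Barnes-type $x$-integral of \eqref{fusion01}, all eight $s_b$-factors carry arguments shifted by $\pm\tfrac{\Lambda}{2}$; after translating the integration variable $x$ by $\mp\tfrac{\Lambda}{2}$ (according to whether $\epsilon=\pm1$) and by a further bounded amount, four of them acquire bounded arguments and assemble, with the help of the reflection properties of $s_b$, into the $s_b$-ratio of $\mathcal{I}_n$ in \eqref{mati}, while the other four, whose arguments now tend to $\mp\infty$, are evaluated by the asymptotics of $s_b$ and contribute the $\Lambda$-independent exponential $e^{i\pi x((-1)^{n+1}(\nu_n-\nu_{n+1})-iQ)}$, the additional $\Lambda$-divergent pieces they produce again cancelling against $\mathcal{N}_\infty$ and the $g_b$-asymptotics. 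The parity signs $(-1)^{n+1}$ and the branch factor $e^{i\pi(n-1)}$ in \eqref{matp}--\eqref{mati} arise, just as the half-integer exponents did in Section~\ref{section2}, from the factor $e^{-2i\pi(j-1)}$ in the argument of $z$ and from the two cases $\epsilon=\pm1$; tracking them is routine bookkeeping, but it must be done carefully because $n$ enters through $j$ and $\epsilon$ differently for even and odd $n$.

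I expect the principal obstacle to be analytic rather than algebraic: one has to justify interchanging the limit $\Lambda\to+\infty$ with the outer $\nu_n$-integral, whose range must simultaneously be extended from $(-\infty,\Lambda/2)$ to all of $\mathbb{R}$, and with the inner contour integral over $x$ in \eqref{fusion01}. For the outer integral this requires a dominated-convergence argument built on the exponential decay produced by the $g_b$- and $s_b$-asymptotics --- here Assumption~\ref{paramassume} is essential, since it localizes the relevant poles and guarantees convergence --- combined with the termwise confluent limit of the series \eqref{confCB1series} for $\mathcal{B}$ provided by Assumption~\ref{assumption1}. For the inner integral one must check that, as the parameters scale with $\Lambda$, none of the eight semi-infinite sequences of poles of the integrand in \eqref{fusion01} crosses the contour $\mathsf{F}$, so that $\mathsf{F}$ deforms continuously into the contour $\mathsf{S}$ of \eqref{stokessn} without picking up spurious residues; four of those sequences escape to infinity in the confluent limit, and verifying that they leave on the correct side of $\mathsf{F}$ is the delicate step of the argument.
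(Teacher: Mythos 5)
Your proposal follows essentially the same route as the paper's proof: both derive \eqref{stokestransform} by taking the confluent limit of the crossing transformation \eqref{modulartransfoi1} with the substitutions $\sigma_u=\frac{\epsilon\Lambda}{2}+\nu_{n+1}$, $\sigma_t=\frac{\epsilon\Lambda}{2}-\nu_n$, treating $\epsilon=+1$ (even $n=2j$) and $\epsilon=-1$ (odd $n=2j-1$) separately, identifying the two sides with $\mathcal{D}_{n+1}$ and $\mathcal{D}_n$ via \eqref{calDndef}, and evaluating the limit of the renormalized fusion kernel through a contour shift combined with the asymptotics \eqref{asympgammab} and \eqref{asympsb}. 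The paper carries out the kernel computation explicitly (and is, if anything, less explicit than you are about interchanging the $\Lambda\to+\infty$ limit with the $\nu_n$- and $x$-integrations), but the strategy is identical.
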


The proof of Theorem \ref{mainth2} consists of taking suitable confluent limits of the crossing transformation \eqref{modulartransfoi1} and is presented in Section \ref{derivationstokes}.
The Stokes transformations (\ref{stokestransform}) are infinite-dimensional analogs of the equations \eqref{stokesrelation} that relate the solutions $\boldsymbol{D}_n(t)$ of the confluent BPZ equation in different Stokes sectors. 
% We show in propositions \ref{BPZodd} and \ref{BPZeven} that the BPZ limit applied to \eqref{stokestransform} indeed leads to the relations \eqref{stokesrelation}. 

\subsection{Remarks}
We conclude this section with some remarks on the definition (\ref{calDndef}) of $\mathcal{D}_n$.
We observed in Section \ref{section2} that the solution basis $\boldsymbol{D}_n(t)$ given in \eqref{solutionsinf} of the confluent BPZ equation asymptotes to the series $\boldsymbol{D}_{\text{asymp}}(t)$  given in \eqref{asymptotic} as $t$ approaches $\infty$ in the Stokes sector $\Omega_n$. Similarly, the confluent conformal blocks of the second kind ${\mathcal D}_{n}(t)$ are expected to admit a particular asymptotic expansion $\mathcal D_{\text{asymp}}(t)$ in $\Omega_n$. We believe that the expansion $\mathcal D_{\text{asymp}}$ coincides with the one given in \cite[Eq. (1.7)]{LNR}\footnote{Equation \eqref{seried} is equivalent to \cite[Eq. 1.7]{LNR} under the rescaling $t \to \frac{t}{ib}$.}:
\beq \label{seried}
\mathcal D_{\text{asymp}} \lb\substack{\theta_t\\ \theta_*};\nu;\theta_0;t\rb =t^{\frac{\theta_*^2}{2}-2\nu^2} e^{\lb \frac{\theta_*}{2}+\nu\rb \frac{t}{ib}} \lb 1+\sum_{k=1}^\infty \hat{\mathcal{D}}_k\lb\substack{\theta_t\\ \theta_*};\nu;\theta_0\rb \lb \frac{t}{ib}\rb^{-n} \rb,
\eeq
where the first two coefficients are given by
 \begin{align}\nonumber
 {\hat{\mathcal D}}_1\lb\substack{\theta_t\\ \theta_*};\nu;\theta_0\rb=&\; 4\nu^3-\lb 2\Delta(\theta_0)+2\Delta(\theta_t)+
 \theta_*^2\rb\nu +\lb \Delta(\theta_t)-\Delta(\theta_0)\rb \theta_*,
 	\\ \label{firstterms}
 {\hat{\mathcal D}}_2\lb\substack{\theta_t\\ \theta_*};\nu;\theta_0\rb
 =&\; \tfrac12{\hat{\mathcal D}}_1^2\lb\substack{\theta_t\\ \theta_*};\nu;\theta_0\rb +
 3\nu\, {\hat{\mathcal D}}_1\lb\substack{\theta_t\\ \theta_*};\nu;\theta_0\rb
 -2\nu^4+\lb\Delta(\theta_t)-\Delta(\theta_0)\rb\theta_*\nu
 	\\ \nonumber
 & + \tfrac18\lb 4\Delta(\theta_0)-\theta_*^2\rb \lb 4\Delta(\theta_t)-\theta_*^2\rb+\tfrac{c-1}{12}\lb \theta_*^2-4\nu^2\rb.
 \end{align}
The leading asymptotics of the series in \eqref{seried} was found in \cite{GT} by computing a confluent limit of the first terms in the series expansion of the $u$-channel four-point Virasoro conformal blocks. This recipe was extended to higher orders in \cite{LNR}. The left-hand side of \eqref{confluentlimitu} for $j=1$ and $\epsilon=+1$ is similar to the right-hand side of \cite[Eq. (1.7)]{LNR}. However, here we take the confluent limit of the crossing transformation \eqref{modulartransfosa} rather than of the series expansion of the $u$-channel conformal blocks. 

The framework of irregular vertex operators developed in \cite{Nagoya} provides a different but equivalent approach to the construction of $\mathcal D_{\text{asymp}}(t)$. The series in \eqref{seried} is expected to diverge everywhere in the complex plane of $t$ and no closed formula is known for its coefficients.

Two observations suggest that the confluent conformal blocks of the second kind $\mathcal{D}_n(t)$ defined in (\ref{calDndef}) asymptote to $\mathcal D_{\text{asymp}}(t)$ as $t \to \infty$ in the Stokes sector $\Omega_n$. 
First, using \eqref{continuationcb1}, we observe that
$$\mathcal{C}_{n}\left[\substack{\theta_t\vspace{0.08cm} \\ \theta_{*}\;\;\;\theta_0};\substack{\nu\vspace{0.15cm} \\  \sigma_s}\right]e^{2i\pi \lb \frac{\theta_*^2}{2}-2\nu^2 \rb} e^{-2i\pi \lb \Delta(\sigma_s)-\Delta(\theta_0)-\Delta(\theta_t) \rb}=\mathcal{C}_{n+2}\left[\substack{\theta_t\vspace{0.08cm} \\ \theta_{*}\;\;\;\theta_0};\substack{\nu\vspace{0.15cm} \\  \sigma_s}\right].$$
It follows from this relation that the ${\mathcal D}_n(t)$ satisfy the periodicity relation
\beq \label{period}
{\mathcal D}_n\lb\substack{\theta_t\\ \theta_*};\nu;\theta_0;te^{-2i\pi}\rb=e^{-2i\pi ( \frac{\theta_*^2}{2}-2\nu^2 )}{\mathcal D}_{n+2}\lb\substack{\theta_t\\ \theta_*};\nu;\theta_0;t\rb
\eeq
and the phase in \eqref{period} corresponds to the formal monodromy of the asymptotic expansion \eqref{seried}. 

Second, comparison of the first few terms of $\mathcal{D}_{\text{asymp}}$ given by \eqref{firstterms} with the ones of $\boldsymbol{D}_{\text{asymp}}(t)$ given by \eqref{asymptotic} suggests that the following BPZ limit holds:
\beq \label{bpzasymp}
\lim\limits_{\theta_t \to \frac{ib}2+\frac{iQ}2} \begin{pmatrix} \lim\limits_{\nu \to -\frac{\theta_*}2-\frac{ib}2} \mathcal D_{\text{asymp}} \lb\substack{\theta_t\\ \theta_*};\nu;\theta_0;t\rb \\  \lim\limits_{\nu \to -\frac{\theta_*}2+\frac{ib}2}  \mathcal D_{\text{asymp}} \lb\substack{\theta_t\\ \theta_*};\nu;\theta_0;t\rb \end{pmatrix} = \boldsymbol{D}_{\text{asymp}}(t).
\eeq
Moreover, we will show in Proposition \ref{continuationCB} that $\mathcal{D}_n(t)$ tends to the solution $\boldsymbol{D}_n(t)$ of the confluent BPZ equation in the BPZ limit. Also, we know from Section \ref{section2} that $\boldsymbol{D}_n(t)$ asymptotes to $\boldsymbol{D}_{\text{asymp}}(t)$ as $t \to \infty$ in $\Omega_n$. We can summarize these observations as follows:
\begin{center}
\begin{tikzcd}[row sep=0.5cm, column sep = 3.5cm]
\mathcal{D}_n(t) \ar[d,"\text{BPZ limit}"] & \mathcal D_{\text{asymp}}(t) \ar[d,"\text{BPZ limit}"] \\ 
\boldsymbol{D}_n(t) \ar[r,"\text{asymptotic expansion}"] & \boldsymbol{D}_{\text{asymp}}(t)\end{tikzcd}
\end{center}
This diagram suggests that the confluent conformal blocks of the second kind $\mathcal{D}_n(t)$ asymptote to $\mathcal{D}_{\text{asymp}}$ as $t$ approaches $\infty$ in the Stokes sector $\Omega_n$.

\section{Proofs} \label{section6}
We will establish Theorem \ref{mainth1} and Theorem \ref{mainth2} by computing suitable confluent limits of the crossing transformations \eqref{modulartransfos} and \eqref{modulartransfoi1}, respectively.

\subsection{Proof of Theorem \ref{mainth1}}\label{proof1subsec}
The proof of Theorem \ref{mainth1} is achieved by computing confluent limits of the crossing transformations \eqref{modulartransfosa} and \eqref{modulartransfosb}. 
Let us first consider \eqref{modulartransfosa}. Introducing appropriate normalization factors and recalling the definitions (\ref{renormu}) of $\mathcal{\tilde{F}}^\infty$ and $\mathcal{\tilde{F}}^0$, we can write \eqref{modulartransfosa} as
\begin{equation*} \begin{split}
\mathcal{\tilde{F}}^\infty\lb \frac{te^{-2i\pi (j-1)}}{ib\epsilon \Lambda},\epsilon\Lambda,\nu\rb = \int_0^{+\infty} & d\sigma_s ~\frac{\mathcal{N}_\infty(\epsilon \Lambda,\nu)}{\mathcal{N}_0(\epsilon \Lambda)} ~ e^{i\pi \lb \Delta(\theta_0)+\Delta(\frac{\epsilon\Lambda-\theta_*}{2}) -\Delta(\sigma_s)-\Delta(\frac{\epsilon \Lambda}{2}+\nu)\rb} 
	 \\
& \times F\left[\begin{matrix}\substack{\theta_0,\quad\theta_{t}}\\ \substack{\frac{\epsilon \Lambda+\theta_*}{2} ,\frac{\epsilon\Lambda-\theta_*}{2} }
\end{matrix};\begin{matrix}\substack{\frac{\epsilon\Lambda}{2}+\nu} \\ \substack{\sigma_s}\end{matrix} \right] \mathcal{\tilde{F}}^0\lb \frac{te^{-2i\pi (j-1)}}{ib\epsilon \Lambda},\epsilon\Lambda,\sigma_s\rb.
\end{split} \end{equation*}
Since $i \epsilon = e^{\frac{i\pi \epsilon}2}$ for $\epsilon=\pm1$, the factors in the first line of the integrand take the explicit form
\beq \label{div1} \begin{split}
\frac{\mathcal{N}_\infty(\epsilon \Lambda,\nu)}{\mathcal{N}_0(\epsilon \Lambda)} &~ e^{i\pi \lb \Delta(\theta_0)+\Delta(\frac{\epsilon\Lambda-\theta_*}{2}) -\Delta(\sigma_s)-\Delta(\frac{\epsilon \Lambda}{2}+\nu)\rb} 
	\\ 
= &\; e^{2i\pi\lb \frac12+\frac{\epsilon}4\rb \lb \Delta(\theta_0)+\Delta(\theta_t)-\Delta(\sigma_s)+\frac{\theta_*^2}2-2\nu^2 \rb}(b\Lambda)^{\Delta(\theta_0)+\Delta(\theta_t)-\Delta(\sigma_s)+\frac{\theta_*^2}2-2\nu^2}.\end{split} \eeq
Moreover, thanks to \eqref{continuationcb}, the renormalized conformal blocks $\mathcal{\tilde{F}}^0$ satisfies
\beq
\mathcal{\tilde{F}}^0\lb \frac{te^{-2i\pi (j-1)}}{ib\epsilon \Lambda},\epsilon\Lambda,\sigma_s\rb = e^{-2i\pi(j-1)\lb \Delta(\sigma_s)- \Delta(\theta_0)-\Delta(\theta_t)\rb}\mathcal{\tilde{F}}^0\lb \frac{t}{ib\epsilon \Lambda},\epsilon\Lambda,\sigma_s\rb
\eeq
On the other hand, using the confluent limit \eqref{courtedistance} of the $s$-channel conformal blocks, we obtain
\begin{align} \nonumber
e^{2i\pi(j-1)\big( \frac{\theta_*^2}{2}-2\nu^2\big)}\lim\limits_{\substack{\Lambda \to +\infty}}\mathcal{\tilde{F}}^\infty \lb \frac{te^{-2i\pi (j-1)}}{ib\epsilon \Lambda},\epsilon \Lambda,\nu\rb  = & \int_0^{+\infty} d \sigma_s~\lb e^{2i\pi \lb j-\frac12+\frac{\epsilon}4 \rb} b\Lambda \rb^{\Delta(\theta_0)+\Delta(\theta_t)-\Delta(\sigma_s)+\frac{\theta_*^2}2-2\nu^2} 
	\\ \label{inter} 
&\times \lim\limits_{\substack{\Lambda \to +\infty}} F\left[\begin{matrix}\substack{\theta_0,\quad\theta_{t}}\\ \substack{\frac{\epsilon \Lambda+\theta_*}{2} ,\frac{\epsilon\Lambda-\theta_*}{2} }
\end{matrix};\begin{matrix}\substack{\frac{\epsilon\Lambda}{2}+\nu} \\ \substack{\sigma_s}\end{matrix} \right]~\mathcal B\lb \theta_*;\sigma_s;\substack{\theta_t \\ \theta_0};t\rb.
\end{align}
%\beq \begin{split}
%e^{2i\pi(j-1)\lb \frac{\theta_*^2}{2}-2\nu^2\rb}&\lim\limits_{\substack{\Lambda \to +\infty \\ \text{Im}\lb\pm\frac{t}{ib\Lambda}\rb>0}}\mathcal{\tilde{F}}_u\lb\substack{\theta_t\\ \theta_*};\nu;\theta_0;te^{-2i\pi (j-1)},\pm \Lambda\rb=  \\
%= & \displaystyle \int_0^{+\infty} d \sigma_s~e^{i\pi \lb \Delta(\theta_0)+\Delta(\frac{\pm \Lambda-\theta_*}{2}) -\Delta(\sigma_s)-\Delta(\pm \frac{\Lambda}{2}+\nu)\rb} ~ \lb\frac{t}{ib\Lambda}\rb^{2\Delta(\theta_t)} ~ \frac{Q_u(t,\Lambda,\nu)}{\mathcal{N}_0(\Lambda)}\times \\ 
%&\times \lim\limits_{\substack{\Lambda \to +\infty}} \lb (F\left[\begin{matrix}\substack{\theta_0,~~~~~\theta_{t}}\\ \substack{\frac{\pm \Lambda+\theta_*}{2} ,\frac{\pm\Lambda-\theta_*}{2} }
%\end{matrix};\begin{matrix}\substack{\frac{\pm\Lambda}{2}+\nu} \\ \substack{\sigma_s}\end{matrix} \right]\rb ~e^{2i\pi(j-1)\lb \frac{Q^2}{4}+\theta_0^2+\frac{\theta_*^2}{2}+\theta_t^2-2\nu^2-\sigma_s^2 \rb}~\mathcal B\lb \theta_*;\sigma;\substack{\theta_t \\ \theta_0};t\rb.
%\end{split}\eeq
%So that the following intermediate relation is obtained:
The limit of the Virasoro fusion kernel $F$ remains to be computed. The conformal blocks are symmetric under any sign changes of the parameters, so the Virasoro fusion kernel also has this symmetry. Thus, replacing $\theta_t$ by $-\theta_t$ in \eqref{fusion01} and shifting the contour by $x \to x+\nu-\frac{\theta_*}{2},$ we have
\begin{equation*}
\begin{split}
F\left[\begin{matrix}\substack{\theta_0,~~~~~-\theta_{t}}\\ \substack{\frac{\epsilon \Lambda+\theta_*}{2} ,\frac{\epsilon\Lambda-\theta_*}{2} }
\end{matrix};\begin{matrix}\substack{\frac{\epsilon\Lambda}{2}+\nu} \\ \substack{\sigma_s} \end{matrix} \right]= & \prod_{k=\pm1} \tfrac{g_b \lb \frac{iQ}{2}+k(\epsilon\Lambda+2\nu) \rb g_b \lb -\epsilon \Lambda+k \sigma_s \rb}{g_b \lb -\epsilon \Lambda-\frac{\theta_*}{2}-\nu+k \theta_0 \rb g_b \lb -\theta_t+k(\epsilon \Lambda+\nu-\frac{\theta_*}{2})\rb} \\
&  \times \displaystyle \prod_{k=\pm1}\tfrac{g_b\lb k \sigma_s-\theta_* \rb g_b \lb k \sigma_s-\theta_0-\theta_t\rb g_b \lb k \sigma_s+\theta_0-\theta_t \rb}{g_b \lb -\frac{iQ}{2}+2 k \sigma_s \rb g_b\left(\nu-\frac{\theta_*}{2}+k \theta_0\right)g_b\left(-\theta_t+k(\nu+\frac{\theta_*}{2})\right)} \\
& \times \displaystyle \int_{\mathsf{F}} dx ~ \tfrac{s_b \lb x-\theta_t+\nu-\frac{\theta_*}{2} + \epsilon \Lambda \rb}{s_b \lb x+\frac{iQ}{2}+2\nu+\epsilon \Lambda \rb} \tfrac{s_b \lb x+\frac{\theta_*}{2}+\nu-\theta_t \rb}{s_b \lb x+ \frac{i Q}{2} \rb } \prod_{k=\pm1} \tfrac{s_b \lb x+k \theta_0 +\nu-\frac{\theta_*}{2}\rb }{s_b \lb x+ \frac{iQ}{2}+\nu-\frac{\theta_*}{2}-\theta_t + k \sigma_s \rb}.
\end{split}
\end{equation*}
Using \eqref{asympgammab}, it is straightforward to compute the asymptotics of the first line as $\Lambda \to +\infty$:
\beq \label{div2} \begin{split}
 \prod_{k=\pm1} \tfrac{g_b \lb \frac{iQ}{2}+k(\epsilon\Lambda+2\nu) \rb g_b \lb -\epsilon \Lambda+k \sigma_s \rb}{g_b \lb -\epsilon \Lambda-\frac{\theta_*}{2}-\nu+k \theta_0 \rb g_b \lb -\theta_t+k(\epsilon \Lambda+\nu-\frac{\theta_*}{2})\rb} 
\sim &\; e^{\epsilon \frac{i\pi}{2}\lb \Delta(\sigma_s)-\Delta(\theta_0)-\frac{\theta_*^2}{4}-\nu^2 +\theta_* \theta_t-\nu(\theta_*+2\theta_t+2iQ) \rb}
\\
& \times  e^{-i \pi \Lambda(\frac{i Q}{2}+\frac{\theta_*}{2}+\theta_t+\nu)}\Lambda^{-\lb\Delta(\theta_0)+\Delta(\theta_t)-\Delta(\sigma_s)+\frac{\theta_*^2}2-2\nu^2\rb}.
\end{split}
\eeq
Moreover, the asymptotic formula \eqref{asympsb} for $s_b$ yields, as $\Lambda \to +\infty$,
\beq \label{div3} \begin{split}
\frac{s_b \lb x-\theta_t+\nu-\frac{\theta_*}{2} +\epsilon \Lambda \rb}{s_b \lb x+\frac{iQ}{2}+2\nu + \epsilon \Lambda \rb} \sim &\; e^{\epsilon \frac{i\pi}{2}\lb -\Delta(\theta_t)-\frac{\theta_*^2}{4}+3\nu^2-\theta_* \theta_t+\nu(\theta_*+2\theta_t+2iQ)\rb} \\
&\times e^{\epsilon i\pi x\lb \frac{iQ}{2}+\frac{\theta_*}{2}+\theta_t+\nu \rb} e^{i \pi \Lambda(\frac{i Q}{2}+\frac{\theta_*}{2}+\theta_t+\nu)}.
\end{split} \eeq
Multiplication of the preceding two equations produces a factor 
$$(e^{\epsilon \frac{i\pi}2} \Lambda)^{-\lb\Delta(\theta_0)+\Delta(\theta_t)-\Delta(\sigma_s)+\frac{\theta_*^2}2-2\nu^2\rb}e^{\epsilon i\pi x\lb \frac{iQ}{2}+\frac{\theta_*}{2}+\theta_t+\nu \rb}.$$
The first part of this factor cancels part of the integrand in \eqref{inter}, and the second part yields the phase in the integrand of the confluent fusion kernel \eqref{gnm}. Finally, the two families of confluent fusion kernels $\mathcal{C}_{2j+1}$ and $\mathcal{C}_{2j}$ are obtained after gathering the phases and taking $\epsilon = +1$ and $\epsilon = -1$, respectively. To summarize, we have shown that 
\begin{align*}
 e^{2i\pi(j-1)\lb \frac{\theta_*^2}{2}-2\nu^2\rb} & \lim\limits_{\substack{\Lambda \to +\infty}} \mathcal{\tilde{F}}^\infty \lb \frac{te^{-2i\pi (j-1)}}{ib\epsilon \Lambda}, \epsilon \Lambda,\nu\rb  
	\\
& =\begin{cases}  \int_0^{+\infty} d\sigma_s~\mathcal{C}_{2j+1}\left[\substack{\theta_t\vspace{0.08cm} \\ \theta_{*}\;\;\;\theta_0};\substack{\nu\vspace{0.15cm} \\  \sigma_s}\right]  \mathcal B\lb \theta_*;\sigma_s;\substack{\theta_t \\ \theta_0};t\rb, &  t \in \Omega_{2j+1}^-,~\epsilon=+1, \\
\int_0^{+\infty} d\sigma_s~\mathcal{C}_{2j}\left[\substack{\theta_t\vspace{0.08cm} \\ \theta_{*}\;\;\;\theta_0};\substack{\nu\vspace{0.15cm} \\  \sigma_s}\right]  \mathcal B\lb \theta_*;\sigma_s;\substack{\theta_t \\ \theta_0};t\rb, & t \in \Omega_{2j}^-,~\epsilon=-1.  \end{cases}
\end{align*}
Recalling the definition (\ref{calDndef}) of $\mathcal{D}_n$, this concludes the proof of Theorem \ref{mainth1} in the case when $t \in \cup_{n=1}^\infty \Omega_n^-$. 

The proof when $t \in \cup_{n=1}^\infty \Omega_n^+$ is rather similar and consists of computing the confluent limit of the crossing transformation \eqref{modulartransfosb}. Introducing the relevant normalization factors, \eqref{modulartransfosb} becomes
\begin{equation*} \begin{split}
\mathcal{\tilde{F}}^1\lb\frac{te^{-2i\pi (j-1)}}{ib\epsilon\Lambda},\epsilon\Lambda,\nu\rb = \int_0^{+\infty} & d\sigma_s ~ \frac{\mathcal{N}_\infty(\epsilon \Lambda,\nu)}{\mathcal{N}_0(\epsilon \Lambda)} ~ e^{i\pi \lb \Delta\lb\frac{\epsilon \Lambda}2-\nu\rb-\Delta\lb \frac{\epsilon \Lambda+\theta_*}2\rb -\Delta(\theta_t) \rb} \\
& \times F\left[\begin{matrix}\substack{\theta_0,~~~~~\theta_{t}}\\ \substack{\frac{\epsilon\Lambda-\theta_*}{2} ,\frac{\epsilon\Lambda+\theta_*}{2} }
\end{matrix};\begin{matrix}\substack{\frac{\epsilon\Lambda}{2}-\nu} \\ \substack{\sigma_s}\end{matrix} \right] \mathcal{\tilde{F}}^0\lb\frac{te^{-2i\pi (j-1)}}{ib\epsilon\Lambda},\epsilon\Lambda,\sigma_s\rb.
\end{split} \end{equation*}
The factors in the first line of the integrand take the form
\beq
\frac{\mathcal{N}_\infty(\epsilon \Lambda,\nu)}{\mathcal{N}_0(\epsilon \Lambda)} ~ e^{i\pi \lb \Delta\lb\frac{\epsilon \Lambda}2-\nu\rb-\Delta\lb \frac{\epsilon \Lambda+\theta_*}2\rb -\Delta(\theta_t) \rb} = \lb e^{\frac{i\pi \epsilon}{2}} b \Lambda\rb^{\Delta(\theta_0)+\Delta(\theta_t)-\Delta(\sigma_s)+\frac{\theta_*^2}2-2\nu^2}.
\eeq 
Using the analytic continuation \eqref{continuationcb} and the confluent limit \eqref{courtedistance}, we obtain
\begin{align}\nonumber
e^{2i\pi(j-1)\lb \frac{\theta_*^2}{2}-2\nu^2\rb}\lim\limits_{\substack{\Lambda \to +\infty}}\mathcal{\tilde{F}}^\infty\lb\frac{te^{-2i\pi (j-1)}}{ib\epsilon\Lambda},\epsilon\Lambda\rb  
= & \int_0^{+\infty} d \sigma_s~\lb e^{2i\pi \lb j-1+\frac{\epsilon}4 \rb} b\Lambda \rb^{\Delta(\theta_0)+\Delta(\theta_t)-\Delta(\sigma_s)+\frac{\theta_*^2}2-2\nu^2}
	\\ \label{inter2} 
&\times \lim\limits_{\substack{\Lambda \to +\infty}} F\left[\begin{matrix}\substack{\theta_0,~~~~~\theta_{t}}\\ \substack{\frac{\epsilon\Lambda-\theta_*}{2} ,\frac{\epsilon\Lambda+\theta_*}{2} }
\end{matrix};\begin{matrix}\substack{\frac{\epsilon\Lambda}{2}-\nu} \\ \substack{\sigma_s}\end{matrix} \right] ~\mathcal B\lb \theta_*;\sigma_s;\substack{\theta_t \\ \theta_0};t\rb.
\end{align}
%The confluent limit of the Virasoro fusion kernel remains to be computed. In order to obtain the desired result, we need to use the sign symmetry of the bottom left parameter, which is not explicit. 
Performing a contour shift $x \to x+\nu-\frac{\theta_*}2$ and using the even symmetry of the bottom left parameter of the Virasoro fusion kernel, we find from \eqref{fusion01} that
\begin{equation*}
\begin{split}
F\left[\begin{matrix}\substack{\theta_0,~~~~~-\theta_{t}}\\ \substack{-\frac{\epsilon\Lambda-\theta_*}{2} ,\frac{\epsilon\Lambda+\theta_*}{2} }
\end{matrix};\begin{matrix}\substack{\frac{\epsilon\Lambda}{2}-\nu} \\ \substack{\sigma_s} \end{matrix} \right]= & \prod_{k=\pm1} \tfrac{g_b \lb \frac{iQ}{2}+k(\epsilon\Lambda-2\nu) \rb g_b \lb \epsilon\Lambda+k \sigma_s \rb}{g_b \lb \epsilon\Lambda-\frac{\theta_*}{2}-\nu+k \theta_0 \rb g_b \lb -\theta_t+k(-\epsilon \Lambda+\nu-\frac{\theta_*}{2})\rb} \\
& \times \displaystyle \prod_{k=\pm1}\tfrac{g_b\lb k \sigma_s-\theta_* \rb g_b \lb k \sigma_s-\theta_0-\theta_t\rb g_b \lb k \sigma_s+\theta_0-\theta_t \rb}{g_b \lb -\frac{iQ}{2}+2 k \sigma_s \rb g_b\left(\nu-\frac{\theta_*}{2}+k \theta_0\right)g_b\left(-\theta_t+k(\nu+\frac{\theta_*}{2})\right)} \\
& \times \displaystyle \int_{\mathsf{F}} dx \tfrac{s_b \lb x-\theta_t+\nu-\frac{\theta_*}2-\epsilon \Lambda \rb}{s_b \lb x+\frac{iQ}2+2\nu-\epsilon \Lambda \rb} \tfrac{s_b \lb x+\frac{\theta_*}2+\nu-\theta_t \rb}{s_b \lb x+\frac{iQ}2 \rb} \prod_{k=\pm1} \tfrac{s_b \lb x+k\theta_0+\nu-\frac{\theta_*}2 \rb}{s_b \lb x+\frac{iQ}2+\nu-\frac{\theta_*}2-\theta_t+k\sigma_s \rb}.
\end{split}
\end{equation*}
Using \eqref{asympgammab}, the first line has the following asymptotics as $\Lambda \to + \infty$:
\beq \begin{split}
\prod_{k=\pm1} \tfrac{g_b \lb \frac{iQ}{2}+k(\epsilon\Lambda-2\nu) \rb g_b \lb \epsilon\Lambda+k\sigma_s \rb}{g_b \lb \epsilon\Lambda-\frac{\theta_*}{2}-\nu+k \theta_0 \rb g_b \lb -\theta_t+k(-\epsilon \Lambda+\nu-\frac{\theta_*}{2})\rb} \sim &\; e^{\epsilon\frac{ i\pi}2 \lb \Delta(\theta_0)-\Delta(\sigma_s)+\frac{\theta_*^2}4+\nu^2+2iQ\nu+2\theta_t \nu-\theta_* \theta_t +\theta_* \nu \rb}
	\\
&\times e^{-i\pi \Lambda \lb \theta_t+\nu+\frac{\theta_*}{2}+\frac{iQ}{2} \rb}  \Lambda^{-\lb\Delta(\theta_0)+\Delta(\theta_t)-\Delta(\sigma_s)+\frac{\theta_*^2}2-2\nu^2 \rb}
\end{split}\eeq 
and using \eqref{asympsb} we find, as $\Lambda \to +\infty$,
\begin{equation}\begin{split}
\frac{s_b \lb x-\theta_t+\nu-\frac{\theta_*}2-\epsilon \Lambda\rb}{s_b \lb x+\frac{iQ}2+2\nu-\epsilon \Lambda\rb} \sim &\; e^{\epsilon\frac{i\pi}{2} \lb \Delta(\theta_t)+\frac{\theta_*^2}4-3\nu^2-2iQ\nu-2\theta_t \nu+\theta_* \theta_t-\theta_* \nu \rb} e^{-\epsilon i \pi x \lb \frac{iQ}{2}+\frac{\theta_*}{2}+\theta_t+\nu \rb} \\
&\times e^{i\pi \Lambda \lb \theta_t+\nu+\frac{\theta_*}{2}+\frac{iQ}{2} \rb}.
\end{split}\end{equation}
The multiplication of the preceding two equations produces a factor
$$\lb e^{-\epsilon\frac{i \pi}2} \Lambda \rb^{-\lb\Delta(\theta_0)+\Delta(\theta_t)-\Delta(\sigma_s)+\frac{\theta_*^2}2-2\nu^2\rb}e^{-i\pi x \epsilon\lb \frac{iQ}2+\frac{\theta_*}2+\theta_t+\nu \rb}.$$
Substitution into \eqref{inter2} leads to the family of confluent fusion kernels $\mathcal{C}_{2j}$ and $\mathcal{C}_{2j-1}$ for $\epsilon=+1$ and $\epsilon=-1$, respectively. This proves \eqref{dk} also for $t \in \cup_{n=1}^\infty \Omega_n^+$ and concludes the proof of Theorem \ref{mainth1}.

\subsection{Proof of Theorem \ref{mainth2}}\label{derivationstokes}
Theorem \ref{mainth2} will be established by computing an appropriate confluent limit of the crossing transformation \eqref{modulartransfoi1} relating the $u$- and $t$-channel conformal blocks. The cases of odd and even $n$ will be considered separately.

\subsubsection{Derivation of the Stokes transformations for $n=2j-1$.} \label{oddcasesubsec}
The integrand in \eqref{modulartransfoi1} is an even function of $\sigma_t$ and the Virasoro fusion kernel $F$ is an even function of $\theta_t$. Thus, performing the change of variables 
\beq \label{change1}
\sigma_t = -\frac{\Lambda}{2}-\nu_{2j-1},   
\eeq
and using the symmetry $\theta_t \to -\theta_t$ of the Virasoro fusion kernel, we can write \eqref{modulartransfoi1} as
\beq \label{modulartransfoi1b} 
\mathcal F\lb\substack{\theta_1\;\quad 
      \theta_{t}\\ \sigma_u \\ \theta_0 \quad \theta_\infty};\frac1z\rb
      = \int_{-\frac{\Lambda}{2}}^{+\infty} d\nu_{2j-1} ~ F\left[\substack{\theta_1\;\;\; -\theta_t\vspace{0.1cm}\\ \theta_0\;\;\theta_\infty};\substack{\sigma_u\vspace{0.15cm} \\  -\frac{\Lambda}{2} - \nu_{2j-1}}\right] \mathcal F\lb\substack{\theta_\infty \quad\quad\quad 
     \theta_{t}\\ -\frac{\Lambda}{2} - \nu_{2j-1} \\ \theta_0\quad\quad\quad \theta_1}; 1-\frac1z\rb. 
 \eeq
Introducing appropriate normalization factors, letting 
$$\theta_1=\frac{-\Lambda+\theta_*}{2}, \quad \theta_\infty=\frac{-\Lambda-\theta_*}{2}, \quad z = -\frac{te^{-2i\pi (j-1)}}{ib\Lambda}, \quad \sigma_u = -\frac{\Lambda}{2} + \nu_{2j}, $$
recalling the definitions (\ref{renormu}) of $\mathcal{\tilde{F}}^\infty$ and $\mathcal{\tilde{F}}^0$, and taking the limit $\Lambda \to +\infty$, we can write \eqref{modulartransfoi1b} as
\begin{equation*}
\begin{split}
&\lim\limits_{\substack{\Lambda \to +\infty}}e^{2i\pi(j-1)\lb \frac{\theta_*^2}{2}-2\nu_{2j}^2\rb} \mathcal{\tilde{F}}^\infty \lb -\frac{te^{-2i\pi (j-1)}}{ib\Lambda},-\Lambda,\nu_{2j}\rb =\lim\limits_{\substack{\Lambda \to +\infty}} \displaystyle \int_{-\Lambda/2}^{+\infty} d \nu_{2j-1}~e^{4i\pi(j-1)\lb \nu_{2j-1}^2-\nu_{2j}^2 \rb} 
	 \\
& \hspace{1.5cm}  \times \frac{\mathcal{N}_\infty(-\Lambda,\nu_{2j})}{\mathcal{N}_\infty(-\Lambda,\nu_{2j-1})} F\left[\substack{\frac{-\Lambda+\theta_*}{2}\;\;\;-\theta_t\vspace{0.1cm}\\ \theta_0\;\;\frac{-\Lambda-\theta_*}{2}};\substack{-\frac{\Lambda}{2}+\nu_{2j}\vspace{0.15cm} \\ -\tfrac{\Lambda}{2}-\nu_{2j-1}}\right] e^{2i\pi(j-1)\lb \frac{\theta_*^2}{2}-2\nu_{2j-1}^2\rb}  \mathcal{\tilde{F}}^1\lb -\frac{te^{-2i\pi (j-1)}}{ib\Lambda},-\Lambda,\nu_{2j-1}\rb.
\end{split}
\end{equation*}
Using the limits \eqref{confluentlimitu} and \eqref{confluentlimitt} for $\epsilon=-1$, we obtain
\begin{equation*}\begin{split}
{\mathcal D}_{2j}\lb\substack{\theta_t\\ \theta_*};\nu_{2j};\theta_0;t\rb=\displaystyle \int_{-\infty}^{+\infty}&d\nu_{2j-1}~e^{4i\pi(j-\frac34)\lb \nu_{2j-1}^2-\nu_{2j}^2 \rb}e^{-i\pi \Lambda(\nu_{2j}-\nu_{2j-1})} \lb -i b\Lambda \rb^{2\nu_{2j-1}^2-2\nu_{2j}^2} \\
& \times \lim\limits_{\substack{\Lambda \to +\infty}}  F\left[\substack{\frac{-\Lambda+\theta_*}{2}\;\;\;-\theta_t\vspace{0.1cm}\\ \theta_0\;\;\frac{-\Lambda-\theta_*}{2}};\substack{-\frac{\Lambda}{2}+\nu_{2j} \vspace{0.15cm} \\ -\tfrac{\Lambda}{2}-\nu_{2j-1}}\right] {\mathcal D}_{2j-1} \lb\substack{\theta_t\\ \theta_*};\nu_{2j-1};\theta_0;t\rb.
\end{split}\end{equation*}
Deforming the contour of integration $\mathsf{F}$ in the expression (\ref{fusion01}) for the Virasoro fusion kernel by shifting $x \to x-\frac{\Lambda+\theta_*}2$, we obtain \begin{equation*}
\begin{split}
F\left[\substack{\frac{-\Lambda+\theta_*}{2}\;\;\;-\theta_t\vspace{0.1cm}\\ \theta_0\;\;\frac{-\Lambda-\theta_*}{2}};\substack{-\frac{\Lambda}{2}+\nu_{2j}\vspace{0.15cm} \\ -\tfrac{\Lambda}{2}-\nu_{2j-1}}\right]
= &\; X_1(\Lambda) \displaystyle \prod_{\epsilon=\pm1} \tfrac{g_b \lb -\theta_0+\epsilon(\nu_{2j-1}-\frac{\theta_*}{2}) \rb g_b \lb -\theta_t+\epsilon(\nu_{2j-1}+\frac{\theta_*}{2}) \rb}{g_b \lb -\theta_0+\epsilon(\nu_{2j}-\frac{\theta_*}{2}) \rb g_b \lb -\theta_t+\epsilon(\nu_{2j}+\frac{\theta_*}{2}) \rb} 
	\\
& \times \displaystyle \int_{\mathsf{F}} dx X_2(x, \Lambda) \tfrac{s_b \lb x-\theta_*\rb s_b \lb x+\theta_0-\theta_t \rb}{s_b \lb x+\frac{iQ}{2}+\nu_{2j-1}-\frac{\theta_*}{2}-\theta_t \rb s_b \lb x+\frac{iQ}{2}+\theta_0-\nu_{2j}-\frac{\theta_*}{2}\rb},
\end{split}
\end{equation*}
where
\begin{align*}
& X_1(\Lambda) = \prod_{\epsilon=\pm}\tfrac{g_b\lb \frac{iQ}{2}+\epsilon (-\Lambda+2\nu_{2j})  \rb g_b \lb -\theta_0+\epsilon(\Lambda+\frac{\theta_*}{2}+\nu_{2j-1}) \rb g_b \lb -\theta_t+\epsilon(\Lambda+\nu_{2j-1}-\frac{\theta_*}{2})\rb}{g_b\lb -\frac{iQ}{2}+\epsilon(\Lambda+2\nu_{2j-1}) \rb g_b \lb -\theta_0+\epsilon(-\Lambda+\nu_{2j}+\frac{\theta_*}{2}) \rb g_b\lb -\theta_t+\epsilon(-\Lambda-\frac{\theta_*}{2}+\nu_{2j}) \rb )},
	\\
& X_2(x,\Lambda) = \tfrac{s_b \lb x-\Lambda \rb s_b \lb x+\theta_0-\theta_*-\theta_t-\Lambda \rb}{s_b \lb x+\frac{iQ}{2}-\nu_{2j-1}-\frac{\theta_*}{2}-\theta_t-\Lambda \rb s_b \lb x+\frac{iQ}{2}+\theta_0-\frac{\theta_*}{2}+\nu_{2j}-\Lambda \rb}.
\end{align*}	
We have the following asymptotics as $\Lambda \to +\infty$:
\begin{align*} 
 X_1(\Lambda) \sim &\; e^{\pi Q \Lambda}e^{-i\pi \lb \theta_*(\theta_t-\theta_0)-(\nu_{2j-1}+\nu_{2j})(\theta_t+\theta_0)+iQ(\nu_{2j-1}-\nu_{2j})\rb} \Lambda^{-2\nu_{2j-1}^2+2\nu_{2j}^2},
	\\
X_2(x,\Lambda) \sim &\; e^{i\pi \Lambda(iQ-\nu_{2j-1}+\nu_{2j})}e^{i\pi x(-\nu_{2j}+\nu_{2j-1}-iQ)} 
 e^{i\pi\big[ \frac{Q^2}{4}+iQ\nu_{2j-1}-\theta_*(\theta_0+\nu_{2j-1}-\theta_t)-\frac{\nu_{2j}^2}{2}-(\theta_0+\frac{\theta_*}{2})(-\frac{\theta_*}{2}+\theta_t)\big]}
 	\\
& \times e^{i\pi\big[\nu_{2j}(iQ-\theta_0+\frac{\theta_*}{2})+\frac{iQ}{2}(-3\nu_{2j}-\theta_0+\theta_*+\theta_t-\nu_{2j-1})-\nu_{2j-1}(-\frac{\theta_*}{2}+\theta_t+\frac{\nu_{2j-1}}{2}) \big]}. 
\end{align*}
The Stokes kernels $\mathcal{S}_{2j-1}$, $j=1,2,\dots$, are obtained after gathering the phases. This proves Theorem \ref{mainth2} for odd values of $n$. 
\\

\subsubsection{Derivation of the Stokes transformations for $n=2j$.} 
We perform the following change of variables in the integrand of \eqref{modulartransfoi1}:
$$\sigma_t=\frac{\Lambda}2-\nu_{2j}.$$
The confluent limit of the crossing transformation \eqref{modulartransfoi1} can then be written as
\begin{equation*}
\begin{split}
&\lim\limits_{\substack{\Lambda \to +\infty}}e^{2i\pi(j-1)\lb \frac{\theta_*^2}{2}-2\nu_{2j+1}^2\rb} \mathcal{\tilde{F}}^\infty \lb \frac{te^{-2i\pi (j-1)}}{ib\Lambda},\Lambda,\nu_{2j+1}\rb 
= \lim\limits_{\substack{\Lambda \to +\infty}} \int_{-\infty}^{+\frac{\Lambda}{2}} d\nu_{2j}~ e^{4i\pi(j-1)\lb \nu_{2j}^2-\nu_{2j+1}^2 \rb} 
	\\
&\hspace{1.5cm} \times \frac{\mathcal{N}_\infty(\Lambda,\nu_{2j+1})}{\mathcal{N}_\infty(\Lambda,\nu_{2j})} F\left[\substack{\frac{\Lambda+\theta_*}{2}\;\;\;-\theta_t\vspace{0.1cm}\\ \theta_0\;\;\frac{\Lambda-\theta_*}{2}};\substack{\frac{\Lambda}{2}+\nu_{2j+1} \vspace{0.15cm} \\ \tfrac{\Lambda}{2}-\nu_{2j}}\right]e^{2i\pi(j-1)\lb \frac{\theta_*^2}{2}-2\nu_{2j}^2\rb}  \mathcal{\tilde{F}}^1\lb \frac{te^{-2i\pi (j-1)}}{ib\Lambda},\Lambda,\nu_{2j}\rb.
\end{split}
\end{equation*}
Using the limits \eqref{confluentlimitu} and \eqref{confluentlimitt} for $\epsilon=+1$, we obtain
\begin{equation*}\begin{split}
\mathcal D_{2j+1}\lb\substack{\theta_t\\ \theta_*};\nu_{2j+1};\theta_0;t\rb = &  \int_{-\infty}^{+\infty} d\nu_{2j}~e^{4i\pi(j-\frac34)\lb \nu_{2j}^2-\nu_{2j+1}^2 \rb}e^{i\pi \Lambda(\nu_{2j+1}-\nu_{2j})} \\
& \times \lb i b\Lambda \rb^{2\nu_{2j}^2-2\nu_{2j+1}^2} \lim\limits_{\substack{\Lambda \to +\infty}} F\left[\substack{\frac{\Lambda+\theta_*}{2}\;\;\;-\theta_t\vspace{0.1cm}\\ \theta_0\;\;\frac{\Lambda-\theta_*}{2}};\substack{\frac{\Lambda}{2}+\nu_{2j+1}\vspace{0.15cm} \\ \tfrac{\Lambda}{2}-\nu_{2j}}\right]
 {\mathcal D}_{2j} \lb\substack{\theta_t\\ \theta_*};\nu_{2j};\theta_0;t\rb.
\end{split} \end{equation*}
The evaluation of the confluent limit of the Virasoro fusion kernel $F$ is similar to the evaluation presented in subsection \ref{oddcasesubsec}. In the end, one arrives at the Stokes kernel $\mathcal{S}_{2j}$, which proves Theorem \ref{mainth2} also for even values of $n$.

\section{The BPZ limit} \label{section7}
In this section, we verify explicitly that the connection formula \eqref{dk} of Theorem \ref{mainth1} reduces to the connection formula \eqref{conni} of the confluent BPZ equation in the BPZ limit. We also verify that the Stokes transformation \eqref{stokestransform} of Theorem \ref{mainth2} reduces to the Stokes formula \eqref{stokesrelation}. 

We will make the following assumption on the BPZ limit of the confluent conformal blocks of the first kind $\mathcal{B}(t)$. 

\begin{assumption}[BPZ limit of $\mathcal{B}(t)$]
We assume that the following BPZ limit holds:
\beq \label{limitB}
\lim\limits_{\theta_t \to \frac{i Q}{2}+\frac{ib}{2}} \begin{pmatrix} \lim\limits_{\sigma \to \theta_0-\theta_t+\frac{iQ}2} \mathcal B\lb \theta_*;\sigma_s;\substack{\theta_t \\ \theta_0};t\rb \\ \lim\limits_{\sigma \to \theta_0+\theta_t-\frac{iQ}2}\mathcal B\lb \theta_*;\sigma_s;\substack{\theta_t \\ \theta_0};t\rb \end{pmatrix} = \boldsymbol{B}(t),
\eeq  
where $\boldsymbol{B}(t)$ is the degenerate confluent block of the first kind defined in \eqref{whittaker1}. \end{assumption}

The limits in (\ref{limitB}) can be verified numerically to high order by expanding both sides in power series of $t$, but we are not aware of an analytic proof.

\subsection{BPZ limit of $\mathcal{C}_n$} 
We first compute the BPZ limit of the right-hand side of equation \eqref{dk}.

\begin{proposition}[BPZ limit of $\mathcal{C}_n$]\label{BPZlimitCnprop}
Define $\nu_\pm=-\frac{\theta_*}{2}\pm \frac{ib}{2}$. The following limit holds:
\beq \label{continuationodd}
\lim_{\theta_t \to \frac{iQ}{2}+\frac{ib}{2}} \begin{pmatrix} \lim\limits_{\substack{\nu \to \nu_-}} 
\int_0^{+\infty} d\sigma_s~\mathcal{C}_{n}\left[\substack{\theta_t\vspace{0.08cm} \\ \theta_{*}\;\;\;\theta_0};\substack{\nu\vspace{0.15cm} \\  \sigma_s}\right]  \mathcal B\lb \theta_*;\sigma_s;\substack{\theta_t \\ \theta_0};t\rb \\
\lim\limits_{\substack{\nu \to \nu_+}}
\int_0^{+\infty} d\sigma_s~\mathcal{C}_{n}\left[\substack{\theta_t\vspace{0.08cm} \\ \theta_{*}\;\;\;\theta_0};\substack{\nu\vspace{0.15cm} \\  \sigma_s}\right]  \mathcal B\lb \theta_*;\sigma_s;\substack{\theta_t \\ \theta_0};t\rb
\end{pmatrix}= C_n \boldsymbol{B}(t), \qquad n = 1, 2, \dots,
\eeq
where $\boldsymbol{B}(t)$ is the solution basis for the confluent BPZ equation defined in \eqref{whittaker1}.
\end{proposition}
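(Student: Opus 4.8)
To prove Proposition \ref{BPZlimitCnprop}, I would combine the integral representation \eqref{dk} of Theorem \ref{mainth1} with the BPZ limit \eqref{limitB} of $\mathcal{B}(t)$. The key point is that the degenerate conformal block $\boldsymbol{B}(t)$ is obtained from $\mathcal{B}(\theta_*;\sigma_s;\substack{\theta_t\\ \theta_0};t)$ by sending $\theta_t \to \frac{iQ}{2}+\frac{ib}{2}$ and specializing $\sigma_s$ to one of the two ``degenerate'' values $\sigma_s = \theta_0 - \theta_t + \frac{iQ}{2}$ or $\sigma_s = \theta_0 + \theta_t - \frac{iQ}{2}$. In the BPZ limit $\theta_t \to \theta_{\text{degen}}$, both of these values tend to $\theta_0 \mp \frac{ib}{2}$, i.e.\ to $\sigma_s^{(1,0)}$-type points. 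The mechanism by which the continuous integral over $\sigma_s \in \mathbb{R}_+$ in \eqref{dk} collapses onto these two discrete values is the pinching of the contour between poles of the kernel $\mathcal{C}_n$: as $\theta_t \to \theta_{\text{degen}}$ (and $\nu \to \nu_\pm$), certain poles of $\mathcal{C}_n$ in the variable $\sigma_s$ approach the contour $\mathbb{R}_+$ from opposite sides and pinch it, so that the integral is dominated by the residues at those poles. This is the standard degeneration-by-pinching phenomenon familiar from the reduction of the Virasoro fusion kernel to the hypergeometric connection coefficients.

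\textbf{Key steps.} First I would substitute \eqref{dk} into the left-hand side of \eqref{continuationodd} and study the kernel $\mathcal{C}_n\left[\substack{\theta_t\\ \theta_*\;\;\theta_0};\substack{\nu\\ \sigma_s}\right]$ given by \eqref{gnm}--\eqref{I} as a function of $\sigma_s$, in the regime $\nu \to \nu_\pm = -\frac{\theta_*}{2} \pm \frac{ib}{2}$ and $\theta_t \to \frac{iQ}{2}+\frac{ib}{2}$. Using the shift relations and pole structure of $g_b$ and $s_b$ (recorded in the appendix), I would identify which factors in the prefactor $P^{(n)}$ and in the $x$-integral develop zeros or poles colliding with the $\sigma_s$-contour. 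Specifically, the factors $g_b(\nu - \frac{\theta_*}{2} + \epsilon\theta_0)$ in the denominator of $P^{(n)}$ and the factors $g_b(\epsilon\sigma_s - \theta_0 - \theta_t)$, $g_b(\epsilon\sigma_s + \theta_0 - \theta_t)$ in the numerator are the ones responsible for the pinching. Second, I would extract the residue of $\mathcal{C}_n$ at the pinching poles $\sigma_s = \theta_0 - \theta_t + \frac{iQ}{2}$ and $\sigma_s = \theta_0 + \theta_t - \frac{iQ}{2}$, i.e.\ evaluate $\lim_{\theta_t,\nu} \left[(\sigma_s - \sigma_{s,\pm})\,\mathcal{C}_n\right]$ up to the Jacobian from deforming the contour. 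Third, for each choice $\nu \to \nu_\mp$ I would check that only one of the two poles actually pinches (the other stays off-contour), so that the $\sigma_s$-integral reduces to $2\pi i$ times a single residue, producing a finite multiple of the corresponding component of $\boldsymbol{B}(t)$ via \eqref{limitB}. Fourth, I would verify that the resulting $2\times 2$ matrix of residue coefficients equals the connection matrix $C_n$ of \eqref{knm}; by the periodicity relation \eqref{period} for $\mathcal{C}_n$ (displayed just above \eqref{period}) and the matching periodicity \eqref{periodcn0} for $C_n$, it suffices to verify this for $n=1$ and $n=2$, where $C_1$ and $C_2 = S_1 C_1$ are given explicitly by \eqref{C1expression} and \eqref{stokes}.

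\textbf{Main obstacle.} The technical heart of the argument is the careful bookkeeping of the pinching: one must show that, after the double limit, the $x$-integral in \eqref{gnm} together with the Gamma-function prefactors conspires to give exactly the Gamma-function ratios appearing in \eqref{C1expression}, including the correct phases $e^{\pm i\pi(\cdots)}$ coming from the exponential factor $e^{(-1)^{n+1}i\pi x(\cdots)}$ in \eqref{I} and from the prefactor $(e^{2i\pi(\lfloor n/2\rfloor - 1/2)}b)^{\cdots}$. The residue of $s_b$ at its poles produces factors of the type $\frac{2\pi}{\Gamma(\cdots)\Gamma(\cdots)}$ via the reflection identity for $s_b$, and one must track how the $x$-integral — which, after the pinch, itself degenerates because its integrand's poles also collide — evaluates to a ratio of Gamma functions (essentially a Barnes-type integral collapsing to a single residue, or equivalently a ${}_1F_1$ reducing to its terminating/Gamma form). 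Getting every phase and every sign right, and confirming that exactly one pole pinches for each component so that the off-diagonal structure of $C_1$ emerges correctly, is where the real work lies; the rest is the inductive reduction to $n=1,2$ via the periodicity relations, which is routine.
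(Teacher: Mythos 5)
Your overall strategy---substitute the representation \eqref{dk}, locate the poles of the kernel that pinch the $\sigma_s$-contour as $\theta_t\to\frac{iQ}{2}+\frac{ib}{2}$, reduce the integral to residues, and match the resulting coefficients against \eqref{knm} (possibly after reducing to $n=1,2$ by periodicity)---is the same general route the paper takes. However, your third key step contains a genuine error that would derail the computation. You claim that for each choice $\nu\to\nu_\mp$ only \emph{one} of the two poles $\sigma_s=\theta_0\pm(\theta_t-\frac{iQ}{2})$ pinches the contour, so that the integral collapses to a single residue and hence to a multiple of a single component of $\boldsymbol{B}(t)$. That cannot be right: each row of $C_n$ in \eqref{knm} is $\bigl(K^{(n)}_{\cdot+}(\theta_*,\theta_0),\,K^{(n)}_{\cdot+}(\theta_*,-\theta_0)\bigr)$ with both entries generically nonzero (see \eqref{C1expression}), so each component of the left-hand side of \eqref{continuationodd} must produce a nontrivial linear combination of \emph{both} $B_+(t)$ and $B_-(t)$. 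Indeed both poles cross $\mathbb{R}_+$ (one coming from $g_b(\sigma_s-\theta_0-\theta_t)$, the other from $g_b(-\sigma_s+\theta_0-\theta_t)$), each colliding with a partner pole shifted by $ib$, and both residues must be retained; the residue at $\sigma_s=\theta_0-(\theta_t-\frac{iQ}{2})$ supplies the coefficient of $B_+(t)$ and the one at $\sigma_s=\theta_0+(\theta_t-\frac{iQ}{2})$ the coefficient of $B_-(t)$. Your ``one pole per component'' picture would yield a matrix with a single nonzero entry per row, i.e.\ not $C_n$.

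A second, related omission is the selection mechanism that makes the computation tractable. The factor $\prod_{\epsilon=\pm}g_b\bigl(-\theta_t+\epsilon(\nu+\frac{\theta_*}{2})\bigr)^{-1}$ in \eqref{P} --- which you do not identify; the factors $g_b(\nu-\frac{\theta_*}{2}+\epsilon\theta_0)$ you name are not the relevant ones --- develops a \emph{double} zero at $\theta_t=\frac{iQ}{2}+\frac{ib}{2}$ once $\nu=\nu_\pm$. Since each individual pinch (in $\sigma_s$ or in $x$) contributes only a simple pole in $\theta_t$, every term survives or dies according to whether it carries a double pole: the term with no residues, the term with only the $\sigma_s$-residues, and the term with only the $x$-residues all vanish against the double zero, and only the term carrying residues in \emph{both} integration variables simultaneously survives. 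Without this bookkeeping you cannot justify discarding the remaining (non-residue) integrals, and you would also miss that in the $\nu\to\nu_-$ case \emph{two} poles in $x$ (at $x=\theta_t+\frac{ib}{2}-\frac{iQ}{2}$ and $x=\theta_t-\frac{ib}{2}-\frac{iQ}{2}$) cross the contour, so that two $x$-residues must be summed to reproduce the $\cosh$-type combinations in $K^{(n)}_{++}$. Your proposed reduction to $n=1,2$ via the periodicity relations is a legitimate variant not used in the paper (which treats general $n$ directly through the explicit $n$-dependent phases), but it does not compensate for the incorrect residue count.
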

\begin{proof}
The function $P^{(n)} $ defined in \eqref{P} can be split into two parts as follows:
$$P^{(n)}\left[\substack{\theta_t\vspace{0.08cm} \\ \theta_{*}\;\;\;\theta_0};\substack{\nu\vspace{0.15cm} \\  \sigma_s}\right] = P_1(\nu,\theta_t,\theta_*)P_2^{(n)}\left[\substack{\theta_t\vspace{0.08cm} \\ \theta_{*}\;\;\;\theta_0};\substack{\nu\vspace{0.15cm} \\  \sigma_s}\right],$$
where
\beq \label{P1}
P_1(\nu,\theta_t,\theta_*)=\frac{1}{g_b(-\theta_t + \nu+\frac{\theta_*}{2})g_b(-\theta_t-\nu-\frac{\theta_*}{2})},
\eeq
and
\beq \label{P2}
P_2^{(n)}\left[\substack{\theta_t\vspace{0.08cm} \\ \theta_{*}\;\;\;\theta_0};\substack{\nu\vspace{0.15cm} \\  \sigma_s}\right]=\left(e^{2i\pi(\lfloor \frac{n}2 \rfloor-\frac12)}b\right)^{\Delta(\theta_0)+\Delta(\theta_t)-\Delta(\sigma_s)+\frac{\theta_*^2}2-2\nu^2} \displaystyle \prod_{\epsilon=\pm}\tfrac{g_b\lb \epsilon \sigma_s-\theta_* \rb g_b \lb \epsilon \sigma_s-\theta_0-\theta_t\rb g_b \lb \epsilon \sigma_s+\theta_0-\theta_t \rb}{g_b \lb -\frac{iQ}{2}+2\epsilon \sigma_s \rb g_b\left(\nu-\frac{\theta_*}{2}+\epsilon \theta_0\right)}.
\eeq
Using this notation, recalling the definition (\ref{gnm}) of $\mathcal{C}_n$, and adopting the short-hand notation $\mathcal{B}(t) \equiv \mathcal B\lb \theta_*;\sigma_s;\substack{\theta_t \\ \theta_0};t\rb$, we can  write the integral on the left-hand side of (\ref{continuationodd}) as follows:
\begin{align}\label{continuation}
\int_0^{+\infty} d\sigma_s~\mathcal{C}_{n}\left[\substack{\theta_t\vspace{0.08cm} \\ \theta_{*}\;\;\;\theta_0};\substack{\nu\vspace{0.15cm} \\  \sigma_s}\right]  \mathcal{B}(t)
= \displaystyle P_1(\nu,\theta_t,\theta_*) \int_{\mathbb{R}_+} d\sigma_s \displaystyle \int_{\mathsf{C}} dx ~
P_2^{(n)}\left[\substack{\theta_t\vspace{0.08cm} \\ \theta_{*}\;\;\;\theta_0};\substack{\nu\vspace{0.15cm} \\  \sigma_s}\right]  I^{(n)}\left[\substack{x\;\;\;\theta_t\vspace{0.08cm} \\ \theta_{*}\;\;\;\theta_0};\substack{\nu \vspace{0.15cm} \\  \sigma_s}\right] \mathcal{B}(t).
\end{align}

Before presenting a detailed evaluation of the BPZ limit of (\ref{continuation}), we briefly describe the main idea of the argument. Recall that $0<b<1$. The special functions $s_b(z)$ and $g_b(z)$ defined in \eqref{defsb} and \eqref{gb} possess semi-infinite sequences of poles, whose locations are given by \eqref{polesb} and (\ref{polegb}), respectively.
The prefactor $P_1(\nu,\theta_t,\theta_*)$ in (\ref{continuation}) satisfies
\beq \label{pref} 
P_1(\nu_\pm,\theta_t,\theta_*)= \frac{1}{g_b\lb -\frac{ib}{2} -\theta_t \rb g_b \lb \frac{ib}{2}-\theta_t  \rb},
\eeq
thus, by (\ref{polegb}), $P_1(\nu_\pm,\theta_t,\theta_*)$ has a double zero at $\theta_t=\frac{iQ}{2}+\frac{ib}{2}$. 
In the BPZ limit, the contours of integration $\mathsf{C}$ and $\mathbb{R}_+$ in (\ref{continuation}) are pinched between pairs of moving poles; a similar mechanism was described in \cite{HJP,H,PSS}. For example, the pinching of the integration contour $\mathbb{R}_+$ is due to the factor $g_b\lb \sigma_s-\theta_0-\theta_t \rb g_b \lb -\sigma_s+\theta_0-\theta_t \rb$ in \eqref{P2}. Indeed, this factor has two poles which cross $\mathbb{R}_+$ in the limit $\theta_t \to \frac{iQ}{2}+\frac{ib}{2}$: the pole of $g_b \lb\sigma_s-\theta_0-\theta_t\rb$ located at $\sigma_s=\theta_0+\theta_t-\frac{iQ}{2}$ crosses $\mathbb{R}_+$ and collides with the pole $\sigma_s=\theta_0-\theta_t+\frac{iQ}{2} + ib$ of $g_b\lb -\sigma_s+\theta_0-\theta_t \rb$, and the pole of $g_b\lb -\sigma_s+\theta_0-\theta_t \rb$ located at $\sigma_s=\theta_0-\theta_t+\frac{iQ}{2}$ crosses $\mathbb{R}_+$ and collides with the pole $\sigma_s=\theta_0+\theta_t-\frac{iQ}{2} - ib$ of $g_b \lb\sigma_s-\theta_0-\theta_t\rb$, see Figure \ref{fig:fig2b}. Therefore, before taking the limit, we deform the contour $\mathbb{R}_+$ to a new contour $\mathbb{R}_+'$ as shown in Figure \ref{fig:fig2b} and pick up two residue contributions from the two poles; these contributions are easily computed with the help of \eqref{resgb}. A similar mechanism occurs for the integral over $x$. In the end, after performing the deformations of both $\mathsf{C}$ and $\mathbb{R}_+$, we are able to express the double integral on the right-hand side of (\ref{continuation}) as a sum of three types of terms: terms which are regular at $\theta_t=\frac{iQ}{2}+\frac{ib}{2}$, terms which have a simple pole at $\theta_t=\frac{iQ}{2}+\frac{ib}{2}$, and terms which have a double pole at $\theta_t=\frac{iQ}{2}+\frac{ib}{2}$.
Since the prefactor $P_1$ has a double zero at $\theta_t=\frac{iQ}{2}+\frac{ib}{2}$, only those terms that have a double pole will yield a nonzero contribution to \eqref{continuation} in the limit $\theta_t \to \frac{iQ}{2}+\frac{ib}{2}$. Computing this contribution explicitly and using that $\mathcal B(t) \to \boldsymbol{B}(t)$ in the BPZ limit by Assumption \ref{assumption1}, the proposition will follow. 
 
 \tikzset{->-/.style={decoration={
  markings,
  mark=at position .5 with {\arrow{>}}},postaction={decorate}}}
 \begin{figure}[h!]
\centering
\subcaptionbox{Deformation of the contour $\mathsf{C}'$ \label{fig:fig2a}}{\begin{tikzpicture}[x=1cm,y=0.4cm]
  \draw[-stealth] (-3,0)--(2.5,0) node[right]{\text{Re}~x}; % x axis
  \draw[-stealth] (0,-8)--(0,5) node[above]{\text{Im}~x}; % y axis
  \draw[dashed,black,->-] (-3,-3)--(3,-3) node[above]{}; % a line...
\draw[dashed,blue](-3,-3) .. controls (-3,-3) .. (-2.1,-4.5)
		            (-1,-4.5).. controls (0,-3)..(3,-3);	          
\draw[dashed,blue,->] (-2.1,-4.5) to[bend right] (-1,-4.5);	
\node at (0.2,0.5) {$0$};
\node at (-2.5,-4) {$\substack{\theta_t -\frac{iQ}2-\frac{ib}2}$};
 % \node at(0.5,-1){$\substack{-\frac{ib}{2}}$};
  %  \node at(0.7,-4){$\substack{-\frac{iQ}{2}-\frac{ib}{2}}$};
  \node at(3,-2.4){${\color{black}{\mathsf{F}'}}$};
  \node at(3,-3.7){${\color{blue}{\mathsf{F}''}}$};
\draw[->-](-1.5,-4)--(0,0);
 \draw[dashed,red,decoration={markings, mark=at position 0.125 with {\arrow{<}}},
        postaction={decorate}] (-1.5,-4) circle (0.2 cm);

 \fill (-1.5,-4)  circle[radius=1.5pt];   \fill (-1.5,-6)  circle[radius=1.5pt]; \fill (-1.5,-8)  circle[radius=1.5pt];
  
 \fill (-.5,-4)  circle[radius=1.5pt];  \fill (-.5,-6)  circle[radius=1.5pt]; \fill (-.5,-8)  circle[radius=1.5pt];
    
 \fill (1.5,-4)  circle[radius=1.5pt];   \fill (1.5,-6)  circle[radius=1.5pt];  \fill (1.5,-8)  circle[radius=1.5pt];

\fill (-1,-1)  circle[radius=1.5pt];    \fill (-1,1)  circle[radius=1.5pt];  \fill (-1,3)  circle[radius=1.5pt];
\fill (0,0)  circle[radius=1.5pt];  \fill (0,2)  circle[radius=1.5pt];  \fill (0,4)  circle[radius=1.5pt];
\fill (1,-1)  circle[radius=1.5pt];   \fill (1,1)  circle[radius=1.5pt];  \fill (1,3)  circle[radius=1.5pt];
\end{tikzpicture}}\hspace{1cm}\subcaptionbox{Deformation of the contour $\mathbb{R}_+$ \label{fig:fig2b}}{\begin{tikzpicture}[x=1cm,y=0.4cm]
  \draw[-stealth] (0,0)--(6,0) node[right]{\text{Re}~$\sigma_s$}; % x axis
  \draw[-stealth] (0,-8)--(0,5) node[above]{\text{Im}~$\sigma_s$}; % y axis
 % \draw[dashed,black,->-] (-6,-3)--(6,-3) node[above]{}; % a line...
\draw[dashed,blue,->](0,0) .. controls (0,0) .. (.5,-2)
                             % (.4,-2.5).. controls(1.5,-2.5)..(2.5,2.5);
			(.5,-2) .. controls (1,-3.7) .. (1.5,-2)
			(1.5,-2) .. controls (2,0) .. (2.5,2)
			(2.5,2) .. controls (3,3.7) .. (3.5,2)
			(3.5,2) .. controls (4,0) .. (5,0);
 \draw[dashed,red,decoration={markings, mark=at position 0.125 with {\arrow{>}}},
        postaction={decorate}] (3,1) circle (0.2 cm);
\draw[dashed,red,decoration={markings, mark=at position 0.125 with {\arrow{<}}},
        postaction={decorate}] (1,-1) circle (0.2 cm);
  \fill (3,1)  circle[radius=1.5pt];   \fill (3,7)  circle[radius=1.5pt];  %\fill (3,6)  circle[radius=1.5pt];
 \fill (1,-1)  circle[radius=1.5pt];   \fill (1,-7)  circle[radius=1.5pt]; %\fill (1,-6)  circle[radius=1.5pt];
\node at (4.2,7) {$\substack{\theta_0-\theta_t+\frac{iQ}{2}+ib}$};
\node at (4,1) {$\substack{\theta_0-\theta_t+\frac{iQ}{2}}$};
\node at (0.05,-1) {$\substack{\theta_0+\theta_t-\frac{iQ}{2}}$};
\node at (-0.2,-7) {$\substack{\theta_0+\theta_t-\frac{iQ}{2}-ib}$};
\node at (1.9,1) {\color{blue}{{$\mathbb{R}'_+$}}};
\draw[->] (1,-1)--(1.9,2.8);
\draw[->] (3,1)--(2.1,-2.8);
\draw[->] (3,7)--(2.1,3.3);
\draw[->] (1,-7)--(1.9,-3.3);
  \fill (2,3)  circle[radius=1.5pt];   \fill (2,-3)  circle[radius=1.5pt]; 
  \node at (1.5,3.8) {$\substack{\theta_0+\tfrac{ib}{2}}$};
  \node at (2.5,-3.6) {$\substack{\theta_0-\tfrac{ib}{2}}$};
\end{tikzpicture}}
\caption{Schematic illustration of the deformations of the contours $\mathsf{C}'$ and $\mathbb{R}_+$.\label{fig:fig2}}
\end{figure}
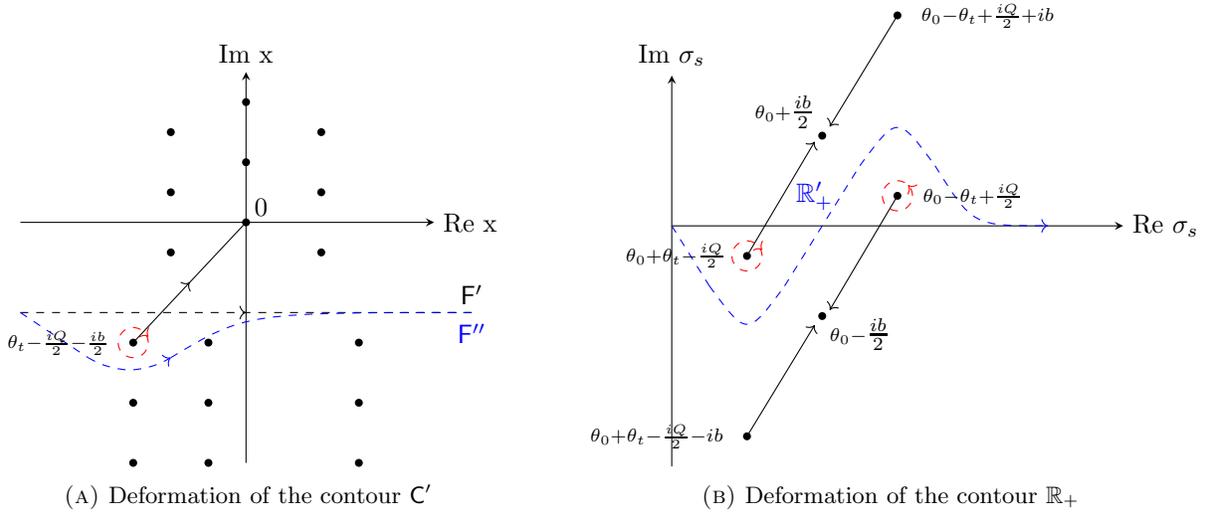

We consider the two components of equation (\ref{continuationodd}) involving the limits $\nu \to \nu_+$ and $\nu \to \nu_-$ separately. 

\textbf{The case $\nu \to \nu_+$.} 
We first deform the contour $\mathsf{C}$ in \eqref{continuation} into a suitable contour $\mathsf{C'}$ such that no pole in $x$ crosses $\mathsf{C}'$ in the limit $\nu\to\nu_+$. For $\nu = \nu_+$, we have 
$$I^{(n)}\left[\substack{x\;\;\;\theta_t\vspace{0.08cm} \\ \theta_{*}\;\;\;\theta_0};\substack{\nu_+\vspace{0.15cm} \\  \sigma_s}\right] = e^{\frac{\pi x}{2} (-1)^n \lb b+Q-2i \theta_t \rb}  \tfrac{s_b \lb x+\frac{ib}{2}-\theta_t \rb}{s_b \lb x+\frac{iQ}{2} \rb} \times \displaystyle \prod_{j=\pm1} \tfrac{s_b \lb x+\frac{ib}{2}+j \theta_0-\theta_* \rb}{s_b \lb x+\frac{ib}{2}+\frac{iQ}{2}-\theta_t-\theta_* +j\sigma_s \rb}.$$
In the limit $\theta_t \to \frac{iQ}{2}+\frac{ib}{2}$, the pole of $s_b \lb x+\frac{ib}{2}-\theta_t \rb$ located at $x=-\frac{iQ}{2}-\frac{ib}{2}+\theta_t$ crosses the contour of integration $\mathsf{C}'$ and collides with the pole of $s_b \big( x+\frac{iQ}{2} \big)^{-1}$ located at $x=0$. Therefore, we choose to deform the contour $\mathsf{C}'$ as shown in Figure \ref{fig:fig2a}, so that the integral over $\mathsf C'$ becomes an integral over $\mathsf C''$ plus a residue at $x=\theta_t-\frac{iQ}2-\frac{ib}2$. Moreover, as described above, we deform the integration contour $\mathbb{R}_+$ as in Figure \ref{fig:fig2b}.

Using the contour deformations described in Figure \ref{fig:fig2}, we find that the BPZ limit of \eqref{continuation} is given by
\begin{align} \label{X}
& \lim_{\theta_t \to \frac{i Q}{2}+\frac{ib}{2}} \lim_{\nu \to \nu_+} 
\int_0^{+\infty} d\sigma_s~\mathcal{C}_{n}\left[\substack{\theta_t\vspace{0.08cm} \\ \theta_{*}\;\;\;\theta_0};\substack{\nu\vspace{0.15cm} \\  \sigma_s}\right]  \mathcal{B}(t)
= T_1 + T_2 + T_3 + T_4,
\end{align}
where $T_1, T_2, T_3, T_4$ are given by
\begin{align*}
& T_1 = \lim_{\theta_t \to \frac{iQ}{2}+\frac{ib}{2}} P_1(\nu_+,\theta_t,\theta_*) \displaystyle \int_{\mathbb{R}'_+} d\sigma_s \displaystyle \int_{\mathsf{C}''}dx P_2^{(n)}\left[\substack{\theta_t\vspace{0.08cm} \\ \theta_{*}\;\;\;\theta_0};\substack{\nu_+\vspace{0.15cm} \\  \sigma_s}\right] I^{(n)}\left[\substack{x\;\;\;\theta_t\vspace{0.08cm} \\ \theta_{*}\;\;\;\theta_0};\substack{\nu_+\vspace{0.15cm} \\ \sigma_s}\right] \mathcal{B}(t),
	\\
& T_2 = -2i\pi\lim_{\theta_t \to \frac{iQ}{2}+\frac{ib}{2}} P_1(\nu_+,\theta_t,\theta_*) \displaystyle \sum_{\epsilon=\pm1}  \displaystyle \int_{\mathsf{C}''}dx \epsilon\underset{\sigma_s=\theta_0+\epsilon(\theta_t-\frac{iQ}{2})}{\text{Res}}\left\{P_2^{(n)}\left[\substack{\theta_t\vspace{0.08cm} \\ \theta_{*}\;\;\;\theta_0};\substack{\nu_+\vspace{0.15cm} \\  \sigma_s}\right]I^{(n)}\left[\substack{x\;\;\;\theta_t\vspace{0.08cm} \\ \theta_{*}\;\;\;\theta_0};\substack{\nu_+\vspace{0.15cm} \\ \sigma_s}\right]\mathcal{B}(t) \right\},
	\\ \nonumber
& T_3 =  - 2i \pi \lim_{\theta_t \to \frac{iQ}{2}+\frac{ib}{2}} P_1(\nu_+,\theta_t,\theta_*)  \displaystyle \int_{\mathbb{R}'_+} d\sigma_s  P_2^{(n)}\left[\substack{\theta_t\vspace{0.08cm} \\ \theta_{*}\;\;\;\theta_0};\substack{\nu_+\vspace{0.15cm} \\  \sigma_s}\right] \underset{x=\theta_t-\frac{iQ}{2}-\frac{ib}{2}}{\text{Res}}   I^{(n)}\left[\substack{x\;\;\;\theta_t\vspace{0.08cm} \\ \theta_{*}\;\;\;\theta_0};\substack{\nu_+\vspace{0.15cm} \\  \sigma_s}\right] \mathcal{B}(t),
	\\ \nonumber
& T_4 = -4\pi^2 \lim_{\theta_t \to \frac{iQ}{2}+\frac{ib}{2}}P_1(\nu_+,\theta_t,\theta_*)\displaystyle \sum_{\epsilon=\pm1}\epsilon\underset{\sigma_s=\theta_0+\epsilon(\theta_t-\frac{iQ}{2})}{\text{Res}} \left\{ P_2^{(n)}\left[\substack{\theta_t\vspace{0.08cm} \\ \theta_{*}\;\;\;\theta_0};\substack{\nu_+\vspace{0.15cm} \\ \sigma_s}\right] \underset{x=\theta_t-\frac{iQ}{2}-\frac{ib}{2}}{\text{Res}} I^{(n)}\left[\substack{x\;\;\;\theta_t\vspace{0.08cm} \\ \theta_{*}\;\;\;\theta_0};\substack{\nu_+\vspace{0.15cm} \\ \sigma_s}\right] \mathcal{B}(t) \right\}.
\end{align*}
The contours $\mathsf{C}''$ and $\mathbb{R}'_+$ are not pinched in the limit $\theta_t \to \frac{iQ}{2}+\frac{ib}{2}$. Therefore the double integral in $T_1$ is regular at $\theta_t=\frac{iQ}{2}+\frac{ib}{2}$; in view of the double zero of $P_1(\nu_+,\theta_t,\theta_*)$, it follows that $T_1 = 0$. 
On the other hand, we have
\begin{align}\label{sigmasresidue}
-2i\pi \underset{\sigma_s=\theta_0+\epsilon(\theta_t-\tfrac{iQ}{2})}{\text{Res}}P_2^{(n)}\left[\substack{\theta_t\vspace{0.08cm} \\ \theta_{*}\;\;\;\theta_0};\substack{\sigma_s\vspace{0.15cm} \\  \nu_+}\right] = g_b \lb \frac{iQ}{2}-2\theta_t \rb C_\epsilon, \qquad \epsilon=\pm1,
\end{align}
where $C_\epsilon$ is regular in the limit $\theta_t \to \frac{iQ}{2}+\frac{ib}{2}$. Consequently, the pole of the right-hand side of (\ref{sigmasresidue}) at $\theta_t = \frac{iQ}{2}+\frac{ib}{2}$ is only simple, so again, thanks to the double zero of $P_1(\nu_+,\theta_t,\theta_*)$, we have $T_2 = 0$. 
The term $T_3$ also vanishes because of a similar argument. Indeed, a calculation yields
% \frac{ g_b \lb \frac{iQ}{2} \rb g_b \lb-\frac{iQ}{2}\pm2\theta_0 \rb}{g_b \lb -\frac{3iQ}{2}\pm 2 \theta_0+2\theta_t \rb} \prod_{\epsilon=\pm1} \frac{g_b \lb \frac{iQ}{2}-\theta_*+\epsilon(\theta_t \pm \theta_0) \rb}{g_b \lb \frac{ib}{2}+\epsilon \theta_0-\theta_* \rb}.
$$-2i \pi\underset{x=\theta_t-\frac{iQ}{2}-\frac{ib}{2}}{\text{Res}} ~ I^{(n)}\left[\substack{x\;\;\;\theta_t\vspace{0.08cm} \\ \theta_{*}\;\;\;\theta_0};\substack{\nu_+\vspace{0.15cm} \\ \sigma_s}\right]=e^{(-1)^{n+1}\frac{i\pi}4 \lb (b+Q)^2+4\theta_t^2\rb} \tfrac{1}{s_b \lb-\frac{ib}{2}+\theta_t \rb} \displaystyle \prod_{j=\pm} \tfrac{s_b \lb -\frac{i Q}{2}+j \theta_0-\theta_*+\theta_t \rb}{s_b \lb -\theta_*+ j \sigma_s \rb},$$
and the pole of this expression at $\theta_t = \frac{iQ}{2}+\frac{ib}{2}$ is only simple. 

It only remains to compute $T_4$. Let us write
$$T_4 = 
 \lim_{\theta_t \to \frac{iQ}{2}+\frac{ib}{2}}P_1(\nu_+,\theta_t,\theta_*)\displaystyle \sum_{\epsilon=\pm1}
 X_\epsilon \mathcal{B},$$
where $X_\epsilon$ is defined for $\epsilon = \pm 1$ by
$$X_\epsilon = -4\pi^2\epsilon\underset{\sigma_s=\theta_0 +\epsilon (\theta_t-\frac{iQ}{2})}{\text{Res}} \left\{P_2^{(n)}\left[\substack{\theta_t\vspace{0.08cm} \\ \theta_{*}\;\;\;\theta_0};\substack{\nu_+\vspace{0.15cm} \\ \sigma_s}\right] \underset{x=\theta_t-\frac{iQ}{2}-\frac{ib}{2}}{\text{Res}} I^{(n)}\left[\substack{x\;\;\;\theta_t\vspace{0.08cm} \\ \theta_{*}\;\;\;\theta_0};\substack{\nu_+\vspace{0.15cm} \\ \sigma_s}\right]\right\}.$$
A tedious but straightforward computation shows that
\begin{align} \nonumber
X_\epsilon = &\; g_b \lb \frac{ib}{2}-\theta_t\rb g_b \lb \frac{iQ}{2}-2\theta_t \rb\tfrac{g_b \big( \tfrac{iQ}{2} \big) e^{\frac{i\pi}{4}(-1)^{n+1}\lb (b+Q)^2+4\theta_t^2 \rb}}{g_b \lb -\frac{ib}{2}+\theta_t \rb} \lb be^{2i\pi(\lfloor \frac{n}2 \rfloor-\frac12)}\rb^{\frac{b^2}{2}+ib\theta_*+(\frac{Q}{2} +\epsilon i\theta_0)(Q+2i\theta_t)}  
	\\ \label{xpm} 
& \times \tfrac{g_b \lb -\frac{iQ}{2} +\epsilon 2\theta_0 \rb g_b \lb -\frac{iQ}{2}+\epsilon \theta_0+\theta_*+\theta_t\rb}{g_b \lb -\frac{3iQ}{2}+2\epsilon\theta_0+2\theta_t \rb g_b \lb \frac{iQ}{2}+\epsilon\theta_0+\theta_*-\theta_t \rb} \prod_{j=\pm1} \tfrac{g_b \lb -\tfrac{iQ}{2}+j\theta_0-\theta_*+\theta_t\rb}{g_b \lb \frac{ib}{2}+j\theta_0-\theta_* \rb}, \qquad \epsilon=\pm1.
\end{align}
Due to the first two factors on the right-hand side of \eqref{xpm}, $X_\epsilon$ has a double pole at $\theta_t = \frac{iQ}{2}+\frac{ib}{2}$. This pole is canceled by the double zero of $P_1(\nu_+,\theta_t,\theta_*)$ and a computation yields, for $\epsilon=\pm1$,
\begin{align*}
\mathcal{R}_\epsilon := \lim_{\theta_t \to \frac{iQ}{2}+\frac{ib}{2}} P_1(\nu_+,\theta_t, \theta_*) X_\epsilon =  b^{-\frac12-ib(\epsilon \theta_0-\theta_*)} e^{2\pi b \lb \lfloor \frac{n}2 \rfloor -\frac12 \rb \lb \frac{ib}2-\frac{iQ}2+\epsilon \theta_0-\theta_* \rb} \tfrac{g_b \lb -\frac{iQ}{2}+ 2\epsilon\theta_0 \rb}{g_b \lb ib-\frac{iQ}{2}+ 2\epsilon\theta_0 \rb}\tfrac{g_b \lb \frac{ib}{2}+\theta_*+\epsilon \theta_0 \rb}{g_b \lb -\frac{ib}{2}+\theta_*+\epsilon \theta_0 \rb}.
\end{align*}
Using the identity \eqref{propgb} satisfied by the function $g_b(z)$, we find
\beq
\mathcal{R}_\epsilon=e^{-2i\pi \lb \lfloor \frac{n}2 \rfloor-\frac12 \rb \lb \frac12+ib(\epsilon \theta_0-\theta_*) \rb} \frac{\Gamma \lb -2ib\epsilon \theta_0 \rb}{\Gamma \lb \frac12-ib(\epsilon \theta_0+\theta_*) \rb}, \qquad \epsilon = \pm1.
\eeq
We conclude that
\beq
\mathcal R_+ = K_{-+}^{(n)}(\theta_*,-\theta_0), \quad \mathcal R_- = K_{-+}^{(n)}(\theta_*,\theta_0),
\eeq
where $K_{-+}^{(n)}(\theta_*,\theta_0)$ is defined in \eqref{kelement}. Using the BPZ limit $\mathcal{B}(t) \to \boldsymbol{B}(t) = (B_+(t), B_-(t))$ given in \eqref{limitB}, it is concluded that
\beq \begin{split}
\lim_{\theta_t \to \frac{i Q}{2}+\frac{ib}{2}} \lim_{\nu \to \nu_+} & \int_0^{+\infty} d\sigma_s~\mathcal{C}_n\left[\substack{\theta_t\vspace{0.08cm} \\ \theta_{*}\;\;\;\theta_0};\substack{\nu\vspace{0.15cm} \\  \sigma_s}\right] \mathcal B\lb \theta_*;\sigma_s;\substack{\theta_t \\ \theta_0};t\rb 
= T_4 = \mathcal{R}_-B_+(t) + \mathcal{R}_+B_-(t)
	\\
&=K_{-+}^{(n)}(\theta_*,\theta_0)B_+(t) +K_{-+}^{(n)}(\theta_*,-\theta_0) B_-(t).\end{split} 
\eeq
In view of the expression (\ref{knm}) for $C_n$, the right-hand side equals the second component of $C_n \boldsymbol{B}(t)$. This completes the proof of the second component of \eqref{continuationodd}. \\

\textbf{The case $\nu \to \nu_-$.} A similar mechanism occurs in this case. 
For $\nu = \nu_-$, we have
$$I^{(n)} \left[\substack{x\;\;\;\theta_t\vspace{0.08cm} \\ \theta_{*}\;\;\;\theta_0};\substack{\nu_-\vspace{0.15cm} \\  \sigma_s}\right] = e^{\frac{\pi x}{2}(-1)^n \lb -b+Q-2i \theta_t \rb}  \tfrac{s_b \lb x-\frac{ib}{2}-\theta_t \rb}{s_b \lb x+\frac{iQ}{2} \rb} \times \displaystyle \prod_{j=\pm1} \tfrac{s_b \lb x- \frac{ib}{2} +j \theta_0-\theta_* \rb}{s_b \lb x- \frac{ib}{2}+\frac{iQ}{2}-\theta_t-\theta_* +j \sigma_s \rb}.$$
In the limit $\theta_t \to \frac{iQ}{2}+\frac{ib}{2}$, the poles of $s_b \lb x-\frac{ib}{2}-\theta_t \rb$ located at $x_1:=\theta_t+\frac{ib}{2}-\frac{iQ}{2}$ and $x_2:=\theta_t+\frac{ib}{2}-\frac{iQ}{2}-ib$ cross the contour of integration and collide with the poles of $s_b \lb x+\frac{iQ}{2}\rb^{-1}$ located at $x=+ib$ and $x=0$, respectively. Deforming the contour $\mathsf{C}'$ into $\mathsf{C}''$ in a similar manner as before, and noting that the analogs of $T_1, T_2$, and $T_3$ vanish also in this case, we arrive at
\beq \label{limlimnuminus}
\begin{split}
 \lim_{\theta_t \to \frac{i Q}{2}+\frac{ib}{2}} \lim_{\nu \to \nu_-}
& \int_0^{+\infty} d\sigma_s~\mathcal{C}_{n}\left[\substack{\theta_t\vspace{0.08cm} \\ \theta_{*}\;\;\;\theta_0};\substack{\nu\vspace{0.15cm} \\  \sigma_s}\right]  \mathcal{B}(t)
 =-4\pi^2 \lim_{\theta_t \to \frac{iQ}{2}+\frac{ib}{2}}P_1(\nu_-,\theta_t,\theta_*)
 	\\
&\times \sum_{\epsilon=\pm1} \epsilon \underset{\sigma_s=\theta_0+\epsilon(\theta_t-\frac{iQ}{2})}{\text{Res}} \left\{ P_2^{(n)}\left[\substack{\theta_t\vspace{0.08cm} \\ \theta_{*}\;\;\;\theta_0};\substack{\nu_- \vspace{0.15cm} \\  \sigma_s}\right] \displaystyle \sum_{j=1,2}\underset{x=x_j}{\text{Res}} I^{(n)}\left[\substack{x\;\;\;\theta_t\vspace{0.08cm} \\ \theta_{*}\;\;\;\theta_0};\substack{\nu_-\vspace{0.15cm} \\  \sigma_s}\right] \mathcal{B}(t) \right\}.
\end{split} \eeq
The residues at $x_1$ and $x_2$ can be computed using the property \eqref{differencesb} of the function $s_b(z)$. We obtain
\begin{equation*} \begin{split} & X_1 :=  -2i\pi \underset{x=x_1}{\text{Res}} I^{(n)}\left[\substack{x\;\;\;\theta_t\vspace{0.08cm} \\ \theta_{*}\;\;\;\theta_0};\substack{\nu_-\vspace{0.15cm} \\  \sigma_s}\right]
=e^{(-1)^{n+1}i\pi\lb \theta_t^2+\frac{1}{4b^2} \rb}\frac{\operatorname{sech}{\lb \pi b \theta_t \rb}}{2s_b \lb \theta_t-\frac{ib}{2}\rb}\displaystyle \prod_{j=\pm1} \frac{s_b \lb -\frac{iQ}{2}+\theta_t-\theta_*+j \theta_0\rb}{s_b \lb -\theta_*+j \sigma_s\rb}, \\
& \resizebox{\hsize}{!}{$X_2 :=  -2i\pi \underset{x=x_2}{\text{Res}} I^{(n)}\left[\substack{x\;\;\;\theta_t\vspace{0.08cm} \\ \theta_{*}\;\;\;\theta_0};\substack{\nu_-\vspace{0.15cm} \\  \sigma_s}\right]
=e^{(-1)^{n+1}i\pi(\frac{Q}{2}-\frac{b}{2}-i\theta_t)(\frac{Q}{2}+\frac{b}{2}+i\theta_t)}\frac{\operatorname{sec}{\lb \frac{\pi b}{2} (b+Q) \rb}}{2s_b \lb \theta_t-\frac{ib}{2}\rb}\displaystyle \prod_{j=\pm1} \frac{s_b \lb -ib-\frac{iQ}{2}+\theta_t-\theta_*+j \theta_0\rb}{s_b \lb -ib-\theta_*+j \sigma_s\rb}.$}\end{split}\end{equation*}
Defining $R_{j,\epsilon}$ for $j = 1,2$ and $\epsilon = \pm 1$ by
$$R_{j,\epsilon} = -2i\pi  \lim_{\theta_t \to \frac{iQ}{2}+\frac{ib}{2}}P_1(\nu_-,\theta_t,\theta_*)\epsilon\underset{\sigma_s=\theta_0+\epsilon(\theta_t-\frac{iQ}{2})}{\text{Res}} \left\{P_2^{(n)}\left[\substack{\theta_t\vspace{0.08cm} \\ \theta_{*}\;\;\;\theta_0};\substack{\nu_- \vspace{0.15cm} \\  \sigma_s}\right]X_j\right\},$$
a long computation yields, for $\epsilon=\pm1$,
\begin{equation*} \begin{split}
R_{1,\epsilon} = & -e^{2 i \pi  \left(\left\lfloor \frac{n}{2}\right\rfloor -\frac{1}{2}\right)
   \left(-\frac{1}{2}-i b (\theta_*+\epsilon \theta_0)\right)}e^{(-1)^n i\pi bQ}\csc \left(\pi  b^2\right) \cosh (\pi  b (\theta_*+\epsilon\theta_0))\frac{\Gamma (-2 i b \epsilon  \theta_0)}{\Gamma \left(\frac{1}{2}-i b (\epsilon \theta_0-\theta_*)\right)}, 
   	\\
R_{2,\epsilon} = & -e^{2 i \pi  \left(\left\lfloor \frac{n}{2}\right\rfloor -\frac{1}{2}\right)
   \left(-\frac{1}{2}-i b (\theta_*+\epsilon \theta_0)\right)}  \csc \left(\pi  b^2\right) \cosh (\pi  b (i b+\theta_*+\epsilon\theta_0)) \frac{\Gamma (-2 i b \epsilon  \theta_0)}{\Gamma \left(\frac{1}{2}-i b (\epsilon \theta_0-\theta_*)\right)}.\end{split}
   \end{equation*}
It follows that
$$R_+ :=R_{1,+}+R_{2,+}=K^{(n)}_{++}(\theta_*,-\theta_0), \qquad R_- :=R_{1,-}+R_{2,-}=K^{(n)}_{++}(\theta_*,\theta_0),$$
where $K^{(n)}_{++}(\theta_*,\theta_0)$ is defined in \eqref{kelement}. We conclude that
$$\lim_{\theta_t \to \frac{i Q}{2}+\frac{ib}{2}} \lim_{\nu \to \nu_-} \int_0^{+\infty} d\sigma_s~\mathcal{C}_n\left[\substack{\theta_t\vspace{0.08cm} \\ \theta_{*}\;\;\;\theta_0};\substack{\nu\vspace{0.15cm} \\  \sigma_s}\right] \mathcal{B}(t) = K^{(n)}_{++}(\theta_*,\theta_0) B_+(t) + K^{(n)}_{++}(\theta_*,-\theta_0) B_-(t).$$ 
Recalling the expression (\ref{knm}) for $C_n$, this completes the proof also of the first component of \eqref{continuationodd}.
\end{proof}

\subsection{BPZ limit of $\mathcal{D}_n(t)$} 
Our next proposition shows that the confluent conformal blocks of the second kind $\mathcal{D}_n$ reduce to the degenerate confluent conformal blocks of the second kind $\boldsymbol{D}_{n}$ in the BPZ limit. 

\begin{proposition}[BPZ limit of $\mathcal{D}_n(t)$]\label{continuationCB}
Define $\nu_\pm=-\frac{\theta_*}{2}\pm \frac{ib}{2}$. The following BPZ limit holds:
\beq 
\lim_{\theta_t \to \frac{iQ}{2}+\frac{ib}{2}} \begin{pmatrix} \lim\limits_{\substack{\nu \to \nu_-}} {\mathcal D}_{n}\lb\substack{\theta_t\\ \theta_*};\nu;\theta_0;t\rb  \\
\lim\limits_{\substack{\nu \to \nu_+}}{\mathcal D}_{n}\lb\substack{\theta_t\\ \theta_*};\nu;\theta_0;t\rb\end{pmatrix}=\boldsymbol{D}_{n}(t), \qquad n = 1, 2, \dots,
\eeq
where $ {\mathcal D}_{n}\lb\substack{\theta_t\\ \theta_*};\nu;\theta_0;t\rb$ is defined in \eqref{dk} and $\boldsymbol{D}_{n}(t)$ given in \eqref{solutionsinf} is the solution of the confluent BPZ equation in the Stokes sector $\Omega_n$.
\end{proposition}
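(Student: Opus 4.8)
The plan is to obtain this proposition as an immediate consequence of the integral representation of Theorem~\ref{mainth1} combined with the kernel computation of Proposition~\ref{BPZlimitCnprop}. Indeed, by Theorem~\ref{mainth1}, for $t\in\Omega_n$ one has the exact identity
\ben
\mathcal{D}_{n}\lb\substack{\theta_t\\ \theta_*};\nu;\theta_0;t\rb = \int_0^{+\infty} d\sigma_s~\mathcal{C}_{n}\left[\substack{\theta_t\vspace{0.08cm} \\ \theta_{*}\;\;\;\theta_0};\substack{\nu\vspace{0.15cm} \\  \sigma_s}\right]  \mathcal B\lb \theta_*;\sigma_s;\substack{\theta_t \\ \theta_0};t\rb .
\ebn
Taking the limits $\nu\to\nu_-$ and $\nu\to\nu_+$ in the two components, and then the limit $\theta_t\to\frac{iQ}{2}+\frac{ib}{2}$, the vector on the left-hand side of the asserted identity is therefore exactly the quantity evaluated in \eqref{continuationodd} of Proposition~\ref{BPZlimitCnprop}, whose value is $C_n\boldsymbol{B}(t)$, with $\boldsymbol{B}(t)$ the solution basis \eqref{whittaker1} of the confluent BPZ equation and $C_n$ the connection matrix \eqref{knm}.

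It then only remains to identify $C_n\boldsymbol{B}(t)$ with $\boldsymbol{D}_n(t)$, which is precisely the defining relation \eqref{conni} of the connection matrices of the confluent BPZ equation, valid for $t\in\Omega_n$ and $n=1,2,\dots$. This establishes the claim for $t\in\Omega_n$. To conclude for all arguments, I would note that $\mathcal{D}_n(t)$ is analytic in $t$ on the universal cover of $\mathbb{C}\setminus\{0\}$ (via \eqref{dk} and the entirety of $\mathcal{B}(t)$ from Assumption~\ref{assumption1}), as is $\boldsymbol{D}_n(t)$ since it is a product of Whittaker functions, cf.\ \eqref{solutionsinf}; hence the identity propagates to all values of $\arg t$ by analytic continuation, in agreement with the periodicity relations \eqref{period} and \eqref{cyclic}.

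All of the genuine analytic work for this proposition has in fact already been carried out in Proposition~\ref{BPZlimitCnprop}, where the pinching of the contours $\mathsf{C}$ and $\mathbb{R}_+$ between colliding poles of $s_b$ and $g_b$ produces the double poles that cancel the double zero of the prefactor $P_1$. The only point requiring care at this stage is purely organisational: one must check that the order of limits ($\nu$ first, then $\theta_t$) and the pairing of the upper/lower components with $\nu_-$/$\nu_+$ adopted here coincide with those used in Proposition~\ref{BPZlimitCnprop}, and that the two rows of $C_n\boldsymbol{B}(t)$ are matched with the two entries of $\boldsymbol{D}_n(t)$ in \eqref{solutionsinf} in the correct order; no new estimates are needed.
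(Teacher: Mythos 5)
Your argument is correct and is exactly the paper's proof: combine the integral representation \eqref{dk} with the kernel limit \eqref{continuationodd} of Proposition \ref{BPZlimitCnprop} and identify $C_n\boldsymbol{B}(t)=\boldsymbol{D}_n(t)$ via \eqref{conni}. The closing remarks about analytic continuation beyond $\Omega_n$ are unnecessary (the statement lives in the Stokes sector $\Omega_n$ where \eqref{dk} and \eqref{conni} both hold), but they do not affect the validity of the proof.
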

\begin{proof}
The proposition follows immediately from Proposition \ref{BPZlimitCnprop} and the identities \eqref{dk} and \eqref{conni}.
\end{proof}

\begin{remark}\upshape
It can also be argued directly that $\mathcal{D}_n(t) \to \boldsymbol{D}_{n}(t)$ in the BPZ limit as follows. First, it is easy to verify that the prefactor $e^{2i\pi(j-1)(\theta_*^2/2 -2\nu^2)}$ in (\ref{calDndef}) reduces to the prefactor $(e^{i\pi b^2} e^{2\pi b \theta_* \sigma_3})^{j-1}$ in (\ref{confluentlimits}) in the BPZ limit. Second, using the convergent series expansions for the four-point conformal blocks, it can be verified numerically to high precision that the following limits hold:
\beq \label{calFBPZlimit}\begin{split}
& \lim_{\theta_t \to \frac{iQ}{2}+\frac{ib}{2}} \begin{pmatrix} \lim\limits_{\substack{\sigma_t \to \theta_1 + \frac{ib}2}} z^{-2\Delta(\theta_t)}  \mathcal F\lb\substack{\theta_\infty\;\quad 
     \theta_{t}\\ \sigma_t \\ \theta_0\quad \theta_1};\substack{1-\frac1z}\rb \\
\lim\limits_{\substack{\sigma_t \to \theta_1-\frac{ib}2}} z^{-2\Delta(\theta_t)} \mathcal F\lb\substack{\theta_\infty\;\quad 
     \theta_{t}\\ \sigma_t \\ \theta_0\quad \theta_1};\substack{1-\frac1z}\rb\end{pmatrix}=\boldsymbol{F}^1(z), \\
& \lim_{\theta_t \to \frac{iQ}{2}+\frac{ib}{2}} \begin{pmatrix} \lim\limits_{\substack{\sigma_u \to \theta_\infty - \frac{ib}2}}  z^{-2\Delta(\theta_t)} \mathcal F\lb\substack{\theta_{1}\;\quad 
      \theta_{t}\\ \sigma_u \\ \theta_0 \quad \theta_\infty};\tfrac1z\rb \\
\lim\limits_{\substack{\sigma_u \to \theta_\infty+\frac{ib}2}} z^{-2\Delta(\theta_t)} \mathcal F\lb\substack{\theta_{1}\;\quad 
      \theta_{t}\\ \sigma_u \\ \theta_0 \quad \theta_\infty};\tfrac1z\rb \end{pmatrix}=\boldsymbol{F}^\infty(z).
\end{split} \eeq
Performing the change of variables 
$$\theta_1=\frac{\Lambda-\theta_*}2, \quad \theta_\infty = \frac{\Lambda-\theta_*}2, \quad \sigma_t = \frac{\Lambda}2-\nu, \quad \sigma_u = \frac{\Lambda}2+\nu,$$ 
the limits (\ref{calFBPZlimit}) imply that
\beq 
\lim_{\theta_t \to \frac{iQ}{2}+\frac{ib}{2}} \begin{pmatrix} \lim\limits_{\substack{\nu \to \nu_-}} \tilde{\mathcal{F}}^p(z,\Lambda,\nu) \\ \lim\limits_{\substack{\nu \to \nu_+}} \tilde{\mathcal{F}}^p(z,\Lambda,\nu) \end{pmatrix} = \tilde{\boldsymbol{F}}^p(z,\Lambda)
\eeq
for $p=1,\infty$; in other words, that $\tilde{\mathcal{F}}^1 \to \tilde{\boldsymbol{F}}^1$ and $\tilde{\mathcal{F}}^\infty \to \tilde{\boldsymbol{F}}^\infty$ in the BPZ limit. Comparing (\ref{confluentlimits}) and (\ref{calDndef}), we see that these limits imply the limit $\mathcal{D}_n(t) \to \boldsymbol{D}_{n}(t)$.
\end{remark}

\subsection{BPZ limit of the Stokes transformations}
Our next proposition provides the BPZ limit of the right-hand side of \eqref{stokestransform} in any odd Stokes sector. 

\begin{proposition}[BPZ limit of $\mathcal{S}_n$ for $n$ odd]\label{BPZodd}
For any integer $j \geq 1$ and any $t \in \Omega_{2j-1}$, it holds that
\beq \label{continuationoddstokes}
\lim_{\theta_t \to \frac{iQ}{2}+\frac{ib}{2}} \begin{pmatrix} \lim\limits_{\substack{\nu_{2j} \to \nu_-}}\displaystyle \int_{-\infty}^{+\infty} d\nu_{2j-1} ~ \mathcal{S}_{2j-1}\left[\substack{\theta_t\vspace{0.08cm} \\ \theta_{*}\;\;\;\theta_0};\substack{\nu_{2j}\vspace{0.15cm} \\  \nu_{2j-1}}\right] {\mathcal D}_{2j-1}\lb\substack{\theta_t\\ \theta_*};\nu_{2j-1};\theta_0;t\rb  \\
\lim\limits_{\substack{\nu_{2j} \to \nu_+}} \displaystyle \int_{-\infty}^{+\infty} d\nu_{2j-1} ~ \mathcal{S}_{2j-1}\left[\substack{\theta_t\vspace{0.08cm} \\ \theta_{*}\;\;\;\theta_0};\substack{\nu_{2j}\vspace{0.15cm} \\  \nu_{2j-1}}\right] {\mathcal D}_{2j-1}\lb\substack{\theta_t\\ \theta_*};\nu_{2j-1};\theta_0;t\rb\end{pmatrix}=S_{2j-1}\boldsymbol{D}_{2j-1}(t),
\eeq
where $S_{2j-1}$ are the odd Stokes matrices of the confluent BPZ equation given by (see \eqref{otherstokes} and \eqref{stokes})
\beq \label{S2jm1}
S_{2j-1} = e^{2\pi b (j-1) \theta_* \sigma_3} S_1 e^{-2\pi b (j-1) \theta_* \sigma_3}=\left(
\begin{array}{cc}
 1 & 0 \\
 -\frac{2 i \pi  e^{-4 \pi b(j-1) \theta_*}}{\Gamma \left(\frac12+ib(\theta_0-\theta_*)\right) \Gamma \left(\frac{1}{2}-i b (\theta_0+\theta_*)\right)} & 1 \\
\end{array}
\right),  \qquad j=1,2,\dots,\eeq
and $\boldsymbol{D}_{2j-1}(t)$ are the degenerate confluent conformal blocks of the second kind in the odd Stokes sectors defined in \eqref{solutionsinf}.
\end{proposition}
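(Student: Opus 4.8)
The plan is to mimic the structure of the proof of Proposition \ref{BPZlimitCnprop}, since the Stokes kernel $\mathcal{S}_n$ has exactly the same analytic shape as the connection kernel $\mathcal{C}_n$: a prefactor built from $g_b$'s times an $x$-integral of a ratio of $s_b$'s. First I would insert the explicit formula \eqref{stokessn}--\eqref{mati} for $\mathcal{S}_{2j-1}$ into the left-hand side of \eqref{continuationoddstokes} and perform the analog of the split $\mathcal{P}_{2j-1} = P_1 \cdot P_2^{(2j-1)}$, isolating from $\mathcal{P}_{2j-1}$ the factor
$$P_1(\nu_{2j},\theta_t,\theta_*) = \frac{1}{g_b\!\left(-\theta_t+\nu_{2j}+\tfrac{\theta_*}{2}\right) g_b\!\left(-\theta_t-\nu_{2j}-\tfrac{\theta_*}{2}\right)},$$
which by \eqref{polegb} and \eqref{pref} has a \emph{double zero} at $\theta_t=\tfrac{iQ}{2}+\tfrac{ib}{2}$ when $\nu_{2j}\to\nu_\pm$. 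The remaining factor $P_2^{(2j-1)}$ together with the integrand $\mathcal{I}_{2j-1}$ will, as $\theta_t\to\tfrac{iQ}{2}+\tfrac{ib}{2}$, produce pinching of both contours of integration: the $\nu_{2j-1}$-contour $\mathbb{R}$ (here the integral runs over all of $\mathbb{R}$, so the relevant pinch comes from the $g_b$'s in the denominator of $\mathcal{P}_{2j-1}$ involving $\nu_{2j-1}$ and $\theta_t$) and the $x$-contour $\mathsf{S}$ (from the factor $s_b(x+\tfrac{iQ}{2}+(-1)^{n+1}(\nu_{2j-1}-\tfrac{\theta_*}{2})-\theta_t)^{-1}$ colliding with a pole of $s_b(x+(-1)^n\theta_*)$ or $s_b(x+\theta_0-\theta_t)$ in the numerator). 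One uses \eqref{continuationCB} to replace the confluent block $\mathcal{D}_{2j-1}(t)$ on the right-hand side by its BPZ limit $\boldsymbol{D}_{2j-1}(t)$ once the limit is taken.

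Next I would carry out the contour deformations exactly as in Figure \ref{fig:fig2}: deform $\mathsf{S}\to\mathsf{S}''$ picking up residues at the poles that cross, and deform $\mathbb{R}\to\mathbb{R}'$ picking up the residues of the pinching poles in $\nu_{2j-1}$. This decomposes the double integral into four families of terms $T_1,T_2,T_3,T_4$ (double integral over the deformed contours; single integral plus a $\nu_{2j-1}$-residue; single integral plus an $x$-residue; and the double residue). As in Proposition \ref{BPZlimitCnprop}, the first three carry at most a simple pole at $\theta_t=\tfrac{iQ}{2}+\tfrac{ib}{2}$, so they are annihilated by the double zero of $P_1$ and vanish; only $T_4$, the double-residue term, survives. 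The surviving term will be a sum over the two signs $\epsilon=\pm1$ of the pinching-pole pairs — and, as in the $\nu\to\nu_-$ case of Proposition \ref{BPZlimitCnprop}, possibly a double sum if the $x$-contour is pinched by two poles — of a product of $g_b$'s and $s_b$'s that one simplifies using the functional equations \eqref{propgb}, \eqref{differencesb}, \eqref{resgb}. The key is that after using these identities the $g_b$-ratios collapse to $\Gamma$-quotients and the remaining exponential prefactors assemble into precisely the entries of $S_{2j-1}$ in \eqref{S2jm1}; the triangular structure of $S_{2j-1}$ should emerge because one of the two $\epsilon$-contributions will carry a factor that vanishes in the limit (the analog of $T_1$ for the ``wrong'' residue), leaving only the single nonzero off-diagonal entry $-2i\pi e^{-4\pi b(j-1)\theta_*}/[\Gamma(\tfrac12+ib(\theta_0-\theta_*))\Gamma(\tfrac12-ib(\theta_0+\theta_*))]$ together with the two diagonal $1$'s.

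The main obstacle I anticipate is bookkeeping rather than conceptual: tracking the precise locations of the pinching poles of the $s_b$- and $g_b$-factors in $\mathcal{S}_{2j-1}$ as functions of $(\theta_t,\nu_{2j},\nu_{2j-1},\theta_0,\theta_*)$ and verifying that, under Assumption \ref{paramassume}, exactly the poles one expects cross the contours $\mathsf{S}$ and $\mathbb{R}$ in the limit (and no others), so that the decomposition into $T_1,\dots,T_4$ is complete and correct. A secondary delicate point is the order of limits: one must take $\nu_{2j}\to\nu_\pm$ \emph{before} $\theta_t\to\tfrac{iQ}{2}+\tfrac{ib}{2}$, and check that the $\nu_{2j-1}$-integral over $\mathbb{R}'$ in the surviving term converges and can be evaluated (again by residues or by recognizing it as a known $s_b$- or $g_b$-integral) to give the claimed $\Gamma$-function expression. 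Once the double-residue term is computed and matched entry-by-entry against \eqref{S2jm1}, invoking $\mathcal{D}_{2j-1}(t)\to\boldsymbol{D}_{2j-1}(t)$ from Proposition \ref{continuationCB} completes the argument; the even case $n=2j$ would then follow by an entirely parallel computation using the change of variables $\sigma_t=\tfrac{\Lambda}{2}-\nu_{2j}$ and the other off-diagonal Stokes matrix $S_2$.
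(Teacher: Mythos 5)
Your proposal follows essentially the same route as the paper's proof: split off the double-zero prefactor $P_1(\nu_{2j},\theta_t,\theta_*)$, deform the $\nu_{2j-1}$- and $x$-contours past the pinching poles, discard the contributions carrying at most a simple pole at $\theta_t=\tfrac{iQ}{2}+\tfrac{ib}{2}$, and evaluate the surviving double residues via \eqref{propgb} to recover the entries of $S_{2j-1}$, invoking Proposition \ref{continuationCB} at the end. The only caveats are bookkeeping points you already flag as such: the $\nu_{2j-1}$-pinch actually comes from the \emph{numerator} factors $g_b\left(-\theta_t+\epsilon(\nu_{2j-1}+\tfrac{\theta_*}{2})\right)$ of $\mathcal{P}_{2j-1}$ (a denominator $g_b$ contributes zeros, not poles), the colliding $x$-denominator is the factor involving $\nu_{2j}$ rather than $\nu_{2j-1}$, and the vanishing $(1,2)$ entry emerges from the cancellation $R_{1,+}+R_{2,+}=0$ between the two $x$-residues in the $\nu_-$ case rather than from a single vanishing factor.
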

\begin{proof}
The proof is of the same complexity as the proof of Proposition \ref{BPZlimitCnprop} and we will only describe the main steps. We consider the Stokes kernel \eqref{stokessn} for the value $n=2j-1$ and split the prefactor \eqref{matp} into two parts:
$$\mathcal P_{2j-1}\left[\substack{\theta_t\vspace{0.08cm} \\ \theta_{*}\;\;\;\theta_0};\substack{\nu_{2j}\vspace{0.15cm} \\  \nu_{2j-1}}\right] := P_1(\nu_{2j},\theta_t,\theta_*) ~ \mathcal P'_{2j-1} \left[\substack{\theta_t\vspace{0.08cm} \\ \theta_{*}\;\;\;\theta_0};\substack{\nu_{2j}\vspace{0.15cm} \\  \nu_{2j-1}}\right],$$
where $P_1(\nu,\theta_t,\theta_*)$ is given in \eqref{P1} and $\mathcal P'_{2j-1}$ is defined by
\begin{equation}\label{pp2jm1}\begin{split}
\mathcal P'_{2j-1}\left[\substack{\theta_t\vspace{0.08cm} \\ \theta_{*}\;\;\;\theta_0};\substack{\nu_{2j}\vspace{0.15cm} \\  \nu_{2j-1}}\right]
=&\; e^{i\pi\lb \nu_{2j-1} \theta_0+\nu_{2j} \theta_t+\frac{iQ}2(\theta_*+\nu_{2j}-\nu_{2j-1})+\frac{\theta_*}2(\theta_0-\theta_t)\rb}\lb e^{2i\pi(j-1)} b\rb^{2\nu_{2j-1}^2-2\nu_{2j}^2}  
	 \\
& \times e^{i\pi \lb \frac{Q^2}4-\frac{iQ}2(\theta_0-\theta_t)-\frac{\theta_*}2(\nu_{2j-1}-\nu_{2j}-\frac{\theta_*}2)-\theta_0 \theta_t-\frac{\nu_{2j}^2+\nu_{2j-1}^2}2 \rb}  
	\\
& \times \prod_{\epsilon=\pm1} \frac{g_b \lb -\theta_0+\epsilon(\nu_{2j-1}-\frac{\theta_*}{2}) \rb g_b \lb -\theta_t+\epsilon(\nu_{2j-1}+\frac{\theta_*}{2}) \rb}{g_b \lb -\theta_0+\epsilon(\nu_{2j}-\frac{\theta_*}{2}) \rb},
\end{split}
\end{equation}
In the BPZ limit, the integration contours $\mathbb{R}$ and $\mathsf{S}$ are pinched by colliding poles. The pinching of the contour $\mathbb{R}$ is caused by the factor $\prod_{\epsilon=\pm1}g_b \lb -\theta_t+\epsilon(\nu_{2j-1}+\frac{\theta_*}{2}) \rb$. Indeed, this factor has two poles which cross $\mathbb{R}$ in the limit $\theta_t \to \frac{iQ}{2}+\frac{ib}{2}$: the pole of $g_b\lb \nu_{2j-1}+\frac{\theta_*}{2}-\theta_t \rb$ located at $\nu_{2j-1}=-\frac{\theta_*}{2}-\frac{iQ}{2}+\theta_t$, as well as the pole of $g_b\lb -\nu_{2j-1}-\frac{\theta_*}{2}-\theta_t \rb$ located at $\nu_{2j-1}=-\frac{\theta_*}2+\frac{iQ}{2}-\theta_t$. Therefore, we deform $\mathbb{R}$ into a new contour $\mathbb{R}'$ so that the integral over $\mathbb{R}$ turns into an integral over $\mathbb{R}'$ plus two contributions originating from the residues at $\nu_{2j-1}=-\frac{\theta_*}2\pm(\frac{iQ}{2}-\theta_t)$. Let us first consider the case $\nu_{2j} \to \nu_+=-\frac{\theta_*}{2}+\frac{ib}{2}$. 

\textbf{The case $\nu_{2j}\to\nu_+$.} 
The contour $\mathsf{S}$ in \eqref{stokessn} can be deformed into a suitable contour $\mathsf{S}'$ such that no pole in $x$ crosses $\mathsf{S}'$ in the limit $\nu_{2j}\to\nu_+$. For $\nu_{2j} = \nu_+$, we have
$$\mathcal I_{2j-1}\left[\substack{x\;\;\;\; \theta_t\vspace{0.08cm} \\ \theta_{*}\;\;\;\theta_0};\substack{\nu_+\vspace{0.15cm} \\  \nu_{2j-1}}\right]=e^{i\pi x \lb \nu_{2j-1}+\frac{\theta_*}{2}-iQ-\frac{ib}{2} \rb} \tfrac{ s_b \lb x+\theta_0-\theta_t \rb}{s_b \lb x+\theta_0-\frac{ib}{2}+\frac{iQ}{2} \rb} \tfrac{s_b \lb x-\theta_* \rb}{s_b \lb x+\nu_{2j-1}-\frac{\theta_*}{2}+\frac{iQ}{2}-\theta_t \rb}.$$
In the limit $\theta_t \to \frac{iQ}{2}+\frac{ib}{2}$, the pole of $s_b \lb x+\theta_0-\theta_t \rb$ at $x=-\frac{iQ}{2}+\theta_t-\theta_0$ crosses the contour of integration $\mathsf{S}'$ and collides with the pole of $s_b \big( x+\theta_0-\frac{ib}{2}+\frac{iQ}{2} \big)^{-1}$ at $x=\frac{ib}{2}-\theta_0$. Hence, similarly to what has been described in Figure \ref{fig:fig2}, we deform the contour $\mathsf{S}'$ downward past the moving pole so that the integral over $\mathsf{S}'$ turns into a sum of an integral over $\mathsf{S}''$ and a residue contribution from $x=-\frac{iQ}{2}+\theta_t-\theta_0$. We find
\begin{align}\nonumber
& \lim_{\theta_t \to \frac{i Q}{2}+\frac{ib}{2}} \lim_{\nu_{2j} \to \nu_+} \int_{-\infty}^{+\infty} d\nu_{2j-1} ~ \mathcal{S}_{2j-1}\left[\substack{\theta_t\vspace{0.08cm} \\ \theta_{*}\;\;\;\theta_0};\substack{\nu_{2j}\vspace{0.15cm} \\  \nu_{2j-1}}\right] \mathcal {D}_{2j-1}\lb\substack{\theta_t\\ \theta_*};\nu_{2j-1};\theta_0;t\rb=-4\pi^2 \lim\limits_{\theta_t \to \frac{iQ}{2}+\frac{ib}{2}}P_1(\nu_+,\theta_t,\theta_*) 
	\\ \label{XX}
&\resizebox{0.91\hsize}{!}{$\times \sum_{\epsilon=\pm1}\epsilon\underset{\nu_{2j-1}=-\frac{\theta_*}2+\epsilon(\theta_t-\frac{iQ}{2})}{\text{Res}} \left\{ \mathcal P'_{2j-1}\left[\substack{\theta_t\vspace{0.08cm} \\ \theta_{*}\;\;\;\theta_0};\substack{\nu_+ \vspace{0.15cm} \\  \nu_{2j-1}}\right] \underset{x=\theta_t-\frac{iQ}{2}-\theta_0}{\text{Res}} \mathcal I_{2j-1}\left[\substack{x\;\;\;\; \theta_t\vspace{0.08cm} \\ \theta_{*}\;\;\;\theta_0};\substack{\nu_+ \vspace{0.15cm} \\  \nu_{2j-1}}\right] \mathcal {D}_{2j-1}\lb\substack{\theta_t\\ \theta_*};\nu_{2j-1};\theta_0;t\rb \right\}.$}
\end{align}
After long but straightforward computations, we obtain
\begin{align*}\begin{split}
\zeta_+ := & -4\pi^2  \underset{\nu_{2j-1}=-\frac{\theta_*}2+\theta_t-\frac{iQ}{2}}{\text{Res}} \left\{ \mathcal P'_{2j-1}\left[\substack{\theta_t\vspace{0.08cm} \\ \theta_{*}\;\;\;\theta_0};\substack{\nu_+ \vspace{0.15cm} \\  \nu_{2j-1}}\right] \underset{x=\theta_t-\frac{iQ}{2}-\theta_0}{\text{Res}} \mathcal I_{2j-1}\left[\substack{x\;\;\;\; \theta_t\vspace{0.08cm} \\ \theta_{*}\;\;\;\theta_0};\substack{\nu_+ \vspace{0.15cm} \\  \nu_{2j-1}}\right]  \right\}
	 \\
 = &\; b^{\lb \theta_t-\frac{ib}{2}-\frac{iQ}{2} \rb \lb ib-iQ-2\theta_*+2\theta_t \rb}e^{i\pi\lb -\theta_t+\frac{ib}2+\frac{iQ}2 \rb \lb \theta_0+(4j-3)\theta_*+(\frac72-4j)\theta_t-ib(2j-\frac74)-iQ(\frac34-2j) \rb} 
	 \\
&  \times g_b \lb \frac{iQ}2-2\theta_t \rb g_b \lb \frac{ib}2-\theta_t \rb \tfrac{g_b \lb \frac{iQ}2 \rb}{g_b \lb \theta_t-\frac{ib}2 \rb} \displaystyle \prod_{\epsilon=\pm1} \tfrac{g_b \lb -\theta_0+\epsilon(\frac{iQ}{2}+\theta_*-\theta_t)\rb}{g_b \lb -\theta_0+\epsilon(\theta_*-\frac{ib}2) \rb}.
\end{split} \end{align*}
As shown by the first pair of $g_b$-functions, $\zeta_+$ has a double pole at $\theta_t=\frac{iQ}2+\frac{ib}2$, so after multiplication by $P_1(\nu_+,\theta_t,\theta_*)$, this yields a finite and nonzero result in the limit $\theta_t \to \frac{iQ}{2}+\frac{ib}2$. In fact, a calculation gives
\beq
\lim_{\theta_t \to \frac{iQ}{2}+\frac{ib}2} \zeta_+ P_1(\nu_+,\theta_t,\theta_*)=1,
\eeq
which agrees with the $(2,2)$ entry of the Stokes matrix $S_{2j-1}$ in \eqref{S2jm1}. The other residue reads
\begin{align*}\begin{split}
\zeta_- := &\; 4\pi^2\underset{\nu_{2j-1}=-\frac{\theta_*}2-\theta_t+\frac{iQ}{2}}{\text{Res}} \left\{ \mathcal P'_{2j-1}\left[\substack{\theta_t\vspace{0.08cm} \\ \theta_{*}\;\;\;\theta_0};\substack{\nu_+ \vspace{0.15cm} \\  \nu_{2j-1}}\right] \underset{x=\theta_t-\frac{iQ}{2}-\theta_0}{\text{Res}} \mathcal I_{2j-1}\left[\substack{x\;\;\;\; \theta_t\vspace{0.08cm} \\ \theta_{*}\;\;\;\theta_0};\substack{\nu_+ \vspace{0.15cm} \\  \nu_{2j-1}}\right] \right\}
	\\
= &\; b^{\lb \theta_t+\frac{ib}{2}-\frac{iQ}{2} \rb \lb -ib-iQ+2\theta_*+2\theta_t \rb} 
	\\ 
&\times e^{\left(\frac{1}{8} \pi  \left(i b^2 (8 j-7)-4 b (\theta_0-3 \theta_*+4 \theta_* j+i Q)+(Q+2 i \theta_t) (-4 (\theta_0+5 \theta_*)-22 \theta_t+16 j (\theta_*+\theta_t)+i (11-8 j) Q)\right)\right)}
	\\
& \times g_b \lb \frac{iQ}2-2\theta_t \rb g_b \lb \frac{ib}2-\theta_t \rb \tfrac{g_b \lb \frac{iQ}2 \rb}{g_b \lb \theta_t-\frac{ib}2 \rb} \frac{g_b \lb -\frac{iQ}2-\theta_0-\theta_*+\theta_t \rb}{g_b \lb \frac{iQ}2+\theta_0+\theta_*-\theta_t \rb} \displaystyle \prod_{\epsilon=\pm1} \tfrac{g_b\lb -\frac{iQ}2+\epsilon \theta_0+\theta_*+\theta_t \rb}{g_b \lb -\theta_0+\epsilon(\theta_*-\frac{ib}2) \rb}. \end{split}
\end{align*}
The same $g_b$-factors appear as before; thus $\zeta_-$ also has a double pole at $\theta_t=\frac{iQ}2+\frac{ib}2$ and we compute
\beq
\lim_{\theta_t \to \frac{iQ}{2}+\frac{ib}2} \zeta_- P_1(\nu_+,\theta_t,\theta_*)=-ib^{2ib\theta_*}e^{-4\pi b\theta_*(j-1)} \frac{g_b \lb \frac{ib}2+\theta_0+\theta_* \rb g_b \lb \frac{ib}2-\theta_0+\theta_* \rb}{g_b \lb -\frac{ib}2+\theta_0+\theta_* \rb g_b \lb -\frac{ib}2-\theta_0+\theta_* \rb}.
\eeq
Using the property \eqref{propgb} of the $g_b$-function, we finally find
\beq
\lim_{\theta_t \to \frac{iQ}{2}+\frac{ib}2} \zeta_- P_1(\nu_+,\theta_t,\theta_*)=-\frac{2i\pi e^{-4\pi b \theta_*(j-1)}}{\Gamma \lb \frac12+ib(\theta_0-\theta_*) \rb \Gamma \lb \frac12-ib(\theta_0+\theta_*) \rb},
\eeq
which agrees with the $(2,1)$ entry of the Stokes matrix $S_{2j-1}$. Since we have shown in Proposition \ref{continuationCB} that the BPZ limit of $\mathcal{D}_{2j-1}(t)$ is equal to $\boldsymbol{D}_{2j-1}(t)$, this completes the proof of the second component of \eqref{continuationoddstokes}. \\

\textbf{The case $\nu_{2j} \to \nu_-$.} For $\nu = \nu_-$, we have
$$\mathcal I_{2j-1}\left[\substack{x\;\;\;\; \theta_t\vspace{0.08cm} \\ \theta_{*}\;\;\;\theta_0};\substack{\nu_-\vspace{0.15cm} \\  \nu_{2j-1}}\right]=e^{i\pi x\lb \nu_{2j-1}+\frac{\theta_*}2+\frac{ib}2-iQ \rb} \tfrac{s_b \lb x+\theta_0-\theta_t\rb}{s_b \lb x+\theta_0+\frac{ib}2+\frac{iQ}2\rb} \tfrac{s_b \lb x-\theta_* \rb}{s_b \lb x+\frac{iQ}2+\nu_{2j-1}-\frac{\theta_*}2-\theta_t \rb}.$$
In the limit $\theta_t \to \frac{iQ}{2}+\frac{ib}2$, the poles of $s_b \lb x+\theta_0-\theta_t\rb$ located at $x_1:=-\theta_0-\frac{iQ}2+\theta_t$ and $x_2:=-\theta_0-\frac{iQ}2+\theta_t-ib$ cross the contour of integration and collide with the poles of $s_b\big( x+\frac{ib}2+\frac{iQ}2+\theta_0\big)$ at $x=+\frac{ib}2-\theta_0$ and $x=-\frac{ib}2-\theta_0$, respectively. We deform the contour $\mathsf{S}$ and pick up residues contributions from $x=x_1$ and $x=x_2$. We find the following analog of (\ref{XX}):
\begin{align*}
 \begin{split}
& \lim\limits_{\substack{\theta_t \to \frac{i Q}{2}+\frac{ib}{2} \\ \nu_{2j} \to \nu_-}} \displaystyle \int_{-\infty}^{+\infty} d\nu_{2j-1} ~ \mathcal{S}_{2j-1}\left[\substack{\theta_t\vspace{0.08cm} \\ \theta_{*}\;\;\;\theta_0};\substack{\nu_{2j}\vspace{0.15cm} \\  \nu_{2j-1}}\right] \mathcal {D}_{2j-1}\lb\substack{\theta_t\\ \theta_*};\nu_{2j-1};\theta_0;t\rb=-4\pi^2 \lim_{\theta_t \to \frac{iQ}{2}+\frac{ib}{2}}\mathcal P_1(\nu_-,\theta_t,\theta_*) 	 \\
&\times \displaystyle \sum_{\epsilon=\pm1}\epsilon\underset{\nu_{2j-1}=-\frac{\theta_*}2+\epsilon(\theta_t-\frac{iQ}{2})}{\text{Res}} \left\{ \mathcal P'_{2j-1}\left[\substack{\theta_t\vspace{0.08cm} \\ \theta_{*}\;\;\;\theta_0};\substack{\nu_- \vspace{0.15cm} \\  \nu_{2j-1}}\right] \displaystyle \sum_{i=1,2} \underset{x=x_i}{\text{Res}}\lb \mathcal I_{2j-1}\left[\substack{x\;\;\;\; \theta_t\vspace{0.08cm} \\ \theta_{*}\;\;\;\theta_0};\substack{\nu_- \vspace{0.15cm} \\  \nu_{2j-1}}\right] \rb\mathcal {D}_{2j-1}\lb\substack{\theta_t\\ \theta_*};\nu_{2j-1};\theta_0;t\rb \right\}.
\end{split}
\end{align*}
Letting $X_j$, $j = 1,2,$ denote the two residue contributions
$$X_j = -2i\pi \underset{x=x_j}{\text{Res}} \mathcal I_{2j-1}\left[\substack{x\;\;\;\; \theta_t\vspace{0.08cm} \\ \theta_{*}\;\;\;\theta_0};\substack{\nu_- \vspace{0.15cm} \\  \nu_{2j-1}}\right], \qquad j = 1,2,$$
we obtain
\begin{equation*} \begin{split}
& X_1 = e^{i\pi \lb \frac{ib}{2}-iQ+\nu_{2j-1}+\frac{\theta_*}2\rb \lb \theta_t-\theta_0-\frac{iQ}2\rb} \tfrac{\operatorname{sech}{\lb \pi b \theta_t \rb}}{2} \tfrac{s_b \lb \theta_t-\frac{iQ}{2}-\theta_0-\theta_* \rb}{s_b \lb \theta_t-\frac{ib}2 \rb s_b \lb \nu_{2j-1}-\theta_0-\frac{\theta_*}2 \rb}, 
	\\
& X_2  = e^{i\pi \lb \frac{ib}{2}-iQ+\nu_{2j-1}+\frac{\theta_*}2\rb \lb \theta_t-\theta_0-ib-\frac{iQ}2\rb} \tfrac{\operatorname{sech}{\lb \pi b (-\frac{ib}2-\frac{iQ}2)\rb}}2 \tfrac{s_b \lb \theta_t-ib-\frac{iQ}{2}-\theta_0-\theta_* \rb}{s_b \lb \theta_t-\frac{ib}2 \rb s_b \lb \nu_{2j-1}-\theta_0-\frac{\theta_*}2-ib \rb}.
\end{split} \end{equation*}
Defining $R_{j,+}$ and $R_{j,-}$ for $j = 1,2$ by
\begin{align*}
R_{j,\pm} = & \mp 2i\pi  \lim_{\theta_t \to \frac{iQ}{2}+\frac{ib}{2}}P_1(\nu_-,\theta_t,\theta_*)\underset{\nu_{2j-1}=-\frac{\theta_*}2+\theta_t \mp \frac{iQ}2}{\text{Res}} \left\{\mathcal P'_{2j-1} \left[\substack{\theta_t\vspace{0.08cm} \\ \theta_{*}\;\;\;\theta_0};\substack{\nu_- \vspace{0.15cm} \\  \nu_{2j-1}}\right]X_j\right\},
\end{align*}
long computations yield
\begin{equation*}\begin{split}
R_{1,+} = & - R_{2,+} = e^{i\pi b^2}e^{\pi b (\theta_0+(4j-3)\theta_*)}\frac{\operatorname{csc}{(\pi b^2)}}{2}\frac{\Gamma \lb \frac12+ib(\theta_0-\theta_*)\rb}{\Gamma \lb \frac12+ib(\theta_0+\theta_*)\rb},
	\\
R_{1,-} = & -i e^{i\pi b^2}e^{\pi b(\theta_0+\theta_*)}\operatorname{csc}{(\pi b^2)}\operatorname{cosh}{\lb \pi b(\theta_0+\theta_*)\rb},
	\\
R_{2,-} = &\; i e^{\pi b(\theta_0+\theta_*)}\operatorname{csc}{(\pi b^2)} \operatorname{cosh}{\lb \pi b(\theta_0+\theta_*+ib)\rb}.
\end{split}\end{equation*}
Since $R_{1,-}+R_{2,-}=1$ and $R_{1,+}+R_{2,+}=0$, we recover the first row of the Stokes matrix $S_{2j-1}$. The proof is complete.
\end{proof}

The next proposition provides the BPZ limit of the right-hand side of \eqref{stokestransform} in any even Stokes sector. The proof involves a different set of colliding poles, but is otherwise similar to the proof of Proposition \ref{BPZodd}; it will therefore be omitted.

\begin{proposition}[BPZ limit of $\mathcal{S}_n$ for $n$ even]\label{BPZeven}
For any integer $j \geq 1$ and any $t \in \Omega_{2j}$, it holds that
\beq \label{continuationstokespair}
\lim_{\theta_t \to \frac{iQ}{2}+\frac{ib}{2}} \begin{pmatrix} \lim\limits_{\substack{\nu_{2j+1} \to \nu_-}}\displaystyle \int_{-\infty}^{+\infty} d\nu_{2j} ~ \mathcal{S}_{2j}\left[\substack{\theta_t\vspace{0.08cm} \\ \theta_{*}\;\;\;\theta_0};\substack{\nu_{2j+1}\vspace{0.15cm} \\  \nu_{2j}}\right] {\mathcal D}_{2j}\lb\substack{\theta_t\\ \theta_*};\nu_{2j};\theta_0;t\rb  \\
\lim\limits_{\substack{\nu_{2j+1} \to \nu_+}} \displaystyle \int_{-\infty}^{+\infty} d\nu_{2j} ~ \mathcal{S}_{2j}\left[\substack{\theta_t\vspace{0.08cm} \\ \theta_{*}\;\;\;\theta_0};\substack{\nu_{2j+1}\vspace{0.15cm} \\  \nu_{2j}}\right] {\mathcal D}_{2j}\lb\substack{\theta_t\\ \theta_*};\nu_{2j};\theta_0;t\rb\end{pmatrix}=S_{2j}\boldsymbol{D}_{2j}(t),
\eeq
where $S_{2j}$ are the even Stokes matrices of the confluent BPZ equation given by (see \eqref{otherstokes} and \eqref{stokes})
\beq \label{s2j}
S_{2j}= e^{2\pi b (j-1) \theta_* \sigma_3}~S_2~e^{-2\pi b (j-1) \theta_* \sigma_3}= \begin{pmatrix} 1 & -\frac{2i\pi e^{2\pi b(2j-1)\theta_*}}{\Gamma\lb \frac12-ib(\theta_0-\theta_*) \rb  \Gamma \lb \frac12+ib(\theta_0+\theta_*)\rb} \\ 0 & 1 \end{pmatrix},~j=1,2,\dots,
\eeq
and $\boldsymbol{D}_{2j}(t)$ are the degenerate confluent conformal blocks of the second kind in the even Stokes sectors defined in \eqref{solutionsinf}.
\end{proposition}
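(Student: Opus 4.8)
The plan is to establish Proposition~\ref{BPZeven} by repeating, mutatis mutandis, the argument of Proposition~\ref{BPZodd}, with the integration variable $\nu_{2j-1}$ replaced by $\nu_{2j}$ and the external index $\nu_{2j}$ replaced by $\nu_{2j+1}$. First I would take the Stokes kernel $\mathcal{S}_{2j}$ of \eqref{stokessn} and split its prefactor \eqref{matp} (for $n=2j$) as
$$\mathcal P_{2j}\left[\substack{\theta_t\vspace{0.08cm} \\ \theta_{*}\;\;\;\theta_0};\substack{\nu_{2j+1}\vspace{0.15cm} \\  \nu_{2j}}\right] = P_1(\nu_{2j+1},\theta_t,\theta_*)\,\mathcal P'_{2j}\left[\substack{\theta_t\vspace{0.08cm} \\ \theta_{*}\;\;\;\theta_0};\substack{\nu_{2j+1}\vspace{0.15cm} \\  \nu_{2j}}\right],$$
in exact analogy with \eqref{pp2jm1}, where $P_1$ is given in \eqref{P1} and, by \eqref{polegb} and \eqref{pref}, has a double zero at $\theta_t=\tfrac{iQ}2+\tfrac{ib}2$, while $\mathcal P'_{2j}$ retains in particular the numerator product $\prod_{\epsilon=\pm1} g_b\!\lb -\theta_t+\epsilon(\nu_{2j}+\tfrac{\theta_*}{2})\rb$.

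Next I would isolate the two pinching mechanisms that occur in the BPZ limit. The factor $\prod_{\epsilon} g_b\!\lb -\theta_t+\epsilon(\nu_{2j}+\tfrac{\theta_*}{2})\rb$ has poles at $\nu_{2j}=-\tfrac{\theta_*}2\pm(\tfrac{iQ}2-\theta_t)$ which cross the contour $\mathbb{R}$ as $\theta_t\to\tfrac{iQ}2+\tfrac{ib}2$; I deform $\mathbb{R}$ into a contour $\mathbb{R}'$ and pick up the two residues, whose locations converge to $\nu_{2j}=\nu_+$ and $\nu_{2j}=\nu_-$, so that by Proposition~\ref{continuationCB} they reproduce the second and first components of $\boldsymbol{D}_{2j}(t)$, respectively. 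In the integrand $\mathcal I_{2j}$ of \eqref{mati}, after substituting $\nu_{2j+1}=\nu_\pm$, a pole of $s_b\!\lb x+\theta_0-\theta_t\rb$ crosses the $x$-contour $\mathsf S$ and collides with a zero of the denominator of $\mathcal I_{2j}$: for $\nu_{2j+1}=\nu_-$ exactly one such pole pinches, whereas for $\nu_{2j+1}=\nu_+$ two poles $x_1,x_2$ pinch — the two subcases of Proposition~\ref{BPZodd} with the roles of $\nu_+$ and $\nu_-$ interchanged. I deform $\mathsf S$ accordingly and evaluate the residues with the help of \eqref{differencesb}.

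Since $P_1(\nu_{2j+1},\theta_t,\theta_*)$ vanishes doubly at $\theta_t=\tfrac{iQ}2+\tfrac{ib}2$, the only contributions that survive the limit are the double‑residue terms (one residue in $\nu_{2j}$, one in $x$); all terms that are regular or carry only a simple pole in $\theta_t$ drop out, exactly as in the odd case. Computing these double residues with \eqref{resgb} and simplifying the resulting phases and products of $g_b$-functions through the functional equation \eqref{propgb}, I expect to obtain: for $\nu_{2j+1}\to\nu_+$ the second row $(0,1)$ of $S_{2j}$ (with a cancellation of the two $\nu_{2j}$-residue contributions analogous to $R_{1,+}+R_{2,+}=0$ in Proposition~\ref{BPZodd}), and for $\nu_{2j+1}\to\nu_-$ the first row $\big(1,\,-\tfrac{2i\pi e^{2\pi b(2j-1)\theta_*}}{\Gamma(\tfrac12-ib(\theta_0-\theta_*))\Gamma(\tfrac12+ib(\theta_0+\theta_*))}\big)$, in agreement with \eqref{s2j}. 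Together with the BPZ limit $\mathcal{D}_{2j}(t)\to\boldsymbol{D}_{2j}(t)$ of Proposition~\ref{continuationCB}, this yields \eqref{continuationstokespair}.

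The main obstacle is the bookkeeping rather than anything conceptual: one must re-derive from the explicit $s_b$-arguments which poles pinch $\mathbb{R}$ and $\mathsf S$ when $\nu_{2j+1}=\nu_\pm$ (the $\nu_+$/$\nu_-$ subcases are swapped relative to Proposition~\ref{BPZodd}, so the pole structure cannot simply be copied), and then carry through the long residue computations and verify that the accumulated $g_b$-factors collapse, via \eqref{propgb}, to the Gamma-function entries of \eqref{s2j}. These steps are routine but lengthy, which is why the paper omits them.
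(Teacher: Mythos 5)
Your proposal is correct and is exactly the argument the paper has in mind: the paper omits this proof, noting only that it ``involves a different set of colliding poles, but is otherwise similar to the proof of Proposition \ref{BPZodd}'', and you have correctly identified those differences (in particular that the one-pole and two-pole pinching subcases for the $x$-contour are now attached to $\nu_{2j+1}=\nu_-$ and $\nu_{2j+1}=\nu_+$ respectively, i.e.\ swapped relative to the odd case, which is what makes $S_{2j}$ upper rather than lower triangular). One small slip of wording: the cancellation producing the vanishing $(2,1)$ entry is between the two $x$-residue contributions evaluated at the single $\nu_{2j}$-residue tending to $\nu_-$ (the analogue of $R_{1,+}+R_{2,+}=0$), not between the two $\nu_{2j}$-residues themselves.
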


\subsection{Summary}
The results of this section can be summarized as follows.

\begin{corollary}
In the BPZ limit, the connection formula \eqref{dk} of Theorem \ref{mainth1} and the Stokes transformation \eqref{stokestransform} of Theorem \ref{mainth2} reduce to the connection formula \eqref{conni} and the Stokes formula \eqref{stokesrelation} for the confluent BPZ equation, respectively. Schematically, this can be expressed as
\begin{center}
\begin{tikzcd}[row sep=0.5cm, column sep = 2.5cm]
\mathcal{D}_n(t)= \mathcal{C}_n \mathcal{B}(t)  \ar[d,"\text{BPZ limit}"] & \mathcal{D}_{n+1}(t)= \mathcal{S}_n \mathcal{D}_n(t) \ar[d,"\text{BPZ limit}"] \\ 
\boldsymbol{D}_n(t)=C_n \boldsymbol{B}(t) & \boldsymbol{D}_{n+1}(t)=S_n \boldsymbol{D}_n(t)
\end{tikzcd}
\end{center}
\end{corollary}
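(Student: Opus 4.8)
The plan is to obtain the Corollary as a direct synthesis of the four Propositions already proved in this section --- Propositions \ref{BPZlimitCnprop}, \ref{continuationCB}, \ref{BPZodd}, and \ref{BPZeven} --- together with the two main theorems, Theorem \ref{mainth1} and Theorem \ref{mainth2}. The conceptual point to isolate first is what ``the BPZ limit'' means operationally: it sends an (infinite-dimensional) object depending on the continuous label $\nu$ and on $\theta_t$ to the two-component column vector obtained by evaluating at $\nu = \nu_\pm = -\tfrac{\theta_*}{2}\pm\tfrac{ib}{2}$ and then letting $\theta_t \to \tfrac{iQ}{2}+\tfrac{ib}{2}$ (and, for the blocks of the first kind, by simultaneously sending $\sigma_s$ to the two values appearing in \eqref{limitB}). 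Applying this operation to each side of the identities \eqref{dk} and \eqref{stokestransform} and invoking the propositions is all that is needed.

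For the connection formula I would start from \eqref{dk}, which holds for generic parameters and all $t \in \Omega_n$. Applying the BPZ limit to the left-hand side and using Proposition \ref{continuationCB} produces the column vector $\boldsymbol{D}_n(t)$ of solutions of the confluent BPZ equation given by \eqref{solutionsinf}. Applying the BPZ limit to the right-hand side and using Proposition \ref{BPZlimitCnprop} produces $C_n\boldsymbol{B}(t)$, with $C_n$ the matrix \eqref{knm} and $\boldsymbol{B}(t)$ given by \eqref{whittaker1}. Equating the two limits yields $\boldsymbol{D}_n(t) = C_n\boldsymbol{B}(t)$ for $t \in \Omega_n$ and $n = 1,2,\dots$, which is exactly \eqref{conni}; this is the left half of the diagram. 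The argument for the Stokes transformations is identical in structure: starting from \eqref{stokestransform}, Proposition \ref{continuationCB} (with $n$ replaced by $n+1$) turns the left-hand side into $\boldsymbol{D}_{n+1}(t)$, while Proposition \ref{BPZodd} for $n = 2j-1$ odd and Proposition \ref{BPZeven} for $n = 2j$ even turn the right-hand side into $S_n\boldsymbol{D}_n(t)$, with $S_n$ given by \eqref{S2jm1} and \eqref{s2j}. Equating gives $\boldsymbol{D}_{n+1}(t) = S_n\boldsymbol{D}_n(t)$ on $\Omega_n \cap \Omega_{n+1}$, which is \eqref{stokesrelation}; this is the right half of the diagram, so the diagram commutes and the Corollary follows.

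Since all the analytic labour has already been carried out inside the four propositions --- in particular the pinching-of-contours analysis in the proofs of Propositions \ref{BPZlimitCnprop} and \ref{BPZodd} and the identification of the surviving double-pole residues --- the only thing that still requires attention when assembling the Corollary is bookkeeping: one must check that the scalar limits $\nu \to \nu_-$ and $\nu \to \nu_+$ are matched to the first and second rows of the relevant matrices, respectively. This matching is precisely what is recorded by the identities $\mathcal R_\pm = K^{(n)}_{-+}(\theta_*,\mp\theta_0)$ and $R_\pm = K^{(n)}_{++}(\theta_*,\mp\theta_0)$ (with $K^{(n)}_{\pm+}$ as in \eqref{kelement}) in the proof of Proposition \ref{BPZlimitCnprop}, and by the analogous evaluations of $\zeta_\pm$ and $R_{j,\pm}$ against the entries of $S_n$ in the proof of Proposition \ref{BPZodd}. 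I would therefore not expect a genuine obstacle in the Corollary itself; the one caveat worth stating explicitly is that the entire chain is conditional on Assumption \ref{assumption1} and on the assumed BPZ degeneration \eqref{limitB} of $\mathcal{B}(t)$ (equivalently \eqref{calFBPZlimit}), which are only verified numerically --- no additional analytic input beyond the four propositions is used.
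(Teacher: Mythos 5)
Your proposal is correct and follows the same route as the paper, whose entire proof of this corollary is simply a citation of Propositions \ref{BPZlimitCnprop}--\ref{BPZeven}; you have merely spelled out the assembly (apply the BPZ limit to both sides of \eqref{dk} and \eqref{stokestransform} and invoke the four propositions), including the correct row-matching of $\nu\to\nu_\mp$ and the dependence on Assumption \ref{assumption1} and \eqref{limitB}. No further comment is needed.
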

\begin{proof} 
See Propositions \ref{BPZlimitCnprop}-\ref{BPZeven}.
%The BPZ limits $\mathcal{C}_n \mathcal{B}(t) \to C_n \boldsymbol{B}(t)$ and $\mathcal{D}_n(t) \to \boldsymbol{D}_n(t)$ were verified in Proposition \ref{BPZlimitCnprop} and Proposition \ref{continuationCB}, respectively. The BPZ limit $\mathcal{S}_n \mathcal{D}_n(t) \to S_n \boldsymbol{D}_n(t)$ was verified in Proposition \ref{BPZodd} for odd values of $n$ and in Proposition \ref{BPZeven} for even values of $n$. 
\end{proof}

\section{Conclusions and perspectives} \label{section8}
In this article we have constructed the confluent conformal blocks of the second kind. We also constructed the Stokes transformations which map such blocks in one Stokes sector to another. Both the confluent conformal blocks and the Stokes transformations were found by taking suitable confluent limits of the crossing transformations of the four-point Virasoro conformal blocks. We explicitly verified that in the BPZ limit the constructed blocks and the associated Stokes transformations reduce to solutions of the confluent BPZ equation and its Stokes matrices, respectively. 

An interesting problem is to combine the holomorphic and anti-holomorphic confluent conformal blocks with an integration measure to construct confluent Liouville correlation functions. Such a construction should be possible by applying the confluence procedure that we have described to the Liouville correlations functions built from the $s$-, $t$-, and $u$-channel conformal blocks. The result is expected to be invariant under the generalized fusion and Stokes transformations that we have constructed in this article. Hence this generalized crossing symmetry would be translated into orthogonality relations satisfied by the confluent fusion and Stokes kernels. Moreover, there exist close connections between quantum Teichmuller theory and Liouville theory \cite{TN,T14}. The collision of holes in Teichmuller theory was studied in \cite{CM,CMR}. It would be interesting to understand the connections between the above two subjects after taking the confluent limit.

%The confluent fusion and Stokes introduced in this article can be seen from a different perspective. We plan to describe their eigenfunction properties in a subsequent publication. 

Finally, the approach that we have developed can in principle be adopted to construct also confluent conformal blocks with irregular singularities of rank $r>1$.

\appendix
\section{Two special functions}\label{appendix} 
Equation (\ref{fusion01}) expresses the Virasoro fusion kernel in terms of two special functions $s_b(z)$ and $g_b(z)$. These functions are defined by
\begin{equation}\label{defsb}
s_b(z)=\operatorname{exp}{\left[  i \int_0^\infty \frac{dy}{y} \left(\frac{\operatorname{sin}{2yz}}{2\operatorname{sinh}{b^{-1}y}\operatorname{sinh}{by}}-\frac{z}{y}\right)\right]}, \qquad |\im z|<\frac{Q}{2},
\end{equation}
and
\beq \label{gb}
g_b(z)=\operatorname{exp}{\left\{ \int_0^\infty \frac{dt}{t}\left[\frac{e^{2i z t}-1}{4 \operatorname{sinh}{b t} \operatorname{sinh}{b^{-1} t}}+\frac{1}{4}z^2 \lb e^{-2bt}+e^{-\frac{2t}b}\rb-\frac{iz}{2t} \right]\right\}}.
\eeq
They are related to the functions $G$ and  $E$ defined in \cite[Eq. (A.3)]{R1999} and \cite[Eq. (A.43)]{R1999} as follows:
\begin{align}\label{sbgbGE}
s_b(z) = G(b, b^{-1}; z), \qquad g_b(z) = \frac{1}{E(b, b^{-1}; -z)}.
\end{align}
Both $s_b$ and  $g_b$ are obviously invariant under the exchange of $b$ and $b^{-1}$. In what follows, we list some further properties of these functions which follow from (\ref{sbgbGE}) and the results of \cite{R1999}. 

The function $g_b(z)$ satisfies the difference equations
\beq \label{propgb}
\frac{g_b \lb z+\frac{ib}{2}\rb}{g_b\lb z-\frac{ib}{2}\rb}=\frac{b^{-ibz}\sqrt{2\pi}}{\Gamma \lb \frac{1}{2}-ibz \rb}, \qquad
\frac{g_b \lb z+\frac{i}{2b}\rb}{g_b\lb z-\frac{i}{2b}\rb}=\frac{b^{-\frac{iz}{b}}\sqrt{2\pi}}{\Gamma \lb \frac{1}{2}- \frac{iz}{b} \rb}.
\eeq
It has no zeros, but it has simple poles located at 
\beq\label{polegb}
z_{k,l}=-\frac{i Q}{2} -i k b-il b^{-1}, \qquad k,l = 1, 2, \dots.
\eeq
Moreover, 
\beq \label{resgb}
\underset{z=-\frac{iQ}{2}}{\text{Res}}  g_b(z) =\frac{i}{2\pi} g_b\lb \frac{i Q}{2}\rb.
\eeq
As $z \to \infty$ with $\text{Re}(z)\geq0$, $g_b$ satisfies\footnote{Comparing the definition of $g_b(z)$ with \cite[Eq. (A.3)]{CGMP}, we have $g_b(iz - \frac{iQ}2)=\Gamma_b(z)$.} (see \cite[Eq. (A.23)]{CGMP})
\beq \label{asympgammab}
\operatorname{log}{g_b \lb i z-\frac{iQ}{2} \rb} = -\frac{1}{2}z^2 \operatorname{log}{z}+\frac{3}{4}z^2+\frac{Q}{2}z\operatorname{log}{z}-\frac{Q}{2}z-\frac{ Q^2+1}{12}\operatorname{log}{z}+\mathcal{O}(z^0).
\eeq

The function $s_b$ is related to $g_b$ by $s_b(z)=\frac{g_b(z)}{g_b(-z)}$.
It satisfies the difference equations
\begin{equation}\label{differencesb}
\frac{s_b(z+\frac{ib}{2})}{s_b(z-\frac{ib}{2})}=2\operatorname{cosh}{\pi b z}, \qquad \frac{s_b(z+\frac{i}{2b})}{s_b(z-\frac{i}{2b})}=2\operatorname{cosh}{\frac{\pi z}{b}},
\end{equation}
and it extends to a meromorphic function of $z \in \mathbb{C}$ with simple poles and zeros located at
\begin{equation}\label{polesb}
\begin{split}
&z_{k,l}=-\frac{i Q}{2} -i k b-il b^{-1}, \qquad k,l = 1, 2, \dots, \qquad (\text{poles}),\\
&z_{k,l}=\frac{i Q}{2}+i k b +il b^{-1}, \qquad k,l = 1, 2, \dots, \qquad (\text{zeros}).
\end{split}
\end{equation}
The residue at $z=-iQ/2$ is given by
\beq \label{ressb}
\underset{z=-\frac{iQ}{2}}{\text{Res}} s_b(z) =\frac{i}{2\pi},
\eeq
and the following asymptotic formulas hold as $z \to \infty$:
\beq \label{asympsb}
s_b(z) \sim \begin{cases} e^{-\frac{i \pi}{2}\lb z^2+\frac{1}{12}(b^2+b^{-2})\rb}, & |z|\to \infty, ~ |\arg z| < \frac{\pi}{2}, \\ 
e^{\frac{i \pi}{2}\lb z^2+\frac{1}{12}(b^2+b^{-2})\rb}, & |z|\to \infty, ~ |\arg z| > \frac{\pi}{2}. \end{cases}
\eeq

\bigskip
\noindent
{\bf Acknowledgement} {\it J.R. would like to thank Oleg Lisovyy for illuminating discussions at early stages of this project and acknowledges support from the European Research Council, Grant Agreement No. 682537. J.L. acknowledges support from the European Research Council, Grant Agreement No. 682537, the Swedish Research Council, Grant No. 2015-05430, the G\"oran Gustafsson Foundation, and the Ruth and Nils-Erik Stenb\"ack Foundation. }

\end{document}